\documentclass{article}
\usepackage[margin=1in]{geometry}

\usepackage{graphicx} % Required for inserting images
\usepackage{amsfonts}
\usepackage{amsthm}
\usepackage{thm-restate}
\usepackage{amssymb}
\usepackage{amsmath}
\usepackage{dsfont}
\usepackage{tikz}
\usepackage{minted}
\definecolor{citec}{HTML}{324FDA} 
\definecolor{linkc}{HTML}{A0111A}
\definecolor{urlc}{HTML}{b55c87}
\usepackage[
    colorlinks,
    citecolor=citec,
    linkcolor=linkc,
    urlcolor=urlc,
    bookmarks = true,
    breaklinks,
]{hyperref}
\usepackage{cleveref}
\usepackage{cite}
\usepackage{wrapfig}
\usepackage{comment}
\usepackage[super]{nth}

\usepackage{mathtools}

\usepackage{enumitem}
\newlist{todolist}{itemize}{2}
\setlist[todolist]{label=$\square$}

\usepackage{pifont}

\usetikzlibrary{arrows.meta, positioning}

\newtheorem{claim}{Claim}
\newtheorem{remark}{Remark}
\newtheorem{lemma}{Lemma}
\newtheorem{theorem}{Theorem}
\newtheorem{proposition}{Proposition}
\newtheorem{definition}{Definition}

\newtheorem{corollary}{Corollary}
\newtheorem{example}{Example}

\newcommand{\Var}{\mathrm{Var}}
\newcommand{\Cov}{\mathrm{Cov}}

\newcommand{\calX}{\mathcal{X}}
\newcommand{\calY}{\mathcal{Y}}

\newcommand{\calB}{\mathcal{B}}
\newcommand{\calP}{\mathcal{P}}

\newcommand{\bfv}{{\boldsymbol{v}}}
\newcommand{\bfe}{{\boldsymbol{e}}}

\newcommand{\CCP}{\text{CCP}}
\newcommand{\Gumbel}{\mathrm{Gumbel}}

\newcommand*\samethanks[1][\value{footnote}]{\footnotemark[#1]}

\newcommand\blfootnote[1]{%
  \begingroup
  \renewcommand\thefootnote{}\footnote{#1}%
  \addtocounter{footnote}{-1}%
  \endgroup
}

\title{Expected Recovery Time in DNA-based Distributed Storage Systems}

\author{Adi Levy\thanks{Department of Computer Science, %\\
    Technion---Israel Institute of Technology, Haifa 3200003, Israel, emails: \{levyadi, roni.con, yaakobi\}@cs.technion.ac.il} \and Roni Con\samethanks \and Eitan Yaakobi\samethanks \and Han Mao Kiah\thanks{School of Physical and Mathematical Sciences, Nanyang Technological University, Singapore, email: hmkiah@ntu.edu.sg}
    }
\date{}

\begin{document}

\maketitle
\begin{abstract}
    We initiate the study of \emph{DNA-based distributed storage systems}, where information is encoded across multiple DNA data storage containers to achieve robustness against container failures.
    In this setting, data are distributed over $M$ containers, and the objective is to guarantee that the contents of any failed container can be reliably reconstructed from the surviving ones.
    Unlike classical distributed storage systems, DNA data storage containers are fundamentally constrained by sequencing technology, since each read operation yields the content of  a uniformly random sampled strand from the container.
    Within this framework, we consider several erasure-correcting codes and analyze the expected recovery time of the data stored in a failed container.
    Our results are obtained by analyzing generalized versions of the classical Coupon Collector's Problem, which may be of independent interest.\blfootnote{A.L., R.C., and E.Y.  are supported by the European Union (DiDAX, 101115134). Views and opinions expressed are those of the
    authors only and do not necessarily reflect those of the European Union or the European Research Council
    Executive Agency. Neither the European Union nor the granting authority can be held responsible for them.}
\end{abstract}

\tableofcontents
\newpage

\section{Introduction}
Recent advances in DNA-based data storage technologies have positioned DNA as a compelling substrate for long-term digital archiving, offering density, durability, and longevity far beyond the capabilities of conventional media.
As global data generation accelerates at an unprecedented pace, existing storage technologies such as magnetic disks, solid-state drives, and tapes struggle to meet the demands of scalability, reliability, and archival lifetime. 
DNA, by contrast, provides an ultra-dense and chemically stable medium whose information can persist for thousands of years under modest conditions.
%and whose synthesis and sequencing technologies continue to improve rapidly. 
These features position DNA-based storage as a promising solution for the growing archival storage crisis \cite{church2012next,goldman2013towards,yazdi2016dna,organick2018random}.

A conventional DNA-based data storage system is composed of three main elements: DNA synthesis, storage containers, and DNA sequencing. First, artificial DNA strands, called oligos, are produced to encode the user’s data.
These strands are then stored in an unordered manner within a storage container, with each strand present in millions of copies.
Finally, DNA sequencing is used to read the stored strands and transform them into digital sequences, known as reads, which are then decoded to reconstruct the original user information. In each of these three components, errors might occur, and thus, much attention was given to designing error-correcting codes for DNA-based storage systems. We refer the reader to the following survey \cite{sabary2024survey} that presents the coding challenges in such systems.

In this work, we initiate the study of \emph{coding for DNA-based distributed storage systems}. We consider a setting with $M$ DNA storage containers, and our objective is to encode data across these $M$ containers so that, even if one or more containers fail, the lost information can still be reconstructed from the remaining containers.
The formulation of this problem matches that of standard distributed storage systems.  
However, there are a few important differences between our notion of repair and the one adopted in the classical storage setting.

In a classical distributed storage system, a file is divided into $k$ data fragments and then encoded using an erasure-correcting code, producing $M>k$ fragments in total, with each fragment stored on a different one of the $M$ nodes.
Since an erasure code is used, repairing a failed node reduces to decoding a single erasure of a codeword.
A central performance metric in traditional distributed storage systems is the \emph{repair bandwidth}, which denotes the amount of information downloaded from the remaining nodes to recover the failed one. 
The problem of minimizing the repair bandwidth was introduced in the seminal work of Dimakis et al. \cite{dimakis2010network} and has since attracted extensive attention, leading to ingenious constructions that achieve optimal tradeoffs (for a detailed overview, the reader is referred to the survey \cite{ramkumar2022codes}). 
We emphasize that in this model, nodes are allowed to perform arbitrary computations on their stored data and transmit the resulting outputs.

In DNA-based distributed storage, the process of downloading information from containers is fundamentally different. First, no computation can be performed within a container. Second, reading is carried out by a sequencer, which we assume samples a strand uniformly at random from the container in each read. 
Because each of the $n$ strands is present in the tube in about a million identical copies, all in equal numbers, drawing a strand uniformly at random from the container is effectively the same as choosing one of the $n$ distinct strands uniformly at random. 
%{\color{red} We stress that the reading process is carried out with replacement, meaning that strands remain in the container after being read, and hence each read samples a strand uniformly at random. 
%This model is referred to as \emph{the random access model}}.

It was observed in \cite{bar2024cover}, that this sampling model is deeply connected with the classical coupon collector's problem (CCP, for short). In the CCP, there are $n$ distinct coupons, and at each point of time, the collector picks a coupon uniformly at random from the $n$ coupons, and then returns it. 
The task of the collector is to see all the $n$ distinct coupons, and the question is how many rounds it takes.
This problem has been extensively investigated in combinatorics and probability theory; see, for instance, \cite{erdHos1961classical,boneh1997coupon,adler2001coupon,kobza2007survey,alistarh2021collecting}. However, the classical problem received many new twists with the new connection to DNA-based storage systems \cite{bar2024cover,abraham2024covering,gruica2024combinatorial,gruica2024geometry,boruchovsky2025making}. 

In this paper, we introduce a DNA-based distributed storage system model. We establish a connection between this model and a generalized coupon collector's problem involving multiple independent collectors, each with its own set of coupons to be collected. To the best of our knowledge, this generalized setting has not been studied previously and may be of independent interest.

\subsection{Distributed DNA Storage and Coupon Collector's Problem}
For a positive integer $n$, let $[n] \triangleq \{1,\ldots,n\}$.
We consider a distributed storage system with $M$ containers, each storing $n$ distinct strands.
Each container is equipped with its own sequencer, and reads are performed in parallel:
at each time step, one strand is sampled uniformly at random from each functioning container.
When one container fails, our goal is to reconstruct its data from the remaining $M-1$ containers,
and we study the expected time required for this recovery.
% The central question we study is: \emph{How long does it take to recover this data?}

There could be several reasons for a container to become unavailable. 
These include physical degradation or loss of DNA molecules due to chemical decay or handling errors~\cite{church2012next,goldman2013towards,yazdi2016dna}, as well as failures in the sequencing or library preparation pipeline that render the contents of a container unreadable despite the DNA remaining intact~\cite{organick2018random,sabary2024survey}. At scale, additional failure modes include contamination or mislabeling events that prevent reliable identification of a container’s data~\cite{yazdi2016dna,organick2018random}. From a coding-theoretic perspective, such events are naturally modeled as container-level erasures, in direct analogy to node failures in classical distributed storage systems~\cite{dimakis2010network,ramkumar2022codes}.
\begin{remark}
    We emphasize that, in practice, each strand is stored in millions of identical copies.
    For simplicity, and as an approximation of this setting, we assume that at any time, each container contains exactly $n$ distinct strands, from which one strand is sampled uniformly at random, read, and returned. 
\end{remark}
\begin{figure}[h]
    \centering
    \includegraphics[width=0.6\textwidth]{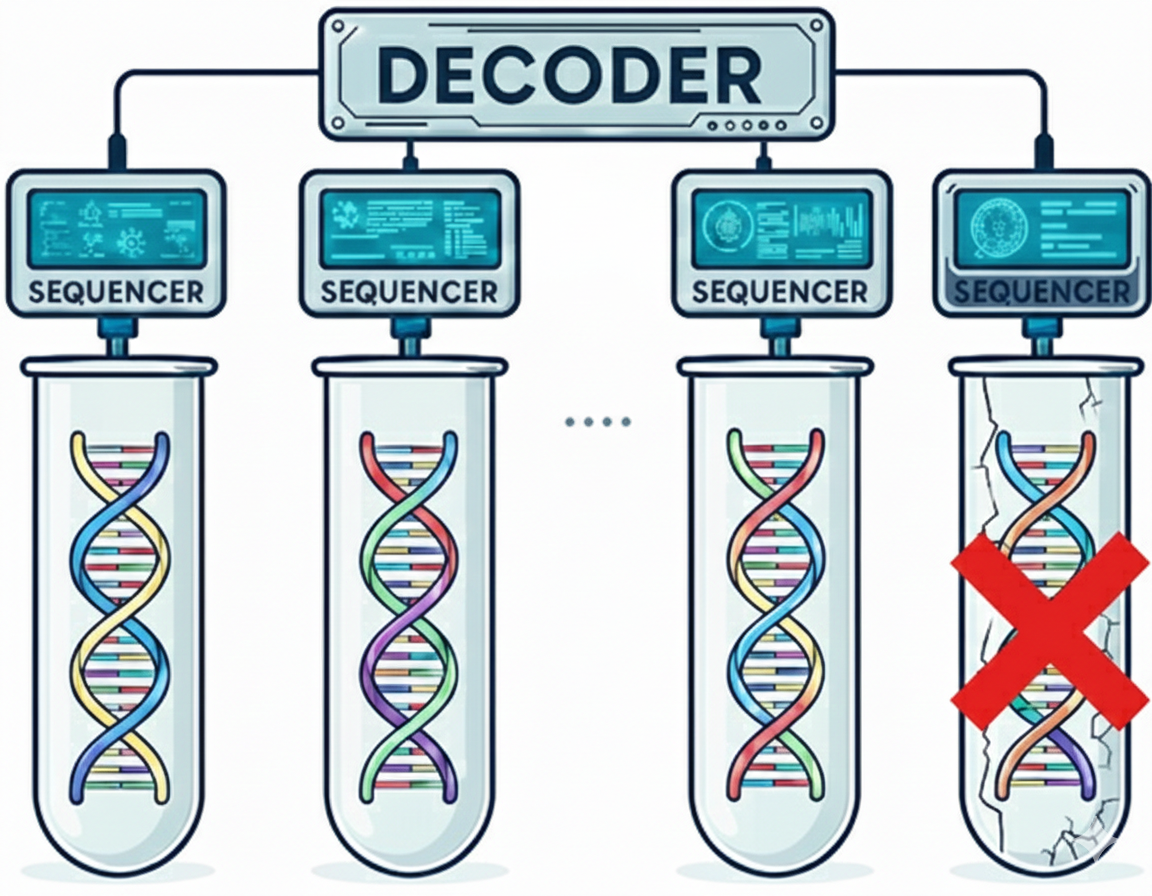} 
    \caption{An illustration of a DNA-DSS} 
    \label{fig:DNA-DSS}
\end{figure}
We begin by formally defining a DNA-based distributed storage system, and then introduce the main parameter of interest, namely the expected recovery time of a container.

\begin{definition}[$(n,M,k,|\Sigma|)$ DNA-based distributed storage system (DNA-DSS)] \label{def:DNA-DSS}
    %Suppose there are $M$ storage containers, each capable of storing $n$ distinct strands. Let $\Sigma$ be an alphabet and $k$ an integer.  
    Let $n, M$, and $k$ be positive integers and let $\Sigma$ be an alphabet.
    An $(n, M, k, |\Sigma|)$ \emph{DNA-based distributed storage system (DNA-DSS)} is defined by an injective encoder $\mathcal{E}: \Sigma^k \to \Sigma^{n\times M}$ such that, for every $\bfv \in \Sigma^k$ and every $j \in [M]$, the following condition is satisfied: The $j$-th column of $S \triangleq \mathcal{E}(\bfv)$ can be reconstructed from the collection of columns of $S$ indexed by $[M]\setminus\{j\}$.
\end{definition}
Note that this definition is purely mathematical and does not yet refer to DNA-based storage systems. 
The relation to DNA-based storage is as follows. Suppose $\bfv \in \Sigma^k$ represents the data we wish to store and set $S = \mathcal{E}(v)$. 
For each $j \in [M]$, the $j$-th container stores the $n$ entries of the $j$-th column of $S$. Thus, every container holds $n$ values, corresponding to the $n$ strands within that container.
The requirement in \Cref{def:DNA-DSS} can be rephrased as follows: the data in the failed $j$-th container can be reconstructed by reading all the data from the remaining $M-1$ containers.
\begin{definition}[The expected recovery time of a container]
    For every $j \in [M]$, we define  $T_j$ as the random variable representing the time required to recover the $j$-th container under this model. The \emph{expected recovery time of the $j$-th container} will be denoted by $\Bbb{E}[T_j]$.
\end{definition}

In this paper, we design DNA-DSS using \emph{MDS (maximum distance separable) codes} and analyze the expected recovery time of a failed container.
An MDS code of length $M$ and redundancy $r$ is an injective encoder
$\mathcal{E} : \Sigma^{M-r} \to \Sigma^M$
such that any $\mathbf v\in \Sigma^{M-r}$ can be uniquely recovered from any $M-r$ entries of the \emph{codeword} $\mathcal{E}(\mathbf v)$.
Before presenting our main results, we briefly revisit the classical coupon collector problem.
We shall see that several of our results can be viewed as generalizations of this problem.

{\bf{The Coupon Collector's Problem.}}
We give a formal definition of the classical coupon collector distribution, as well as of the distribution corresponding to the version with $\ell$ copies. 

\begin{definition}[Coupon Collector's distribution]
    \label{def:ccp-dist}
        Consider $n$ distinct coupon types, and draw one coupon at each trial independently and uniformly from these $n$ types. Let $X^{(n)}$ be the number of draws needed to obtain at least one of each type, and let $X^{(n,\ell)}$ be the number of draws needed to obtain at least $\ell$ of each type. We write $X^{(n)} \sim \text{CCP}(n)$ and $X^{(n,\ell)} \sim \text{CCP}(n,\ell)$.
    \end{definition}
It is a well-known fact that $\Bbb{E}[X^{(n)}] = n (\ln n + \gamma) + O(1)$, 
where $\gamma=0.577\ldots$ is the  Euler–Mascheroni constant.
(e.g., \cite[Section 2.4.1]{bookProbabilityAndComputing}). In \cite[Theorem 2]{newman1960double}, it was shown that when $\ell$ is fixed, we have $\mathbb{E}\left[X^{(n,\ell)}\right] = n\ln n + (\ell-1)n\ln \ln n +n\cdot C_\ell + o(n)$,
    and later, in \cite{erdHos1961classical}, it was shown that $C_{\ell} = \gamma - \ln ((\ell - 1)!)$.
\section{Our Results}
MDS codes have the best known rate-erasure correcting tradeoff and are therefore widely studied in classical distributed storage systems.
We likewise study the performance of MDS codes in the DNA-DSS setting.
Our results follow the classical split: first, we consider \emph{scalar MDS codes}, and then \emph{array regenerating MDS codes}.
We note, however, that MDS and regenerating codes may not be optimal for minimizing the expected container recovery time, and we leave this question for future work.

Throughout this paper, we work under the assumption that the number of containers, denoted by $M$, is fixed, while the number of distinct strands $n$ tends to infinity. In addition, whenever we use the notation $o(1)$, it represents a term that goes to $0$ as $n$ goes to infinity. Finally, for numbers $a, b$ and $c$, we will denote by $a \pm b$ the interval $[a-b, a+b]$ and $c(a\pm b) = ca \pm cb$.
\subsection{The \emph{Scalar} MDS Code Case}
We design an $(n, M, n(M-r), q)$ DNA-DSS based on an MDS code of length $M$ and redundancy $r$ as follows. 
Let $S' \in \Sigma^{n \times (M-r)}$ denote the information to be stored (in the notation of \Cref{def:DNA-DSS}, this corresponds to $|\Sigma| = q$ and $k = n(M-r)$).
Each row vector of $S'$ is encoded using the encoder of the MDS code into a row vector $\bfv$ of length $M$. 
Let $S \in \Sigma^{n \times M}$ denote the matrix whose rows are these encoded vectors, and store each column of $S$ in a distinct container.
Once a container fails, the MDS property guarantees that to recover its content, it is enough to know any $M - r$ symbols from each row of $S$. 

We therefore inquire for the expected time it takes to read this data from the functional containers. We will formulate this question, and obtain its answer, in terms of the following random process.

\begin{restatable}{theorem}{MDSVersion} \label{thm:mds-version}
    Let $A^{(0)} = {\boldsymbol{0}}^{n\times (m+\rho)}, A^{(1)}, A^{(2)}, \ldots$ be a sequence of matrices constructed as follows: for each $t \in \mathbb{N}$, we draw $(v_1, \ldots, v_{m+\rho}) \sim \mathrm{Unif}([n]^{m+\rho})$ and set
    \[
        A^{(t)} = A^{(t-1)} + \left[\bfe_{v_1}, \ldots, \bfe_{v_{m+\rho}}\right] \;,
    \]
    where $\bfe_a$ denotes the column vector with a $1$ in the $a$-th position and $0$ in all other entries. Define $T_{n,m,\rho}$ as the random variable that corresponds to the smallest $t$ such that every row of $A^{(t)}$ contains at least $m$ nonzero entries.
    Then,
    \[
    \mathbb{E}\left[T_{n,m,\rho}\right] = \frac{n}{\rho+1} \left( \ln n + \ln \binom{m+\rho}{m-1} + \gamma \pm o(1) \right).
    \]
    % \hanmao{This should be $\mathbb{E}\left[T_{n,m,\rho}\right]$ right? I didnt change because I wasnt sure how it looks like in the long version.}
\end{restatable}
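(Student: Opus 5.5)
Each row evolves as follows. At every time step, each of the $m+\rho$ columns independently hits a given row $i$ with probability $1/n$. Row $i$ is satisfied once at least $m$ distinct columns have hit it. Equivalently, let $\tau_{i,c}$ be the first time column $c$ hits row $i$; each $\tau_{i,c}$ is geometric with parameter $1/n$. Row $i$ is done at time $\tau_i^{(m)}$, the $m$-th order statistic of $\tau_{i,1},\dots,\tau_{i,m+\rho}$. Then $T_{n,m,\rho}=\max_i \tau_i^{(m)}$.

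The columns are independent of one another. Within a column, the hitting indicators across rows are dependent, since a column hits exactly one row per step, but they are negatively associated. The plan is therefore to compute the tail $\Pr[T>t]$ and sum it, $\mathbb{E}[T]=\sum_{t\ge0}\Pr[T>t]$, in the usual coupon-collector style.

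\textbf{Step 1: single-row tail.} For a fixed row, $\Pr[\tau_i^{(m)}>t]$ is the probability that at most $m-1$ of the $m+\rho$ columns have hit it by time $t$. Write $p_t=(1-1/n)^t$. In the relevant regime $t\approx \frac{n}{\rho+1}\ln n$, $p_t$ is small, and the dominant term is exactly $m-1$ hits:
\[
\Pr[\tau_i^{(m)}>t]=\binom{m+\rho}{m-1}p_t^{\rho+1}\bigl(1+O(p_t)\bigr).
\]
So the per-row failure probability behaves like $\binom{m+\rho}{m-1}e^{-(\rho+1)t/n}$.

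\textbf{Step 2: joint tail via inclusion--exclusion / Bonferroni.} Set $t=\frac{n}{\rho+1}(\ln n+\ln\binom{m+\rho}{m-1}+x)$. Then the expected number of unfinished rows is $\mu_t=n\Pr[\tau_1^{(m)}>t]\to e^{-x}$. I will show $\Pr[T\le t]\to \exp(-e^{-x})$ uniformly on compact $x$-ranges. To do this I compute the $k$-th factorial moments of the number $N_t$ of unfinished rows, and show each tends to $e^{-kx}$ (Poisson convergence by the method of moments / Brun's sieve).

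The main obstacle is the joint probability that $k$ fixed rows are all unfinished. For each column, the events ``hits row $i_1$'', ..., ``hits row $i_k$'' are the occupancy of $k$ specific bins by a multinomial process. Their joint probability equals $\Pr[\text{miss }S]$-type expressions: $\Pr[\text{column misses all rows in } S]=(1-|S|/n)^t$. Since $(1-k/n)^t=p_t^k(1+O(k^2t/n^2))$, the near-independence error is $o(1)$ for fixed $k$. Expanding the joint ``at most $m-1$ hits per row'' event over which columns miss which rows, this gives $n^k\Pr[\text{rows }1..k\text{ unfinished}]=\mu_t^k(1+o(1))$.

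\textbf{Step 3: integrate the tail.} Converting to the expectation requires uniform control of the tails. For the upper tail ($x\to\infty$), use the union bound $\Pr[T>t]\le \mu_t\lesssim e^{-x}$, which is summable. For the lower tail ($x\to-\infty$), bound $\Pr[T\le t]$ using negative association: $\Pr[T\le t]\le\prod_i\Pr[\tau_i^{(m)}\le t]\le \exp(-\mu_t)$. This decays doubly exponentially in $-x$, and an $O(n)$ crude bound handles $t$ small. Dominated convergence then yields
\[
\mathbb{E}[T]=\frac{n}{\rho+1}\Bigl(\ln n+\ln\tbinom{m+\rho}{m-1}+\int_{-\infty}^{\infty}\bigl(\mathbf 1_{x>0}-e^{-e^{-x}}\bigr)\,dx \cdot(-1)\cdots\Bigr),
\]
that is, the Gumbel mean $\gamma$ appears:
\[
\mathbb{E}[T]=\frac{n}{\rho+1}\Bigl(\ln n+\ln\tbinom{m+\rho}{m-1}+\gamma\pm o(1)\Bigr).
\]
The discretization error from summing rather than integrating is $O(1)=o(n)$.
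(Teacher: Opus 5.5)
Your outline is correct. For the distributional limit and the upper tail it is the paper's argument. Your Step 2 (factorial moments of the number of unfinished rows, comparing each column's miss pattern on $k$ fixed rows with the independent model) is the paper's lemma on $\Pr[E_1(t)\cap\cdots\cap E_k(t)]$ followed by Bonferroni, with $\ln\binom{m+\rho}{m-1}$ absorbed into your choice of $t$. The upper tail is the same union bound; to make it uniform in $n$, use the crude bound $\sum_{k<m}\binom{m+\rho}{k}\le 2^{m+\rho}$ rather than the asymptotic from Step 1.

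You depart from the paper in the lower tail and in the final interchange of limits. The paper uses negative association only through $\Cov(I_{i_1},I_{i_2})\le 0$. It applies Chebyshev to get $\Pr[W=0]\le 1/\mathbb{E}[W]$, a single-exponential tail, and then invokes ``convergence in distribution plus uniform integrability implies convergence of means''. You instead use the product form of NA on the increasing events $\{X_i(t)\ge m\}$ to get $\Pr[T\le t]\le e^{-\mu_t}$, a doubly exponential tail. You then run dominated convergence directly on $\frac{\rho+1}{n}(\mathbb{E}[T]-t_0)=\int_0^\infty\Pr[T>t_x]\,dx-\int_{-\infty}^0\Pr[T\le t_x]\,dx$. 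This is slightly stronger and skips the variance computation. It does need two ingredients:
\begin{itemize}
\item the product inequality, which follows from the definition by induction, since products of nonnegative nondecreasing functions of disjoint blocks are nondecreasing;
\item the closure properties P$_6$ and P$_7$, to pass from per-column occupancy counts to the row counts $X_i(t)$, as the paper does.
\end{itemize}
Incidentally, $\int_{-\infty}^{\infty}\bigl(\mathbf 1_{x>0}-e^{-e^{-x}}\bigr)\,dx$ already equals $+\gamma$, so the extra sign in your display is spurious.

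The one step you should tighten is ``an $O(n)$ crude bound handles $t$ small''. Your lower bound $\mu_t\gtrsim e^{-x}$ comes from the term with exactly $m-1$ hits. That term carries the factor $(1-p_t)^{m-1}$ (in your notation $p_t$ is the miss probability), which is bounded below only once $t\ge cn$. If you handle $t\le cn$ by $\Pr[T\le t]\le 1$, you only learn that this range contributes $O(n)$ to $\mathbb{E}[T]$. That is the same order as the term $\frac{n}{\rho+1}\bigl(\ln\binom{m+\rho}{m-1}+\gamma\bigr)$ you are trying to identify. Equivalently, in normalized units it is an interval of length $\Theta(1)$ escaping to $-\infty$, with no integrable majorant.

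There are two easy fixes.
\begin{itemize}
\item As in the paper, note that $T\ge nm/(m+\rho)$ deterministically, since each step creates at most $m+\rho$ new nonzero entries and $nm$ are needed. So $\Pr[T\le t]=0$ below this threshold, and $1-p_t\ge 1-e^{-m/(m+\rho)}$ above it.
\item Alternatively, use that $\mu_t$ is nonincreasing in $t$. Then $\mu_t\ge\mu_{cn}\ge n(1-1/n)^{(m+\rho)cn}=\Omega(n)$ for $t\le cn$, making those terms $e^{-\Omega(n)}$.
\end{itemize}
Either way you get the majorant $\exp(-c'e^{-x})$ for all $x\le 0$, uniformly in $n$, and your argument closes.
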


The next corollary states the associated DNA-DSS, derived from \Cref{thm:mds-version} by instantiating it with a standard MDS code having redundancy $r = \rho + 1$ and length $M = m + \rho + 1$. As concrete choices, one may use a \emph{doubly extended Reed–Solomon code} over an alphabet of prime-power size \cite[Section 5, Problem 5.2]{roth2006introduction}, or, in the special case where a single container acts as the redundancy container, a simple parity code suffices.

% \hanmao{Small suggestion: In the previous statement, we write things in terms of $m$ and $\rho$. However, the next corollary uses $n$, $M$ and $r$. Maybe we can relate them? Like setting $r=\rho+1$ and $M=m+\rho+1$  in Theorem~\ref{thm:mds-version}?}
\begin{corollary} \label{cor:MDS-DDSS}
    Let $n, M, r \in \mathbb{N}$. If $r = 1$, choose any positive integer $q\geq 2$. If $r \geq 2$, choose $q$ to be a prime power satisfying $q \geq M - 1$.
    Then there exists an $(n, M, n(M - r), q)$ DNA-DSS such that the expected recovery time, $T_j$, of every container $j\in [M]$ is 
    \[
    \Bbb{E}[T_j] = 
    \frac{n}{r} \ln n + \frac{n}{r} \ln \binom{M-1}{r} + \frac{\gamma n}{r} \pm o(n) \;.
    \]
\end{corollary}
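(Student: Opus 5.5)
The plan is to derive \Cref{cor:MDS-DDSS} directly from \Cref{thm:mds-version}, in three steps: construct the MDS code, reduce recovery to the matrix process of that theorem, and substitute parameters.

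\textbf{Construction of the code.} If $r=1$, we take $\Sigma=\mathbb{Z}_q$ for any $q\geq 2$. The encoder is the single-parity code: a row $(x_1,\ldots,x_{M-1})$ is mapped to $(x_1,\ldots,x_{M-1},-\sum_i x_i)$. Any $M-1$ coordinates of a codeword determine the remaining one, so this code is MDS.

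If $r\geq 2$, we take $q$ a prime power with $q\geq M-1$. A doubly extended Reed--Solomon code over $\mathbb{F}_q$ has length up to $q+1\geq M$, so it yields an $[M,M-r]$ MDS code (see \cite[Section 5]{roth2006introduction}). We then encode $S'\in\Sigma^{n\times(M-r)}$ row by row, as described before \Cref{thm:mds-version}. The resulting map $\mathcal{E}$ is injective. Each row of $S$ is determined by any $M-r$ of its entries, so any column can be recovered from the other $M-1$ columns. Hence $\mathcal{E}$ is an $(n,M,n(M-r),q)$ DNA-DSS in the sense of \Cref{def:DNA-DSS}.

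\textbf{Reduction to the random process.} Fix a failed container $j$. The $M-1$ surviving containers are read in parallel, and at each time step each of them returns an independent, uniformly random row index $v_i\in[n]$. The read from container $i$ reveals the entry $S_{v_i,i}$.

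We track this with a counting matrix $A^{(t)}\in\mathbb{Z}^{n\times(M-1)}$, whose $(a,i)$ entry counts how many times row $a$ has been read from the $i$-th surviving container. This matrix evolves exactly as in \Cref{thm:mds-version} with $m+\rho=M-1$. Row $a$ of container $j$ can be recovered once we know $M-r$ entries of row $a$, that is, once row $a$ of $A^{(t)}$ has at least $m=M-r$ nonzero entries. This gives $\rho=M-1-m=r-1$. So the recovery time satisfies $T_j\leq T_{n,m,\rho}$.

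We also need the matching lower bound, which is the one point requiring care. Fewer than $M-r$ symbols of a row generally do not determine the lost symbol, so recovery is not possible earlier. We therefore define $T_j$ as the first time all rows have $M-r$ known symbols, consistent with the discussion preceding the theorem. With this definition, $T_j\stackrel{d}{=}T_{n,M-r,r-1}$.

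\textbf{Substitution.} Plugging $\rho+1=r$ and $\binom{m+\rho}{m-1}=\binom{M-1}{M-r-1}=\binom{M-1}{r}$ into \Cref{thm:mds-version} gives
\[
\mathbb{E}[T_j]=\frac{n}{r}\Bigl(\ln n+\ln\binom{M-1}{r}+\gamma\pm o(1)\Bigr).
\]
Since $M$ and $r$ are fixed, the error term $\frac{n}{r}\,o(1)$ is $o(n)$, which gives the claimed expansion.

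The main obstacle is not analytic, since \Cref{thm:mds-version} does all the asymptotic work. It is the modeling step: justifying that the recovery event coincides exactly with the event ``every row has $m$ nonzero entries''. This rests on the MDS property in both directions, the sufficiency of any $M-r$ entries and the definitional convention for the necessity.
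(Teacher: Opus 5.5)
Your proposal is correct and follows the paper's route. You instantiate \Cref{thm:mds-version} with $m+\rho=M-1$ and $\rho+1=r$ (so $m=M-r$), using a parity code for $r=1$ and a doubly extended Reed--Solomon code for $r\ge 2$, and simplify $\binom{M-1}{M-r-1}=\binom{M-1}{r}$.

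One small remark: the necessity direction you treat as a definitional convention actually follows from the MDS property itself. Any $M-r-1$ surviving coordinates of a row, together with coordinate $j$, form an information set, so the lost symbol of that row is completely undetermined until $M-r$ of its symbols have been read; distinct rows carry independent data, so this holds row by row.
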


%We observe that for essentially all MDS codes of length $M$, the alphabet size $q$ must satisfy $q \geq M - 1$, with only a few exceptional cases (including the case $r=1$). The MDS conjecture \cite{segre1955curve}, which is believed to be true, asserts that aside from these known special situations, no additional MDS codes exist with $q < M - 1$.

{\bf{A direct theorem for $r=1$.}}
We observe that \Cref{thm:mds-version} and \Cref{cor:MDS-DDSS} already handle the situation in which exactly one container stores redundant data.
Nonetheless, to build intuition for the proof of \Cref{thm:mds-version}, we present here and prove in \Cref{sec:proof-of-one-redund}a simpler statement,
which can be viewed as a straightforward extension of the classical coupon collector problem.
In the special case $r = 1$, whenever a container fails, recovering it necessitates reading \emph{all} strands from each of the remaining containers.
This is equivalent to asking for the expected time until $m$ independent coupon collectors (each with its own set of $n$ coupons) collect all their coupons.
Formally, we show the following.
\begin{restatable}{theorem}{singleCopyNoRedund} \label{thm:single-copy-no-redun}
    Let $n\in \Bbb{N}$ and let $m$ be a constant positive integer with respect to $n$.
    Let $X^{(n)}_1, \ldots, X^{(n)}_{m}$ be i.i.d. random variables with $X^{(n)}_j \sim \text{CCP}(n)$. Then,
    \[
    \Bbb{E} \left[ \max_{j\in[m]} X^{(n)}_{j} \right] = n \left( \ln (mn) + \gamma  \pm o(1) \right) \;.
    \]
\end{restatable}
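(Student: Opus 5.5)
Write $Y=\max_{j\in[m]} X^{(n)}_j$. The plan is to use the tail-sum formula $\Bbb{E}[Y]=\sum_{t\ge 0}\Pr[Y>t]$ together with independence: $\Pr[Y\le t]=\Pr[X^{(n)}\le t]^m$. The whole problem then reduces to controlling $F_n(t)\triangleq\Pr[X^{(n)}\le t]$ uniformly enough.

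\textbf{Step 1: a two-sided bound on $F_n$.} By inclusion--exclusion (Bonferroni),
\[
F_n(t)=\sum_{i=0}^n (-1)^i\binom{n}{i}\Bigl(1-\tfrac{i}{n}\Bigr)^t .
\]
In the window $t=n(\ln n+x)$ with $|x|\le \ln\ln n$, say, this gives $F_n(t)=\exp(-e^{-x})(1+o(1))$ uniformly in $x$; this is the classical Erd\H{o}s--R\'enyi Gumbel limit. For a cruder but global control I would use two bounds:
\begin{itemize}
\item the union bound $1-F_n(t)\le n(1-1/n)^t\le n e^{-t/n}$ for large $t$;
\item for small $t$, the bound $F_n(t)\le \exp(-n e^{-t/n}(1-o(1)))$, which follows from negative association of the coupon indicators (or Poissonization).
\end{itemize}

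\textbf{Step 2: rescale and split the sum.} Substitute $t=n(\ln n+x)$, so $\Bbb{E}[Y]=n\ln n+n\int_{-\ln n}^{\infty}\Pr[Y>n(\ln n+x)]\,dx-n\ln n\cdot(\ldots)$. It is cleaner to write
\[
\Bbb{E}[Y]=n\ln n+n\left(\int_0^\infty(1-G_n(x)^m)\,dx-\int_{-\ln n}^0 G_n(x)^m\,dx\right)+O(1),
\]
where $G_n(x)=F_n(n(\ln n+x))$. Split the $x$-range into three pieces:
\begin{itemize}
\item $x>L$: the union bound gives $1-G_n^m\le m e^{-x}$, so the tail contributes at most $me^{-L}$.
\item $x<-L$: the lower-tail bound gives $G_n^m\le \exp(-m e^{L}(1-o(1)))$, which is negligible.
\item $|x|\le L$: dominated convergence applies, with $G_n\to \exp(-e^{-x})$.
\end{itemize}
Letting $n\to\infty$ and then $L\to\infty$, the bracket converges to
\[
\int_{-\infty}^\infty \Bigl(\mathds{1}[x>0]-\exp(-me^{-x})\Bigr)\,dx .
\]
This is the mean of the maximum of $m$ i.i.d.\ standard Gumbels, which equals $\ln m+\gamma$. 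Indeed, $\exp(-me^{-x})=\exp(-e^{-(x-\ln m)})$ is a Gumbel distribution shifted by $\ln m$. This yields $n(\ln n+\ln m+\gamma\pm o(1))=n(\ln(mn)+\gamma\pm o(1))$.

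\textbf{Main obstacle.} The delicate step is the lower-tail estimate for $G_n(x)$ when $x$ is very negative, down to $x=-\ln n$ (i.e.\ $t$ near $0$). There the Gumbel approximation is not uniform, and we need an integrable dominating function. I would first try negative dependence of the events ``coupon $i$ collected by time $t$'', which gives $F_n(t)\le(1-(1-1/n)^t)^n\le\exp(-n(1-1/n)^t)$. For $t\le n\ln n$ this is at most $\exp(-c\,e^{-x})$, which is doubly-exponentially small as $x\to-\infty$. Since $m$ is constant, raising to the $m$-th power only helps, so the argument is uniform and all errors are $o(n)$.
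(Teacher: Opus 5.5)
Your proposal is correct and is essentially the paper's argument: the Gumbel limit handles the bulk, the union bound gives the right tail $me^{-x}$, and a negative-dependence estimate controls the left tail, with your explicit dominated-convergence step on the tail integral doing the job of the paper's uniform-integrability route (\Cref{lem:exp-tail-impl-unif-conv} together with \Cref{thm:conv-of-expec}). The only variation is that your left-tail product bound $F_n(t)\le\exp(-n(1-1/n)^t)$ is doubly exponential, whereas the paper uses the Chebyshev bound $2e^{-x}$; for $x<-L$ you must integrate this bound pointwise, as in your last paragraph, because the sup bound $\exp(-me^{L})$ times the interval length $\ln n$ would not vanish.
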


% XXXXXX \\\\\\\\\\

% We formulate this problem in equivalent form using the CCP terminology. 
% Suppose we have $m$ i.i.d. random variables $X^{(n)}_1, \ldots, X^{(n)}_{m}$ with $X^{(n)}_j \sim \text{CCP}(n)$ for every $j\in[m]$. What is the value of $\mathbb{E} \left[ \max_{j\in[m]} X^{(n)}_{j} \right]$?
% We obtain the following theorem.

% \begin{restatable}{theorem}{singleCopyNoRedund} \label{thm:single-copy-no-redun}
%     Let $n\in \Bbb{N}$ and let $m$ be a constant positive integer with respect to $n$.
%     Let $X^{(n)}_1, \ldots, X^{(n)}_{m}$ be i.i.d. random variables with $X^{(n)}_j \sim \text{CCP}(n)$. Then,
%     \[
%     \Bbb{E} \left[ \max_{j\in[m]} X^{(n)}_{j} \right] = n \ln (mn) + \gamma n \pm o(n)\;.
%     \]
% \end{restatable}

% A straitforward corollary is the following
% \begin{corollary} \label{cor:Parity-DNA-DSS}
%     For any $n, q \in \mathbb{N}$ and any $M$ that is constant with respect to $n$, there exists an $(n, M, n(M-1), q)$ DNA-DSS such that the expected recovery time of every failed container is $n \ln((M-1)n) + \gamma n \pm o(n)$.
% \end{corollary}

In practice, each strand is stored in many identical copies, so sequencing typically yields multiple reads of the same strand.  
When errors are present, having several corrupted versions of each strand can enhance the tradeoff between rate and reliability.  
In this work, however, we do not address errors.  
Instead, we generalize \Cref{thm:single-copy-no-redun} to the setting where $\ell$ copies of each coupon are required, laying the groundwork for future extensions that incorporate also storage errors.

\begin{restatable}{theorem}{MultipleCopyNoRedund} \label{thm:multi-copy-no-redun}
    Let $m$ and $\ell$ be constant positive integers.
    Let $X^{(n,\ell)}_1, \ldots, X^{(n,\ell)}_{m}$ be i.i.d. random variables with $X^{(n, \ell)}_j \sim \text{CCP}(n,\ell)$. Then,
    \[
    \Bbb{E} \left[ \max_{j\in[m]} X^{(n,\ell)}_{j} \right] 
    = n \left(\ln\left(nm\right) + \left(\ell-1\right)\ln \ln n+ C_{\ell} \pm o(1) \right)
    \]
    where $C_{\ell} = \gamma - \ln((\ell-1)!)$.
\end{restatable}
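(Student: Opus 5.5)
We will compute the expectation through the tail-sum formula $\Bbb{E}[\max_j X_j] = \sum_{t\ge 0}\Pr[\max_j X_j > t] = \sum_{t\ge0}\bigl(1-\Pr[X^{(n,\ell)}_1\le t]^m\bigr)$, using independence of the $m$ collectors. We center at the scale $t = n(\ln n + (\ell-1)\ln\ln n + x)$ and aim to show that $\Pr[X^{(n,\ell)}_1 \le t] = \exp\bigl(-e^{-x}/(\ell-1)!\bigr)(1+o(1))$ uniformly for $x$ in a window $[-K_n, K_n]$ with $K_n\to\infty$ slowly. This is the Erd\H{o}s--R\'enyi double-dixie-cup limit law. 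Granting it, the $m$-th power is $\exp\bigl(-m e^{-x}/(\ell-1)!\bigr)$, a Gumbel law shifted by $\ln m - \ln((\ell-1)!)$. The mean of this law is $\gamma + \ln m - \ln((\ell-1)!)$, which gives the claimed $n(\ln(nm) + (\ell-1)\ln\ln n + C_\ell \pm o(1))$.

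To establish the limit law we use Poissonization, or directly the count $Y_t$ of coupon types seen fewer than $\ell$ times by time $t$. Each type has $\mathrm{Bin}(t,1/n)$ draws. At our scale, $\Pr[\mathrm{Bin}(t,1/n)\le \ell-1] \approx \frac{(t/n)^{\ell-1}}{(\ell-1)!}e^{-t/n}$, so $\Bbb{E}[Y_t] \to e^{-x}/(\ell-1)!$, because $(\ln n)^{\ell-1}$ cancels the $\ln\ln n$ shift. We then apply Bonferroni inequalities, i.e.\ factorial moments $\Bbb{E}\binom{Y_t}{k} \to \lambda^k/k!$ with the types only weakly negatively dependent, to get $\Pr[Y_t=0]\to e^{-\lambda}$. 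The paper quotes the mean results of Newman--Shepp and Erd\H{o}s--R\'enyi; we would reprove the distributional limit in this form, with explicit error terms uniform in $x$.

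To pass from distribution convergence to the expectation, we split the tail sum into three ranges. For $t \le n(\ln n+(\ell-1)\ln\ln n - K_n)$, the summand $1-F^m$ is at most $1$, and it is exponentially close to $1$ there. We therefore write that part as the count of such $t$ minus a negligible correction. To bound $F(t)$ for very small $t$ we use the second-moment or Janson bound $\Pr[Y_t=0]\le \exp(-c\,\Bbb{E}Y_t)$. For $t \ge n(\ln n + (\ell-1)\ln\ln n+K_n)$, the union bound gives $1-F^m\le m\Bbb{E}[Y_t] \lesssim m e^{-x}$, whose sum is $n\cdot O(e^{-K_n})=o(n)$. This requires checking the binomial tail for large $t$, where $(t/n)^{\ell-1}$ grows but is dominated by $e^{-t/n}$. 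In the middle window the sum becomes a Riemann sum of step $1/n$ in $x$, which converges to $n\int(\mathbb{1}[x<0]-\ldots)$ and yields the Gumbel mean.

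The main obstacle is uniformity. The convergence $\Pr[Y_t=0]\to e^{-\lambda(x)}$ must carry error terms good enough that, integrated over a window of growing width $2K_n$, they contribute only $o(1)$ in units of $n$. We would first choose $K_n = \ln\ln\ln n$ and obtain explicit relative errors of size $O(\mathrm{poly}(\ln\ln n)/\ln n)$ from the binomial approximation and the Bonferroni truncation. The $\ln\ln n$ correction in the point-probability approximation also needs care, since the neglected lower-order terms $(t/n)^{j}$ for $j<\ell-1$ are smaller by a factor of $\ln n$.
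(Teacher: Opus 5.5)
Your proposal is correct, but it takes a more hands-on route than the paper, which never uses the tail-sum formula or a window.

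\textbf{The paper's route.} It normalizes $Y^{(n,\ell)}_{\max}=X^{(n,\ell)}_{\max}/n-\ln n-(\ell-1)\ln\ln n$. It gets pointwise convergence to $\Gumbel(\ln(m/(\ell-1)!),1)$ by citing the Erd\H{o}s--R\'enyi \emph{distributional} limit, so you need not reprove it, and then taking the $m$-th power of the limiting CDF. It then proves two tail bounds, uniform in $n$:
\begin{itemize}
\item $\Pr[Y^{(n,\ell)}_{\max}>x]\le C_0e^{-x/2}$, via a union bound over coupons with $\binom{t}{r}\le t^r$;
\item $\Pr[Y^{(n,\ell)}_{\max}\le -x]\le C_1e^{-x/2}$, via Chebyshev with $\Var(W^{(t)})\le\mathbb{E}[W^{(t)}]$ for the count $W^{(t)}$ of under-collected coupons (your $Y_t$), using negative association, plus a case split at $x=2(\ell-1)\ln\ln n$.
\end{itemize}
These give uniform integrability, and the theorem ``uniformly integrable plus convergence in distribution implies convergence of means'' finishes.

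\textbf{How your argument maps onto it.} Your outer ranges are essentially these two tail bounds. For the lower range you need a lower bound on $\mathbb{E}[Y_t]$ that grows with the distance below the center, uniformly in $n$, not just its value at the window edge, because that range has about $n\ln n$ terms. Your middle-window Riemann sum is a hand-rolled proof of that convergence-of-means theorem. The paper's packaging uses only pointwise convergence, so no rates or uniformity in $x$ ever enter. Yours could in principle give an explicit $o(1)$.

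\textbf{The uniformity obstacle is mostly self-inflicted.} A multiplicative $(1+o(1))$ estimate uniformly on $[-K_n,K_n]$ is stronger than you need, and it is delicate at the left edge. There $\lambda$ grows to $\ln\ln n/(\ell-1)!$ for your $K_n$. This forces the Bonferroni truncation level to grow with $n$. It also turns relative errors $\epsilon$ in the factorial moments into absolute errors of order $\epsilon e^{\lambda}$, against a target of size $e^{-\lambda}$.

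All you actually need is an additive error $\delta_n$ with $K_n\delta_n\to 0$, and that comes for free. The distribution functions of $Y^{(n,\ell)}_{\max}$ converge pointwise to a continuous limit, hence uniformly by P\'olya's theorem, so $\delta_n\to 0$. You can then choose $K_n\to\infty$ slowly enough that $K_n\delta_n\to0$. The outer ranges contribute $O(ne^{-K_n/2})=o(n)$ for any $K_n\to\infty$, so no explicit error terms are required.
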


\subsection{The MDS \emph{Array} Code Case}

Our next result can be viewed as a generalization of \Cref{thm:mds-version}, motivated by insights from the theory of regenerating codes in classical distributed storage.
Let $S'\in \Sigma^{n\times(M-r)}$ represent the information to be stored. In the DNA-DSS given in \Cref{cor:MDS-DDSS}, we encoded each row of $S'$ independently to obtain the encoded matrix $S$. Consequently, to reconstruct any column of $S$, it was necessary to read at least a fixed number of symbols from every row of $S$.

We now proceed with a more general construction. Partition $S'$ into $n/b$ consecutive submatrices $S'_1, \ldots, S'_{n/b}$, each of dimension $b \times (M-r)$. Concretely, the first submatrix is formed by the top $b$ rows, the second by the next $b$ rows, and so forth.
Each submatrix $S'_i$ is then encoded into a new submatrix $S_i$ of size $b\times M$ using an encoder $\mathcal{E}:\Sigma^{b\times (M-r)} \to \Sigma^{b\times M}$ of an MDS array code. The resulting matrix is obtained by stacking $S_1, \ldots, S_{n/b}$ vertically. Note that here, the MDS property states that knowing the values of any $M-r$ columns of $S$ suffice to recover $S$.

Assume now that the $p$-th container has failed and needs to be reconstructed, and focus, for the moment, only on the first block $S_1$. 
Let $M_{p}\triangleq [M]\setminus \{p\}$ denote the set of all remaining operational containers.  
The encoding function $\mathcal{E}$, together with the failed index $p$, defines a \emph{bad blocks configurations family} $\calB_p \subseteq \mathcal{P}([b]\times M_p)$ (here, $\calP$ denotes the power set) consisting of all index sets that cannot be used to reconstruct the $p$-th column.
Namely, $B \in \calB_p$ if and only if the $p$-th column of $S_1$ cannot be reconstructed from the collection of symbols $\{(S_1)_{i,j} \mid (i,j) \in B\}$.  
Formally,

\begin{restatable}{definition}{defBadConf}
\textup{(Bad blocks configuration with respect to $\mathcal{E}$ and $p$)} \label{def:bad-block-conf}
    Let $\mathcal{E}:\Sigma^{b\times (M-r)} \to \Sigma^{b\times M}$ and fix an index $p \in [M]$. Define $M_p \triangleq [M]\setminus\{p\}$ and let $\calB_p \subseteq \calP([b] \times M_p)$ be a collection of subsets with the following property: a set $B$ belongs to $\calB_p$ if and only if there exist two distinct encoded matrices $S, S' \in \mathcal{E}(\Sigma^{b\times (M-r)})$ such that $S_{i,j} = S'_{i,j}$ for every $(i,j)\in B$, while the $p$-th column of $S$ is \emph{not equal} to the $p$-th column of $S'$. 
    Equivalently, even after observing all entries $S_{i,j}$ with $(i,j)\in B$, one still cannot uniquely determine the $p$-th column of $S$.
Moreover, define 
    \begin{align*}
        \alpha^*_p &\triangleq (M-1)b - \max_{B\in \calB_p} |B| \;, \quad \text{and}\\
        \beta^{*}_p &\triangleq |\{B'\in \calB_p \mid |B'| = \max_{B\in \calB_p}|B|\}|\;.
    \end{align*}
    Namely, $\alpha^*_p$ is the smallest number of missing entries of $S$ for which the $p$-th column cannot be recovered and $\beta^{*}_p$ is the be number of bad recovering sets of maximal size in $\calB_p$.
\end{restatable}

Note that since the same encoder $\mathcal{E}$ is applied to every block, the family $\calB_p$ is the same for all matrices $S_1, \ldots, S_{n/b}$. 
Thus, the question we consider is the expected time until, for each block, we have gathered a set of indices that \emph{does not} belong to $\calB_p$. Similarly to \Cref{thm:mds-version}, the next theorem defines the random process and the stopping criterion that characterize recovery, given a bad blocks configuration. 
\begin{restatable}{theorem}{RegeneratingVersion} \label{thm:regenerating-ex-version}
    Let $A^{(0)} = {\boldsymbol{0}}^{n\times m}, A^{(1)}, A^{(2)}, \ldots$ be a sequence of matrices constructed as follows: for each $t \in \mathbb{N}$, we draw $(v_1, \ldots, v_{m}) \sim \mathrm{Unif}([n]^{m})$ and set
    \[
        A^{(t)} = A^{(t-1)} + \left[\bfe_{v_1}, \ldots, \bfe_{v_{m}}\right] \;.
    \]
    Let $b\in \Bbb{N}$ that divides $n$ and let $\calB \subseteq \calP([b]\times [m])$. As \Cref{def:bad-block-conf}, define 
    \[
    \alpha^* \triangleq mb - \max_{B\in \calB} |B| \qquad \beta^{*} \triangleq |\{B'\in \calB \mid |B'| = \max_{B\in \calB}|B|\}|
    \]
    Define $T$ as the random variable that corresponds to the minimal $t$ such that for all $a\in [n/b]$, we have that
    \[
    \left \lbrace (i,j) \in [b] \times[m] \mid A^{(t)}_{i + (a-1)b,j} \neq 0\right \rbrace \notin \calB\;.
    \]
    Then, 
    \[
    \Bbb{E}[T] \leq \frac{n}{\alpha^*}\ln n + \frac{\beta^*}{b \alpha^*}\cdot n+ o(n)\;,
    \]
\end{restatable}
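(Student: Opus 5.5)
We bound the expectation through the tail sum $\Bbb{E}[T]=\sum_{t\ge 0}\Pr[T>t]$. At time $t$, the event $T>t$ means some block $a\in[n/b]$ is still \emph{bad}, i.e.\ its set of observed positions $O_a^{(t)}\subseteq[b]\times[m]$ lies in $\calB$. We use a union bound over blocks and over bad configurations. Since $\calB$ may be assumed downward closed (a subset of a bad set is bad, because observing fewer entries cannot help), the event $O_a^{(t)}\in\calB$ holds exactly when $O_a^{(t)}\subseteq B$ for some $B\in\calB$. It then suffices to take the maximal members of $\calB$. For a fixed $B$, the event $O_a^{(t)}\subseteq B$ says that every position of $([b]\times[m])\setminus B$, which has $mb-|B|$ elements, was never hit during the first $t$ rounds. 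Within column $j$, a fixed set of $k_j$ rows is avoided in one round with probability $1-k_j/n$. Columns are independent, so the probability of this event is $\prod_j(1-k_j/n)^t\le e^{-t(mb-|B|)/n}$.

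Summing, we get
\[
\Pr[T>t]\le \min\Bigl\{1,\ \tfrac{n}{b}\Bigl(\beta^* e^{-t\alpha^*/n}+ \textstyle\sum_{B:\,|B|<mb-\alpha^*} e^{-t(mb-|B|)/n}\Bigr)\Bigr\}.
\]
The number of bad sets is at most $2^{mb}$, a constant, since $b$ and $m$ are constants here (we assume $b$ is fixed as $n\to\infty$). Set $t_0=\frac{n}{\alpha^*}\ln n$. Split the sum as $\sum_{t<t_0}1$ plus $\sum_{t\ge t_0}(\cdots)$. The first part is at most $t_0+1$. In the second part, the dominant term is
\[
\frac{n\beta^*}{b}\sum_{t\ge t_0}e^{-t\alpha^*/n}\approx \frac{n\beta^*}{b}\cdot\frac{e^{-t_0\alpha^*/n}}{1-e^{-\alpha^*/n}}\approx\frac{n\beta^*}{b}\cdot\frac1n\cdot\frac{n}{\alpha^*}=\frac{\beta^* n}{b\alpha^*}.
\]
A term with deficiency $d\ge\alpha^*+1$ contributes $\frac{n}{b}\cdot n^{-d/\alpha^*}\cdot\frac{n}{d}=O(n^{1-1/\alpha^*})=o(n)$. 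This yields $\Bbb{E}[T]\le\frac{n}{\alpha^*}\ln n+\frac{\beta^*}{b\alpha^*}n+o(n)$.

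The main obstacle is getting the constant exactly right rather than merely the order. Three points need care. First, the maximal sets must be counted via $\beta^*$: with the minimal deficiency $\alpha^*$, exactly $\beta^*$ sets contribute at the leading exponential rate. Second, the geometric sum must be replaced by $\frac{n}{\alpha^*}(1+o(1))$ using $1-e^{-x}=x(1+O(x))$. Third, the union bound is intentionally lossy. This loss is why the statement is only an upper bound, so no matching lower-order inclusion–exclusion is needed. We also check the product bound $(1-k_j/n)^t\le e^{-tk_j/n}$ and the rounding of $t_0$, both of which only affect $O(1)$ terms. If $b$ is allowed to grow with $n$, the count of non-maximal bad sets would have to be controlled separately; we treat $b$ as constant.
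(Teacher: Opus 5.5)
Your proposal is correct and follows essentially the same route as the paper: a union bound over the $n/b$ blocks and over the bad sets, the bound $\prod_j(1-k_j/n)^t\le e^{-t(mb-|B|)/n}$ on the probability that the complement of $B$ is still unobserved, and a split of the tail sum at $t_0=\frac{n}{\alpha^*}\ln n$, after which the geometric series gives $\frac{\beta^* n}{b\alpha^*}$ from the $\beta^*$ maximum-size bad sets and $O(n^{1-1/\alpha^*})=o(n)$ from the rest. Your downward-closedness assumption is not needed: $O_a^{(t)}\in\calB$ trivially implies $O_a^{(t)}\subseteq B$ for $B=O_a^{(t)}$, so the same union bound holds for an arbitrary family $\calB$, as the theorem is stated.
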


\Cref{thm:regenerating-ex-version}, combined with \Cref{def:bad-block-conf}, yields a framework for establishing an upper bound on the expected recovery time of each container in a DNA-DSS.
\begin{corollary}
    Let $M,b, r, n\in \Bbb{N}$ where $b$ divides $n$. An injective encoding function $\mathcal{E}:\Sigma^{b\times (M-r)} \times \Sigma^{b \times M}$ yields
    %\hanmao{Can you check the domain and co-domain of $E$? It be $\Sigma^{b\times (M-r)}$. Just numbers right?}
    an $(n, M, n (M-r), |\Sigma|)$ DNA-DSS code such that for every $p\in [M]$, we have
    \[
    \Bbb{E}[T_p] \leq \frac{n}{\alpha^*_p} \ln n + \frac{\beta^*_p}{b \alpha^*_p} n + o(n) \;,
    \]
    where $\beta^*_p$ and $\alpha^*_p$ are defined in \Cref{def:bad-block-conf}.
\end{corollary}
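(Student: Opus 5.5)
This corollary follows from \Cref{thm:regenerating-ex-version}; the work is to identify the reading process with the random matrix process of that theorem, with the correct bad family.

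\textbf{Construction of the DNA-DSS.} Partition the information matrix $S'\in\Sigma^{n\times(M-r)}$ into the $n/b$ blocks $S'_1,\dots,S'_{n/b}$ of size $b\times(M-r)$ described before \Cref{def:bad-block-conf}. Encode each block by $S_a=\mathcal{E}(S'_a)$, stack the $S_a$ vertically to get $S\in\Sigma^{n\times M}$, and store column $j$ in container $j$. The map $S'\mapsto S$ is injective because $\mathcal{E}$ is injective blockwise, so $k=n(M-r)$. For the reconstruction requirement of \Cref{def:DNA-DSS}, fix $p$. The full index set $[b]\times M_p$ is not in $\calB_p$, since otherwise $\alpha^*_p=0$ and the claimed bound is vacuous. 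Hence every block's column $p$, and so the whole $p$-th column of $S$, is determined by the other columns. I will state the nondegeneracy $\alpha_p^*\ge 1$ explicitly; it is what makes $\mathcal{E}$ a valid repairable code.

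\textbf{Coupling the reading process with the theorem.} Suppose container $p$ fails. At each time step, each of the $m\triangleq M-1$ surviving containers outputs a uniformly random row index in $[n]$, independently. Relabel $M_p$ as $[m]$ via the order-preserving bijection, and let $A^{(t)}_{i,j}$ count how many times strand $i$ of (relabeled) container $j$ has been read up to time $t$. This is exactly the process of \Cref{thm:regenerating-ex-version}. After time $t$, the decoder knows the entries of block $a$ indexed by
\[
R_a^{(t)}=\{(i,j)\in[b]\times[m] : A^{(t)}_{i+(a-1)b,j}\neq 0\}.
\]
Let $\calB$ be the image of $\calB_p$ under the relabeling. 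By \Cref{def:bad-block-conf}, column $p$ of $S_a$ is uniquely determined by the observed entries if and only if $R_a^{(t)}\notin\calB$. Since $\mathcal{E}$ is the same for all blocks, one family $\calB$ serves every $a$. Therefore container $p$ is recoverable at time $t$ exactly when $R_a^{(t)}\notin\calB$ for all $a$, so $T_p$ equals the stopping time $T$ of the theorem. The relabeling is a bijection and preserves set sizes, so $\alpha^*=(M-1)b-\max_{B\in\calB_p}|B|=\alpha^*_p$ and $\beta^*=\beta^*_p$.

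\textbf{Conclusion and main subtlety.} Applying \Cref{thm:regenerating-ex-version} gives $\mathbb{E}[T_p]\le \frac{n}{\alpha_p^*}\ln n+\frac{\beta_p^*}{b\alpha_p^*}n+o(n)$ for each $p$. The main point needing care is the ``if and only if'' in the coupling, namely that recoverability of block $a$ depends only on the set of observed positions and not on the observed values. This is built into \Cref{def:bad-block-conf}, since membership in $\calB_p$ quantifies over all pairs of codewords. A secondary subtlety is that the $o(n)$ term must be uniform over $p$. This holds because $M$ is fixed, so there are finitely many $p$ and hence finitely many families $\calB_p$.
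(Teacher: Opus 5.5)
Your proposal is correct and follows the same route as the paper. The paper gives no separate proof; it only notes that \Cref{thm:regenerating-ex-version} combined with \Cref{def:bad-block-conf} yields the bound, which is exactly your instantiation with $m=M-1$ and $\calB$ the relabeled $\calB_p$, so that $\alpha^*=\alpha^*_p$ and $\beta^*=\beta^*_p$. Your explicit hypothesis $\alpha^*_p\ge 1$ is a sensible repair, since injectivity alone does not give the repair property (the paper tacitly relies on $\mathcal{E}$ being an MDS array code), and for the upper bound you only need $T_p\le T$, i.e., that $R_a^{(t)}\notin\calB$ for all $a$ suffices for recovery, so the converse direction of your ``if and only if'' (which can fail for nonlinear $\mathcal{E}$) is never used.
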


\begin{example} 
    We specialize this framework to one of the simplest examples of an MDS regenerating code \cite[Fig. 1]{DBLP:journals/tit/TamoWB13}, defined over $\Bbb{F}_3$. Let $n$ be an even number. 
    Consider the $(n, 4, 2n, 3)$ DNA-DSS defined by the encoding function $\mathcal{E}: \Bbb{F}_3^{2\times 2} \to \Bbb{F}_3^{2 \times 4}$ defined by
    \[
        \mathcal{E} \left(
    \begin{pmatrix}
        a & b \\
        c & d
    \end{pmatrix} \right) =
    \begin{pmatrix}
        a & b & a+b& a+2d\\
        c & d & c+d& c+b
    \end{pmatrix}\;,
    \]
    where arithmetic oprations are done over $\Bbb{F}_3$.
    Assume that $p=1$ and  observe that there are exactly two sets of size $3$ that recover the first column: $\{(1,2), (1,3),(2,4)\}$ and $\{(2,2),(2,3),(1,4)\}$. Further, there are only two sets of size $4$ that do \emph{not} recover the first column: $\{(1,2),(1,3),(1,4), (2,2)\}$ and $\{(1,2),(2,2), (2,3),(2,4)\}$. Thus,
    \[
    \begin{tabular}{c||c|c|c|c|c|c|c|}
         $\delta$ & $0$ & $1$ & $2$ & $3$ & $4$ &$5$ & $6$ \\
         \hline
         $b_{\delta}$ & $1$ & $6$ & $\binom{6}{2}=15$ & $\binom{6}{3} - 2 = 18$ & $2$ & $0 $& $0 $
    \end{tabular}
    \]
    % \[
    % \begin{tabular}{c|c}
    %      $\delta$ & $|\{ B\in \calB \mid |B| = \delta\}|$ \\
    %      \hline \hline
    %      $1$ & $6$ \\
    %      \hline
    %      $2$ & $\binom{6}{2}=15$ \\
    %      \hline
    %      $3$ & $\binom{6}{3} - 2 = 18$ 
    % \end{tabular}
    % \qquad
    % \begin{tabular}{c|c}
    %      $\delta$ & $|\{ B\in \calB \mid |B| = \delta\}|$ \\
    %      \hline \hline
    %      $4$ & $2$  \\
    %      \hline
    %      $5$ & $0$ \\
    %      \hline
    %      $6$ & $0$ \\
    %      \hline
    % \end{tabular}
    % \]
    where $b_{\delta} = |\{B\in \calB_1 \mid |B| = \delta\}| $. 
    Since $M = 4$ and $b = 2$, we get, $\alpha^*_1 = 3\cdot 2 - 4 = 2$ and $\beta^*_1 = 2$. In total, we get
    \[
    \Bbb{E}[T_1]\leq \frac{n}{2} \ln n + \frac{n}{2} + o(n) \;.
    \]
    Note that, setting $M = 4$ and $r = 2$ in the scalar MDS DNA-DSS construction of \Cref{cor:MDS-DDSS}, we obtain $\Bbb{E}[T_1] \approx \frac{n}{2} \ln n + 0.838 n \pm o(n)$, which shows that the regenerating code provides an improvement of at least $0.338n$.
\end{example} 
\section{Preliminaries}
    We shall use the following limit theorem about the coupon collector distribution.
    \begin{theorem}[Theorem 5.13 \cite{bookProbabilityAndComputing}] \label{thm:gambel-conv-in-dist}
        Let $X\sim\text{CCP}(n)$. Then, for any constant $c$,
        \begin{displaymath}
            \lim_{n\to\infty}\Pr\left[X>n\ln n +cn\right] = 1-e^{-e^{-c}}
        \end{displaymath}
    \end{theorem}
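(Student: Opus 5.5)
The plan is to identify the event $\{X > n\ln n + cn\}$ with the event that some coupon is still missing after a fixed number of draws. Then I would compute the limiting distribution of the number of missing coupons by the method of factorial moments (inclusion–exclusion). Set $t = t(n) = \lfloor n\ln n + cn\rfloor$. Since $X$ is integer valued, $\Pr[X > n\ln n + cn] = \Pr[X > t]$. Let $Z$ be the number of coupon types not yet seen after $t$ draws. Then $\{X > t\} = \{Z \ge 1\}$, so it suffices to show $\Pr[Z = 0] \to e^{-e^{-c}}$.

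\textbf{Factorial moments.} Write $Z = \sum_{i\in[n]} I_i$, where $I_i$ indicates that type $i$ is missing after $t$ draws. For fixed $k$ and any $k$ distinct types, the probability that all of them are missing is $(1-k/n)^t$. Hence
\[
S_k \triangleq \Bbb{E}\left[\binom{Z}{k}\right] = \binom{n}{k}\left(1-\frac{k}{n}\right)^{t}.
\]
I would estimate this using $\ln(1-k/n) = -k/n - O(k^2/n^2)$ and $t = n\ln n + cn + O(1)$. This gives $(1-k/n)^t = n^{-k}e^{-kc}\exp\!\left(O(k^2\ln n/n) + O(k/n)\right)$. Combined with $\binom{n}{k} = \frac{n^k}{k!}(1+O(k^2/n))$, we get $S_k \to e^{-kc}/k!$ for each fixed $k$.

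\textbf{Inclusion–exclusion.} By the Bonferroni inequalities, for every fixed $K$ the partial sums $\sum_{k=0}^{K}(-1)^k S_k$ alternately bound $\Pr[Z=0] = \Pr\left[\bigcap_i \overline{\{I_i=1\}}\right]$ from above and below. Taking $n\to\infty$ with $K$ fixed, $\limsup$ and $\liminf$ of $\Pr[Z=0]$ are sandwiched between consecutive partial sums of $\sum_{k\ge 0} (-1)^k e^{-kc}/k! = e^{-e^{-c}}$. Letting $K\to\infty$, the terms $e^{-Kc}/K!$ vanish, so both limits equal $e^{-e^{-c}}$. Therefore $\Pr[X > n\ln n + cn] = 1 - \Pr[Z=0] \to 1 - e^{-e^{-c}}$.

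The only delicate step is the asymptotic $S_k \to e^{-kc}/k!$. The error terms there ($k^2\ln n / n$ and the integer rounding of $t$) must be $o(1)$ for each fixed $k$. They are, and no uniformity in $k$ is needed, because the Bonferroni truncation is carried out at a fixed $K$ before letting $n\to\infty$. An alternative route, which I would keep as a fallback, is Poissonization: with Poisson$(t)$ draws, the indicators $I_i$ become independent Bernoulli$(e^{-t/n})$, so $\Pr[Z=0] = (1-e^{-t/n})^n \to e^{-e^{-c}}$. One then de-Poissonizes by shifting $t$ by $O(\sqrt{n\ln n}) = o(n)$, which changes $c$ by $o(1)$; continuity of the limit in $c$ finishes the argument.
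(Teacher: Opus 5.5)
Your proof is correct. The paper does not prove this statement itself: it is quoted from the cited textbook \cite{bookProbabilityAndComputing}, whose argument is the Poisson-approximation route you keep as a fallback. Your main argument, however, is exactly the factorial-moment/Bonferroni method of \Cref{lem:intersection-asym} and \Cref{thm:mds-dist-conv}, which gives this statement as the special case $m=1$, $\rho=0$; there $\psi(x)=e^{-x}$ and $U_k(n)=\binom{n}{k}(1-k/n)^t\to e^{-kx}/k!$, precisely as in your computation of $S_k$.
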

    This limit theorem is exactly a special case of the Gumbel distribution whose definition is given next.
    \begin{definition}[Gumbel distribution]
    \label{def:gumbel-dist}
        The cumulative distribution function of Gumbel distribution with parameters $\mu,\beta$ is 
        \begin{displaymath}
            F(x;\mu,\beta) = e^{-e^{\frac{-(x-\mu)}{\beta}}}.
        \end{displaymath}
        If $X$ is a random variable with Gumbel distribution with parameters $\mu,\beta$, we denote $X\sim \text{Gumbel}(\mu, \beta)$.
    \end{definition}

    \begin{definition}[Convergence in distribution]
    \label{def:conv-in-dist}
        A sequence $\calX = X^{(1)},X^{(2)},\dots$ of real-valued random variables with cumulative distribution functions $F_1,F_2,\dots$ is said to converge in distribution to a random variable $X$ with cumulative distribution function $F$ if $\lim_{n\to \infty}F_n(x) = F(x)$, for every $x\in \mathbb{R}$ at which $F$ is continuous.
        We denote convergence in distribution by $\mathcal{X}\overset{\mathcal{D}}{\to } X$.
    \end{definition}

    \begin{lemma}\label{lem:lim-of-max-conv-in-dist}
        Let $\calX_1, \ldots, \calX_m$ be $m$ sequences of random variables such that for every $n\in \Bbb{N}$, $X_1^{(n)},X_2^{(n)},\ldots,X_m^{(n)}$ are i.i.d. and for all $j\in [m]$, we have $\calX_j \overset{\mathcal{D}}{\to }\text{Gumbel}(\mu, \beta)$.
        For every $n\in \Bbb{N}$, denote $X_{\max}^{(n)}\triangleq \max_jX_j^{(n)}$ and let $\calX_{\max} \triangleq X_{\max}^{(1)}, X_{\max}^{(2)}, \ldots$. It holds that 
        \begin{align*}
            \calX_{\max}\overset{\mathcal{D}}{\to } \Gumbel(\beta\ln(m) + \mu, \beta).
        \end{align*}
    \end{lemma}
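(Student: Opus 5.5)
The plan is to compute the CDF of the maximum directly, using independence within each fixed $n$, and then pass to the limit. Fix $n$ and let $F_n$ denote the common CDF of $X_1^{(n)},\dots,X_m^{(n)}$; the variables are identically distributed for each $n$, which is all we use besides independence. Since the variables are independent, for every real $x$,
\[
\Pr\left[X_{\max}^{(n)}\le x\right]=\Pr\left[X_1^{(n)}\le x,\dots,X_m^{(n)}\le x\right]=F_n(x)^m .
\]

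Next we identify the target. The Gumbel CDF $F(x;\mu,\beta)=e^{-e^{-(x-\mu)/\beta}}$ is continuous everywhere, so convergence in distribution of each $\mathcal{X}_j$ gives $F_n(x)\to F(x;\mu,\beta)$ for every $x\in\mathbb{R}$. The map $y\mapsto y^m$ is continuous, hence $F_n(x)^m\to F(x;\mu,\beta)^m$ pointwise.

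It remains to rewrite this limit as a Gumbel CDF:
\[
F(x;\mu,\beta)^m=\exp\!\left(-m\,e^{-(x-\mu)/\beta}\right)=\exp\!\left(-e^{\ln m-(x-\mu)/\beta}\right)=\exp\!\left(-e^{-(x-\mu-\beta\ln m)/\beta}\right),
\]
which is $F(x;\beta\ln m+\mu,\beta)$. This function is continuous at every $x$, so the pointwise convergence $\Pr[X_{\max}^{(n)}\le x]\to F(x;\beta\ln m+\mu,\beta)$ for all $x$ is exactly \Cref{def:conv-in-dist} for $\mathcal{X}_{\max}\overset{\mathcal{D}}{\to}\Gumbel(\beta\ln m+\mu,\beta)$.

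There is no real obstacle in this argument. The only points to check are that continuity of the Gumbel CDF lets convergence hold at every $x$, rather than only at continuity points, and that $m$ is fixed, so raising to the $m$-th power preserves the limit.
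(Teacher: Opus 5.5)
Your proposal is correct and follows the paper's proof essentially step for step: independence gives $\Pr[X_{\max}^{(n)}\le x]=F_n(x)^m$, you pass to the limit pointwise, and you absorb the factor $m$ into the location parameter as $\beta\ln m+\mu$. You also state explicitly that the Gumbel CDF is continuous everywhere, so convergence holds at every $x$; the paper uses this only implicitly.
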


    %  \begin{lemma}\label{lem:lim-of-max-conv-in-dist} 
    %     Let $X_1^{(n)},X_2^{(n)},\ldots,X_m^{(n)}$ i.i.d. random variables such that for any $i$, $X_i^{(n)}\overset{\mathcal{D}}{\to }\text{Gumbel}(\mu, \beta)$. \al{do we want to change the notations or leave it with $X_i^{(n)}$?}
    %     Denote $X_{\max}^{(n)}\triangleq \max_iX_i^{(n)}$. Then,
    %     \begin{align*}
    %         X_{\max}^{(n)}\overset{\mathcal{D}}{\to } \text{Gumbel}(\beta\ln(m) + \mu, \beta).
    %     \end{align*}
    % \end{lemma}
    \begin{proof}
        By the definition of convergence in distribution (\Cref{def:conv-in-dist}) and the Gumbel distribution (\Cref{def:gumbel-dist}), for every $t$ and $j\in [m]$,
        \begin{displaymath}
            \lim_{n\to\infty}\Pr\left[X_j^{(n)}\leq t\right] = e^{-e^{-\frac{(t-\mu)}{\beta}}}.
        \end{displaymath}
        For every fixed $n\in \Bbb{N}$, since $X_1^{(n)}, \ldots, X_m^{(n)}$ are i.i.d., for every fixed $t$ and  $m$, we have 
        \begin{displaymath}
            \Pr\left[X_{\max}^{(n)}\leq t\right] = \Pr\left[\forall j\in [m], X_{j}^{(n)}\leq t\right] = \Pr\left[X_1^{(n)}\leq t\right]^m.
        \end{displaymath} Therefore, for any fixed $t,m$, 
        \begin{displaymath}
            \lim_{n\to \infty} \Pr\left[X_{\max}^{(n)}\leq t\right] 
            = \lim_{n\to \infty}\Pr\left[X_1^{(n)}\leq t\right]^m 
            = \exp\left(-e^{-\frac{(t-\mu)}{\beta}}\right)^m 
            = \exp\left(-e^{-\frac{(t-(\mu +\beta\ln(m))}{\beta}}\right).
        \end{displaymath} 
        By \Cref{def:gumbel-dist} and \Cref{def:conv-in-dist}, we conclude that $\calX_{\max}\overset{\mathcal{D}}{\to } \text{Gumbel}(\beta\ln(m)+\mu, \beta)$ as required. 
    \end{proof}

    % \begin{lemma}[Chebyshev's inequality]
    %         Let $X$ be a random variable with finite non-zero variance $\sigma^2$ (and thus finite expected value $\mu$). Then for any real number $k>0$, $\Pr[|X-\mu|\geq k] \leq \frac{\sigma^2}{k^2}$.
    %         \label{lem:cheb-ineq}
    % \end{lemma}

    For the following definition we shall need the following notation of an indicator. Let $E$ be any event. Then, define
    
    \[
    \mathds{1}_{E} = \begin{cases}
            1 & \text{if } E \text{ occurs} \\
            0 & \text{otherwise}
        \end{cases}
    \]
        
    \begin{definition}[Uniformly integrable, Definition 4.1 \cite{gut2006probability}]
        A sequence $\calX = X^{(1)},X^{(2)},\ldots$ of random variables with finite expectation is called uniformly integrable if for any $n\in \Bbb{N}$, it holds that
        $\mathbb{E}[|X^{(n)}|\cdot \mathds{1}_{|X^{(n)}|> K}]\to 0 \text{ as } K\to \infty$, uniformly in $n$.
    \label{def:unif-integrable}
    \end{definition}

    For a sequence of uniformly integrable variables, we have the following.
       \begin{theorem}[Theorem 3.5, \cite{billingsley2013convergence}]\label{thm:conv-of-expec}
        Let $\calX = X^{(1)}, X^{(2)},\dots$ be a uniformly integrable sequence such that $\calX \overset{\mathcal{D}}{\to} X$ to some random variable $X$. Then $X$ has finite expectation and $\lim_{n\to \infty}\mathbb{E}[X^{(n)}] = \mathbb{E}[X]$.
    \end{theorem}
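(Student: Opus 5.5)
The plan is a truncation argument. Uniform integrability controls the tails of all $X^{(n)}$ at once, and convergence in distribution handles the truncated, bounded part. For $K>0$, define the clamping function $g_K(x)\triangleq \max(-K,\min(x,K))$. It is bounded and continuous, and it satisfies the pointwise bound $|x-g_K(x)|\le |x|\cdot\mathds{1}_{|x|>K}$. Since $\calX\overset{\mathcal{D}}{\to}X$, the standard characterization of convergence in distribution gives $\mathbb{E}[h(X^{(n)})]\to\mathbb{E}[h(X)]$ for every bounded continuous $h$ (the Portmanteau theorem, or equivalently Skorokhod's representation plus bounded convergence). In particular this holds for $h=g_K$ and for $h=|g_K|=\min(|x|,K)$.

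\emph{Step 1: $X$ has finite expectation.} By \Cref{def:unif-integrable}, choose $K_0$ such that $\mathbb{E}[|X^{(n)}|\mathds{1}_{|X^{(n)}|>K_0}]\le 1$ for all $n$. Then $\sup_n\mathbb{E}|X^{(n)}|\le K_0+1\triangleq C$. For every $K$,
\[
\mathbb{E}[\min(|X|,K)]=\lim_{n\to\infty}\mathbb{E}[\min(|X^{(n)}|,K)]\le C .
\]
Letting $K\to\infty$, monotone convergence gives $\mathbb{E}|X|\le C<\infty$.

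\emph{Step 2: Control the tail of $X$.} Because $\mathbb{E}|X|<\infty$, dominated convergence gives $\mathbb{E}[|X|\mathds{1}_{|X|>K}]\to 0$ as $K\to\infty$.

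\emph{Step 3: Three-term estimate.} Fix $\varepsilon>0$. Using Step 2 and uniform integrability, pick $K$ such that $\sup_n\mathbb{E}[|X^{(n)}|\mathds{1}_{|X^{(n)}|>K}]<\varepsilon$ and $\mathbb{E}[|X|\mathds{1}_{|X|>K}]<\varepsilon$. Then
\[
|\mathbb{E}X^{(n)}-\mathbb{E}X|\le \mathbb{E}|X^{(n)}-g_K(X^{(n)})| + |\mathbb{E}g_K(X^{(n)})-\mathbb{E}g_K(X)| + \mathbb{E}|g_K(X)-X| .
\]
By the pointwise bound on $g_K$, the first and third terms are each below $\varepsilon$. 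The middle term tends to $0$ as $n\to\infty$ by convergence in distribution. Hence $\limsup_n|\mathbb{E}X^{(n)}-\mathbb{E}X|\le 2\varepsilon$, and since $\varepsilon$ is arbitrary, $\mathbb{E}[X^{(n)}]\to\mathbb{E}[X]$.

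The main subtlety is Step 1. One must not assume integrability of $X$; it has to be derived from the uniform $L^1$ bound. That is why the argument uses the bounded continuous truncations $\min(|x|,K)$ rather than the discontinuous $x\mathds{1}_{|x|\le K}$. With these, convergence in distribution applies directly, with no need to restrict $K$ to continuity points of the law of $X$.
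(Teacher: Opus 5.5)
Your argument is correct and complete. The truncations $g_K$ and $\min(|x|,K)$ are bounded and continuous, so convergence in distribution applies to them with no continuity-point restriction. Step 1 correctly derives $\mathbb{E}|X|\le K_0+1$ from the uniform $L^1$ bound rather than assuming integrability. In Step 3 you fix $K$ before letting $n\to\infty$, which is the right order of quantifiers. The paper itself gives no proof of this statement (it is quoted from Billingsley), so there is no in-paper argument to compare against.

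The usual textbook proofs take a different route. One reduces to nonnegative variables via positive and negative parts and writes $\mathbb{E}[X^{(n)}]=\int_0^\infty\Pr[X^{(n)}>t]\,dt$. Integrability of $X$ then comes from Fatou's lemma. Convergence of $\int_0^K\Pr[X^{(n)}>t]\,dt$ comes from bounded convergence, since $\Pr[X^{(n)}>t]\to\Pr[X>t]$ for all but countably many $t$. Uniform integrability enters only through $\int_K^\infty\Pr[X^{(n)}>t]\,dt\le\mathbb{E}[X^{(n)}\mathds{1}_{X^{(n)}>K}]$. Another standard proof uses Skorokhod's representation followed by Vitali's convergence theorem.

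The tail-integral route works directly with distribution functions, which is exactly how \Cref{def:conv-in-dist} is phrased. Its cost is the continuity-point bookkeeping and the positive/negative split. Your route avoids both. In exchange, it relies on the Portmanteau equivalence between that CDF-based definition and convergence of $\mathbb{E}[h(X^{(n)})]$ for every bounded continuous $h$. That equivalence is standard, but it is itself a nontrivial theorem, and the paper never states it.
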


    Throughout the paper, we shall want to prove that a sequence of random variables is uniformly integrable. The following lemma gives a sufficient condition for that.
    \begin{lemma}\label{lem:exp-tail-impl-unif-conv}
        Let $\calX = X^{(1)},X^{(2)},\dots$ be a sequence of random variables and assume there exist $C,c>0$ such that for every $t\geq 0$ and any $n\in \Bbb{N}$, it holds that $\Pr[|X^{(n)}|> t] \leq Ce^{-ct}$. Then, $\calX$ is uniformly integrable.
    \end{lemma}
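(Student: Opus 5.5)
The argument bounds $\mathbb{E}[|X^{(n)}|\mathds{1}_{|X^{(n)}|>K}]$ directly by the tail-integral formula and then uses the uniform exponential tail bound. For any nonnegative random variable $Y$ and any $K\ge 0$, the layer-cake representation gives
\[
\mathbb{E}\left[Y\mathds{1}_{Y>K}\right] = K\Pr[Y>K] + \int_K^\infty \Pr[Y>t]\,dt .
\]
This identity follows by writing $Y\mathds{1}_{Y>K} = \int_0^\infty \mathds{1}_{Y\mathds{1}_{Y>K}>t}\,dt$ and applying Tonelli. The integrand equals $\mathds{1}_{Y>K}$ for $t<K$ and $\mathds{1}_{Y>t}$ for $t\ge K$. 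We apply it with $Y=|X^{(n)}|$.

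Plugging in the hypothesis $\Pr[|X^{(n)}|>t]\le Ce^{-ct}$, valid for all $t\ge 0$ and all $n$, gives
\[
\mathbb{E}\left[|X^{(n)}|\mathds{1}_{|X^{(n)}|>K}\right] \le CKe^{-cK} + \frac{C}{c}e^{-cK}.
\]
The right-hand side does not depend on $n$ and tends to $0$ as $K\to\infty$. Hence the convergence in \Cref{def:unif-integrable} is uniform in $n$.

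We also need each $X^{(n)}$ to have finite expectation, as required in \Cref{def:unif-integrable}. This is the case $K=0$ of the same computation: $\mathbb{E}|X^{(n)}| = \int_0^\infty \Pr[|X^{(n)}|>t]\,dt \le C/c < \infty$.

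There is no real obstacle here. The only point needing care is justifying the tail-integral identity via Fubini/Tonelli, which is legitimate because everything involved is nonnegative.
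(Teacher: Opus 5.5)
Your proof is correct and follows the same route as the paper: the tail-integral formula, the uniform exponential tail bound, and the $K=0$ case for finiteness of $\mathbb{E}|X^{(n)}|$. It is slightly more careful than the paper's version, which writes $\mathbb{E}[|X^{(n)}|\mathds{1}_{|X^{(n)}|\geq K}]=\int_K^\infty \Pr[|X^{(n)}|>t]\,dt$ and so omits the boundary term $K\Pr[|X^{(n)}|\geq K]$ that you correctly keep; since $CKe^{-cK}\to 0$ uniformly in $n$, the conclusion is unaffected.
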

    \begin{proof}
        We will verify that the two conditions specified in \Cref{def:unif-integrable} hold.
        \begin{enumerate}
            \item For every $n\in \Bbb{N}$, we have that
            \begin{align*}
                \mathbb{E}[|X^{(n)}|] 
                &= \int_0^\infty \Pr\left[|X^{(n)}|> t\right]dt \\
                &\leq C\int _0^\infty e^{-ct}dt = \frac{C}{c} < \infty\;.
            \end{align*} 
            Thus, $X^{(n)}$ has finite expectation.
            \item For any $n\in \Bbb{N}$ and positive $K$,
            \begin{align*}
                0\leq \mathbb{E}[|X^{(n)}|\cdot \mathds{1}_{|X^{(n)}|\geq K}] 
                &= \int_K^\infty \Pr\left[|X^{(n)}|> t\right]dt \\
                &\leq C\int _K^\infty e^{-ct}dt 
                = \frac{C}{c}e^{-cK} \;.
            \end{align*}
            Since $\lim_{K\to\infty}\frac{C}{c}e^{-cK}=0$ and is independent of $n$, $\lim_{K\to\infty}\mathbb{E}[|X^{(n)}|\cdot \mathds{1}_{|X^{(n)}|\geq K}]= 0$ uniformly in $n$.
        \end{enumerate}
         We conclude that $\calX$ is uniformly integrable.
    \end{proof}

\section{Warmup: Proof of \Cref{thm:single-copy-no-redun}}\label{sec:proof-of-one-redund}
    We start by introducing the notations of this section. Let $m$ be a constant positive integer. For all $j\in [m]$, let $\calX_j = \{X_j^{(n)}\}_{n\geq 2} $ be a sequence of random variables where $X_j^{(n)} \sim \CCP(n)$. 
    Furthermore, we assume that for every fixed $n\in \Bbb{N}$, the variables $X_1^{(n)}, \ldots, X_m^{(n)}$ are i.i.d. 
    For every $n\in \Bbb{N}$, let $X_{\max}^{(n)}\triangleq \max_{j\in [m]} X_j^{(n)}$.

    We will be interested in the following normalized versions of $\calX_1, \ldots,\calX_m$.
    For every $j\in [m]$ and $n\in \Bbb{N}$, define $Y_j^{(n)} \triangleq \frac{X_j^{(n)}}{n}-\ln n$ and $Y_{\max}^{(n)}\triangleq \max_{j\in [m]}Y_j^{(n)}$. Finally, for every $j\in [m]$, we set $\calY_j \triangleq \{Y_j^{(n)}\}_{n\geq 2}$ and  $\calY_{\max} \triangleq \{Y_{\max}^{(n)}\}_{n\geq 2}$.

    The proof of \Cref{thm:single-copy-no-redun} consists of the following steps. First, in \Cref{clm:y-max-conv-in-dist}, we show that $\calY_{\max}$ converges in distribution to the Gumbel distribution with $\mu = \ln m$ and $\beta = 1$. Then, in \Cref{clm:y-max-bound}, we bound the tails of $Y_{\max}^{(n)}$ which is needed to show in \Cref{clm:y-max-unif-int}, that $\calY_{\max}$ is uniformly integrable. Finally, to prove \Cref{thm:single-copy-no-redun} we will apply \Cref{thm:conv-of-expec}.

   \begin{claim}\label{clm:y-max-conv-in-dist}
        It holds that
        \begin{align*}
            \calY_{\max}\overset{\mathcal{D}}{\to } \Gumbel(\ln(m), 1).
        \end{align*}
    \end{claim}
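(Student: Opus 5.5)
The plan is to reduce \Cref{clm:y-max-conv-in-dist} to \Cref{lem:lim-of-max-conv-in-dist} by first showing that each individual sequence $\calY_j$ converges in distribution to $\Gumbel(0,1)$.

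Fix $j\in[m]$ and a real number $c$. By the definition $Y_j^{(n)} = X_j^{(n)}/n - \ln n$, the event $\{Y_j^{(n)} > c\}$ is exactly $\{X_j^{(n)} > n\ln n + cn\}$. \Cref{thm:gambel-conv-in-dist} therefore gives
\[
\lim_{n\to\infty}\Pr\left[Y_j^{(n)}\le c\right] = 1-\lim_{n\to\infty}\Pr\left[X_j^{(n)}>n\ln n+cn\right] = e^{-e^{-c}},
\]
and $e^{-e^{-c}}$ is the CDF $F(c;0,1)$ of $\Gumbel(0,1)$ in the sense of \Cref{def:gumbel-dist}. This CDF is continuous everywhere, so by \Cref{def:conv-in-dist} we get $\calY_j\overset{\mathcal{D}}{\to}\Gumbel(0,1)$ for every $j\in[m]$.

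Next, for each fixed $n$, the variables $Y_1^{(n)},\ldots,Y_m^{(n)}$ are i.i.d. Each is obtained from the corresponding i.i.d. $X_j^{(n)}$ by applying the same deterministic affine map $x\mapsto x/n-\ln n$. That map is strictly increasing, so it also commutes with the maximum: $Y_{\max}^{(n)}=\max_j Y_j^{(n)}$ equals $X_{\max}^{(n)}/n-\ln n$, although only the first definition is needed here.

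Applying \Cref{lem:lim-of-max-conv-in-dist} with $\mu=0$ and $\beta=1$ then yields $\calY_{\max}\overset{\mathcal{D}}{\to}\Gumbel(\ln m,1)$, which is the claim.

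There is no real obstacle in this argument. The only points that need care are that the event rewriting is an exact identity rather than an approximation, and that the Gumbel CDF is continuous, so convergence holds at every point and not just at continuity points.
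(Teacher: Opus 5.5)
Your proposal is correct and matches the paper's proof. Like the paper, you use \Cref{thm:gambel-conv-in-dist} to show $\calY_j \overset{\mathcal{D}}{\to} \Gumbel(0,1)$ for each $j$, and then apply \Cref{lem:lim-of-max-conv-in-dist} with $\mu=0$, $\beta=1$ to conclude $\calY_{\max}\overset{\mathcal{D}}{\to}\Gumbel(\ln m,1)$.
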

    \begin{proof}
        Note that for all $n\in \Bbb{N}$, and every $j\in [m]$ we have that $nY_j^{(n)} + n\ln n \sim \CCP(n)$. By \Cref{thm:gambel-conv-in-dist}, 
        \[
        \lim_{n\to \infty}\Pr\left[nY_j^{(n)} + n\ln n > n \ln n + cn\right] = 1 - {e^{-e}}^{-c} \;,
        \]
        which implies that 
        \[
        \lim_{n\to \infty}\Pr[Y_j^{(n)} \leq c] = {e^{-e}}^{-c} \;,
        \]
        and thus, by \Cref{def:conv-in-dist}, the sequence $\calY_j = Y_j^{(1)}, Y_j^{(2)},\ldots$ converges in distribution to $\Gumbel(0,1)$. 
        Applying \Cref{lem:lim-of-max-conv-in-dist} with $\calY_1,\ldots,\calY_m$ proves the claim.
        % XXXXX\\\\
        % This is a direct result of \Cref{thm:gambel-conv-in-dist} and \Cref{lem:lim-of-max-conv-in-dist}.
    \end{proof}

\begin{claim} \label{clm:y-max-bound}
    For any $n\geq 2$ and any $x\geq 0$, 
    \[
    \Pr\left[ \left|Y_{\max}^{(n)}\right| > x \right] \leq (2 + m) \cdot e^{-x}.
    \]
\end{claim}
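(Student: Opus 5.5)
We bound the two tails of $Y_{\max}^{(n)}$ separately, since $\Pr[|Y_{\max}^{(n)}|>x] = \Pr[Y_{\max}^{(n)}>x] + \Pr[Y_{\max}^{(n)}<-x]$. The plan is to show the upper tail is at most $m e^{-x}$ and the lower tail at most $2e^{-x}$, which gives the constant $2+m$.

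\textbf{Upper tail.} By a union bound over $j\in[m]$ and identical distribution,
\[
\Pr[Y_{\max}^{(n)}>x]\le m\Pr[X_1^{(n)}>n\ln n+xn].
\]
The event $X_1^{(n)}>t$ (with $t=n\ln n+xn$) means some coupon is still missing after $\lfloor t\rfloor$ draws. A union bound over the $n$ coupons gives
\[
\Pr[X_1^{(n)}>t]\le n\left(1-\tfrac1n\right)^{\lfloor t\rfloor}.
\]
Since $X_1^{(n)}$ is integer valued, $X_1^{(n)}>t$ is the same as $X_1^{(n)}>\lceil t\rceil-1$, i.e.\ some coupon is missing after $\lceil t\rceil-1$ draws, whereas the clean estimate below needs the exponent to be at least $t$. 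This rounding is the only delicate point. If it costs a constant factor, we absorb it using the slack in the lower-tail constant; the main target is
\[
n\left(1-\tfrac1n\right)^{t}\le n e^{-t/n}=e^{-x}.
\]
Hence the upper tail is at most $m e^{-x}$.

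\textbf{Lower tail.} This is the main obstacle. We use that $\{Y_{\max}^{(n)}<-x\}\subseteq\{Y_1^{(n)}<-x\}$, so it suffices to bound $\Pr[X_1^{(n)}<n\ln n-xn]$. If $x>\ln n$ the probability is $0$, since $X_1^{(n)}\ge 0$. Otherwise we consider $s=\lfloor n\ln n-xn\rfloor$ draws. Let $Z$ be the number of coupons unseen after $s$ draws, so the event is $Z=0$. We will use Chebyshev's inequality:
\[
\Pr[Z=0]\le \frac{\Var(Z)}{\mathbb{E}[Z]^2}.
\]
The moments are
\[
\mathbb{E}[Z]=n(1-1/n)^s, \qquad \Var(Z)=\mathbb{E}[Z]+n(n-1)(1-2/n)^s-\mathbb{E}[Z]^2\le \mathbb{E}[Z],
\]
where the inequality uses $(1-2/n)\le(1-1/n)^2$, which makes the covariances nonpositive. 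Thus $\Pr[Z=0]\le 1/\mathbb{E}[Z]$.

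It remains to lower-bound $\mathbb{E}[Z]$. For $n\ge2$ we have $1-1/n\ge e^{-1/n-1/n^2}$, which gives
\[
\mathbb{E}[Z]\ge n e^{-s/n-s/n^2}\ge e^{x}e^{-\ln n/n}\ge e^{x}/2,
\]
since $\ln n/n\le \ln 2$. Hence the lower tail is at most $2e^{-x}$. (Alternatively, one can avoid Chebyshev by using the fact that the events ``coupon $i$ is unseen'' are negatively associated, giving $\Pr[Z=0]\le \prod_i\Pr[i\text{ seen}]$ and $\exp(-\mathbb{E}[Z])\le e^{-e^x/2}\le 2e^{-x}$. We keep the second-moment route as the primary plan.)

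Summing the two tails yields $(2+m)e^{-x}$. If the rounding in the upper tail forces an extra constant, we absorb it by tightening the estimate $\ln n/n\le\ln 2$ in the lower tail.
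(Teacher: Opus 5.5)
\paragraph{Same approach as the paper.} Your overall route is the paper's. For the right tail you take a union bound over all $nm$ coupons. For the left tail you reduce to one collector and apply Chebyshev with nonpositive covariances and $\mathbb{E}[W]\ge e^{x}/2$. The paper gets that last bound from $1-\frac1n\ge e^{-1/(n-1)}$ and $n^{-1/(n-1)}\ge\frac12$; your $1-\frac1n\ge e^{-1/n-1/n^2}$ does the same job, and your choice $s=\lfloor n\ln n-xn\rfloor$ rounds in the harmless direction. On the integrality issue you flag, the paper does nothing you are missing. It simply writes $\Pr[X_{\max}^{(n)}>n(\ln n+x)]\le nm(1-\frac1n)^{n(\ln n+x)}\le me^{-x}$, treating $n(\ln n+x)$ as an integer number of draws.

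\paragraph{The proposed repair does not work.} Since $\lfloor t\rfloor$ can be arbitrarily close to $t-1$, your right-tail estimate is only $mn(1-\frac1n)^{\lfloor t\rfloor}\le\frac{n}{n-1}\,me^{-x}$. That is a loss of up to $\frac{m}{n-1}e^{-x}$, which grows with $m$ and equals $me^{-x}$ at $n=2$.

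The slack you can recover on the left tail does not depend on $m$. It is at most the whole budget $2e^{-x}$. With your Chebyshev estimate $e^{\ln n/n}e^{-x}\le e^{1/e}e^{-x}$, it is only $(2-e^{1/e})e^{-x}\approx 0.56\,e^{-x}$. So your accounting yields $\bigl(\frac{mn}{n-1}+e^{1/e}\bigr)e^{-x}$, which exceeds $(2+m)e^{-x}$ as soon as $\frac{m}{n-1}>2-e^{1/e}$, already at $n=2$, $m=1$.

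A smaller slip: when $t$ is an integer, $X_1^{(n)}>t$ is equivalent to $X_1^{(n)}>\lfloor t\rfloor$, not to $X_1^{(n)}>\lceil t\rceil-1$.

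\paragraph{Two ways to fix it.}
\begin{enumerate}
\item The claim is only used downstream for uniform integrability via \Cref{lem:exp-tail-impl-unif-conv}, where any constant suffices. You can therefore just prove $(2+2m)e^{-x}$, using $\frac{n}{n-1}\le 2$.
\item If you want exactly $2+m$, add a short case analysis.
\begin{itemize}
\item For $x\le\ln(2+m)$ the bound is trivial, since probabilities are at most $1$.
\item For $x>\ln(2+m)\ge\ln 3$, keep the second-order term $\ln(1-\frac1n)\le-\frac1n-\frac1{2n^2}$. This multiplies the right-tail bound by $e^{-(\ln n+x)/(2n)}$, which cancels $\frac{n}{n-1}$ for every $n\ge4$ and gives $me^{-x}$. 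Your left-tail bound $e^{1/e}e^{-x}\le 2e^{-x}$ then finishes this case.
\item For $n\in\{2,3\}$ in this range of $x$, the left tail vanishes, because $X_1^{(n)}\ge n$ forces $Y_{\max}^{(n)}\ge 1-\ln n>-x$. A direct check that $\frac{mn}{n-1}e^{-(\ln n+x)/(2n)}\le 2+m$ for $x>\ln(2+m)$ completes the proof.
\end{itemize}
\end{enumerate}
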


\begin{proof}
    We first show that $\Pr[ Y_{\max}^{(n)} > x] \leq m\cdot e^{-x}$. Indeed, 
    By the definition of $Y_{\max}^{(n)}$, we have 
    \begin{align*}
            \Pr\left[Y_{\max}^{(n)}> x\right] 
            &=  \Pr\left[X_{\max}^{(n)}> n(\ln n  + x)\right] \\
            &= \Pr\left[\text{some coupon was \textbf{not} collected at time }n(\ln n  + x)\right] \\
            &\leq nm\left(1-\frac{1}{n}\right)^{n(\ln n  + x)} \\
            &\leq me^{-x} \;,
    \end{align*}
    where the first inequality is by a union bound.

    Now, we show that $\Pr[ Y_{\max}^{(n)} < -x] \leq 2e^{-x}$. First observe that 
    \[
    \Pr\left[Y_{\max}^{(n)}<- x\right] \leq \Pr\left[Y_{\max}^{(n)}\leq- x\right] =\Pr\left[Y_1^{(n)}\leq- x\right]^m = \Pr\left[X_1^{(n)} \leq n(\ln n  - x)\right]^m\;.
    \]
    Next, we shall bound $\Pr[X_1^{(n)} < n(\ln n  - x)]$.
    Fix $x$, set $t\triangleq n(\ln n  - x)$, and define $W^{(t)}$ to be the number of missing coupons after $t$ draws were drawn for the first collector. Note that $W^{(t)}=\sum_{j=1}^n I^{(t)}_j$ where $I^{(t)}_j$ is the indicator that $j$-th coupon was \emph{not} collected in the first $t$ draws. 
    Clearly, 
    \[\Pr\left[X_1^{(n)} \leq n(\ln n  - x)\right] = \Pr\left[X_1^{(n)} \leq t\right] = \Pr\left[W^{(t)}=0\right]
    \;.\]
    We shall bound the expectation and variance of $W^{(t)}$ in order to upper bound $\Pr[W^{(t)}=0]$.
    We have that

    \[
    \mathbb{E}\left[W^{(t)}\right] = \sum_{j=1}^n\mathbb{E}\left[I_j^{(t)}\right] = n\cdot \left(1 - \frac{1}{n}\right)^{n (\ln n - x)} \geq n\cdot e^{-\frac{n}{n-1} \cdot (\ln n - x)} \geq  n^{-\frac{1}{n-1}} \cdot e^x \geq \frac{1}{2}e^x \;,
    \]
    where the first inequality follows since $(1 - 1/n)^n \geq e^{-{n/(n-1)}}$ for all $n\geq 2$ and last inequality follows since $n^{-1/(n-1)}$ is monotonically increasing. For the variance of $W^{(t)}$, we first observe that
    \begin{align*}
        \Var\left(W^{(t)}\right)
        &= \sum_{i\in [n]}\Var\left(I_i^{(t)}\right) + 2\sum_{1 \leq i< j\leq n} \Cov\left(I_i^{(t)}, I_j^{(t)}\right) \;.
        % \\
        % &= \sum_{i\in [n]} \Bbb{E}[(I_i^{(t)})^2] - \Bbb{E}[(I_i^{(t)})]^2 + 2\sum_{1 \leq i< j\in n} \Bbb{E}[I_i^{(t)}\cdot I_j^{(t)}] - \Bbb{E}[I_i^{(t)}] \cdot \Bbb{E}[I_j^{(t)}]\\
        % &\leq \sum_{i\in [n]} \Bbb{E}[I_i^{(t)}] + 2 \cdot \binom{n}{2} \cdot \left( \left( 1 - \frac{2}{n}\right)^T - \left( 1 - \frac{1}{n}\right)^{2T}\right) \\
        % &\leq \Bbb{E}[W^{(t)}] \;,
    \end{align*}
    Now, since $I_i^{(t)}$ is an indicator, then $\Var(I_i^{(t)}) \leq \Bbb{E}[I_i^{(t)}]$.
    Moreover, 
    \[
    \Cov\left(I_i^{(t)}, I_j^{(t)}\right) = \Bbb{E}\left[I_i^{(t)}\cdot I_j^{(t)}\right] - \Bbb{E}\left[I_i^{(t)}\right] \cdot \Bbb{E}\left[I_j^{(t)}\right] = \left( 1 - \frac{2}{n}\right)^t - \left( 1 - \frac{1}{n}\right)^{2t} <0,
    \]
    where the inequality follows since $(1 - 2/n) < (1 - 1/n)^2$. Combining, we have that $\Var(W^{(t)}) < \Bbb{E}[W^{(t)}]$.

    Thus, by plugging these bounds into the Chebyshev inequality,
    \begin{align*}
            \Pr\left[W^{(t)}=0\right] \leq \Pr\left[\left|W^{(t)}- \mathbb{E}\left[W^{(t)}\right]\right| \geq \mathbb{E}\left[W^{(t)}\right]\right] \leq \frac{\Var\left(W^{(t)}\right)}{\mathbb{E}\left[W^{(t)}\right]^2} \leq \frac{1}{\mathbb{E}\left[W^{(t)}\right]}\leq 2 e^{-x}.
        \end{align*}
    We conclude that
    \begin{align*}
            \Pr\left[ Y_{\max}^{(n)} < -x \right] = \left(\Pr\left[W^{(t)}=0\right] \right)^m 
            \leq \Pr\left[W^{(t)}=0\right]
            \leq 2e^{-x} \;. 
    \end{align*}
\end{proof}

\begin{corollary}
    \label{clm:y-max-unif-int}
    The sequence $\calY_{\max}$ is uniformly integrable.
\end{corollary}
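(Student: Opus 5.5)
This follows directly by combining \Cref{clm:y-max-bound} with \Cref{lem:exp-tail-impl-unif-conv}. \Cref{clm:y-max-bound} gives, for every $n\geq 2$ and every $x\geq 0$,
\[
\Pr\left[\left|Y_{\max}^{(n)}\right|>x\right]\leq (2+m)e^{-x}.
\]
This is exactly the hypothesis of \Cref{lem:exp-tail-impl-unif-conv}, with constants $C=2+m$ and $c=1$. Both constants are independent of $n$ because $m$ is a fixed constant.

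Two small points need checking. First, \Cref{clm:y-max-bound} is stated for $x\geq 0$, which matches the range $t\geq 0$ required by the lemma. Second, the sequence $\calY_{\max}$ is indexed from $n\geq 2$ rather than $n\geq 1$. This is harmless: relabel the index, or note that the lemma's argument never uses the starting index. The lemma then yields that each $Y_{\max}^{(n)}$ has finite expectation, and that $\mathbb{E}[|Y_{\max}^{(n)}|\mathds{1}_{|Y_{\max}^{(n)}|>K}]\leq \frac{C}{c}e^{-K}\to 0$ uniformly in $n$. That is precisely uniform integrability in the sense of \Cref{def:unif-integrable}.

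There is no real obstacle here. All the substantive work, namely the two-sided exponential tail bound obtained from the union bound and the second-moment argument, was already done in \Cref{clm:y-max-bound}. The only thing to verify is that the constants in that bound do not depend on $n$, and they do not.
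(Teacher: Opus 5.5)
Your proposal is correct and is exactly the paper's argument: the paper likewise obtains the corollary by feeding the $n$-independent tail bound $(2+m)e^{-x}$ of \Cref{clm:y-max-bound} into \Cref{lem:exp-tail-impl-unif-conv} with $C=2+m$ and $c=1$. One small quibble: the explicit bound you quote omits a term. It should read $\mathbb{E}[|Y_{\max}^{(n)}|\mathds{1}_{|Y_{\max}^{(n)}|>K}] = K\Pr[|Y_{\max}^{(n)}|>K] + \int_K^\infty \Pr[|Y_{\max}^{(n)}|>t]\,dt \leq C(K+1/c)e^{-cK}$. The $K\Pr[|Y_{\max}^{(n)}|>K]$ term is also missing in the paper's proof of the lemma, but the corrected bound still tends to $0$ uniformly in $n$, so the conclusion is unaffected.
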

\begin{proof}
    Follows directly from \Cref{lem:exp-tail-impl-unif-conv} and \Cref{clm:y-max-bound}.
\end{proof}

We are now ready to prove \Cref{thm:single-copy-no-redun} which is restated for convenience.

\singleCopyNoRedund*
    \begin{proof}
        According to \Cref{clm:y-max-unif-int}, the sequence $\calY_{\max}$ is uniformly integrable, and converges in distribution to $\text{Gumbel}(\ln (m), 1)$ and so by \Cref{thm:conv-of-expec}, we have that 
        \begin{align*}
            \lim_{n\to\infty}\mathbb{E}\left[Y_{\max}^{(n)}\right] 
            = \mathbb{E}\left[\text{Gumbel}(\ln(m), 1)\right] = \ln(m) + \gamma \;,
        \end{align*}
        Therefore, we can write 
        \[
        \mathbb{E}\left[Y_{\max}^{(n)}\right] = \ln(m) + \gamma \pm o(1) \;.
        \]
        Remembering that $X_{\max}^{(n)} = n\ln n + nY_{\max}^{(n)}$, we conclude that
        \[
        \mathbb{E}[X_{\max}^{(n)}] 
            = n\ln n + n\mathbb{E}[Y_{\max}^{(n)}] 
            = n\left(\ln(mn) + \gamma  \pm o(1)\right) \;.
        \]
    \end{proof}

\section{Proof of \Cref{thm:multi-copy-no-redun}}
    The proof of  \Cref{thm:multi-copy-no-redun} follows the same general strategy as that of \Cref{thm:single-copy-no-redun}. Specifically, we combine convergence in distribution with uniform integrability to establish the convergence of expectations of a sequence of normalized random variables and then infer the expectation of the target random variable. However, the analysis in this case is more involved.  

    We shall first establish notations. Fix $\ell\in \mathbb{N}$. For every $j\in [m]$, let $\calX_j = \{X_j^{(n,\ell)}\}_{n\geq 2}$ be a sequence of random variables where $X_j^{(n,\ell)} \sim \CCP(n, \ell)$. As before, we assume that for any fixed $n\in \Bbb{N}$, we have that $X_1^{(n,\ell)},X_2^{(n,\ell)},\ldots,X_m^{(n,\ell)}$ are i.i.d., and we also denote $X_{\max}^{(n,\ell)} \triangleq \max_{j\in [m]}X_j^{(n,\ell)}$.
    For every $j\in [m]$, define $Y_j^{(n,\ell)} \triangleq \frac{X_j^{(n,\ell)}}{n}-\ln n-(\ell-1)\ln \ln n$ and, as before, $\calY_j \triangleq \{Y_j^{(n,\ell)} \}_{n\geq 2}$. Finally, define $\calY_{\max} = \{Y_{\max}^{(n,\ell)} \}_{n\geq 2}$ where $Y_{\max}^{(n,\ell)} = \max_{j\in [m]}Y_j^{(n,\ell)}$. Recall that we assume that $m$ and $\ell$ are constants with respect to $n$.

    % XXXXXX\\\\\\
    % For the rest of the proof let $X_1^{(n,\ell)},X_2^{(n,\ell)},\ldots,X_m^{(n,\ell)}$ be i.i.d. random variables where $X_i^{(n,\ell)}\sim \text{CCP}(n,\ell)$ and for every $i\in [n]$, define $Y_i^{(n,\ell)} \triangleq \frac{X_i^{(n,\ell)}}{n}-\ln n-(\ell-1)\ln \ln n$. \al{for some reason it bothers me we sometimes use $\ln n $ and sometimes $\ln n$, which one is better?}
    % Denote $X_{\max}^{(n,\ell)}\triangleq \max_{i\in [m]} X_i^{(n,\ell)}$ and $Y_{\max}^{(n,\ell)}\triangleq \max_{i\in [m]}Y_i^{(n,\ell)}$.

    The next theorem, due to Erd\H{o}s and R\'enyi \cite{erdHos1961classical}, extends the limit theorem for the classical coupon collector problem (given in \Cref{thm:gambel-conv-in-dist}), to the version of the coupon collector problem where one seeks $\ell$ copies of each coupon.
    \begin{theorem}[\cite{erdHos1961classical} ]\label{thm:ccp-l-set-dist}
        Let $X^{(n,\ell)}\sim \text{CCP}(n,\ell)$, then 
        \begin{displaymath}
            \lim_{n\to\infty} \Pr\left[\frac{X^{(n,\ell)}}{n}<\ln n + (\ell-1)\ln\ln n +x \right] = \exp\left(-\frac{e^{-x}}{(\ell-1)!}\right)\;.
        \end{displaymath}
        Equivalently, for  $Y^{(n,\ell)}\triangleq \frac{X^{(n,\ell)}}{n} -\ln n - (\ell-1)\ln\ln n$, we have that $Y^{(n,\ell)}\overset{\mathcal{D}}{\to } \text{Gumbel}(-\ln((\ell-1)!), 1)$\,.
    \end{theorem}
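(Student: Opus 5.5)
The plan is to prove the limit by the method of factorial moments (Poisson approximation) for the number of under-collected coupons. Fix $x\in\mathbb{R}$ and set $t = t_n \triangleq \lfloor n(\ln n + (\ell-1)\ln\ln n + x)\rfloor$. Let $W = W^{(t)} = \sum_{i=1}^n J_i$, where $J_i$ is the indicator that coupon $i$ appears fewer than $\ell$ times among the first $t$ draws. Then $\Pr[X^{(n,\ell)} \le t] = \Pr[W=0]$. We will show $W \overset{\mathcal{D}}{\to} \mathrm{Poisson}(\lambda)$ with $\lambda = e^{-x}/(\ell-1)!$, which gives $\Pr[W=0]\to e^{-\lambda}$.

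Replacing $\le t$ by $<$, and dropping the floor, changes the threshold by at most one draw, i.e.\ by $1/n$ in the normalized scale. Since the limit $\exp(-e^{-x}/(\ell-1)!)$ is continuous in $x$, this is harmless: sandwich between $x\pm\varepsilon$ and let $\varepsilon \to 0$.

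\textbf{First moment.} Each coupon count is $\mathrm{Bin}(t,1/n)$, so
\[
\mathbb{E}[W] = n\sum_{a=0}^{\ell-1}\binom{t}{a}n^{-a}\left(1-\tfrac1n\right)^{t-a}.
\]
Because $t/n \sim \ln n\to\infty$, the $a=\ell-1$ term dominates and the others are smaller by a factor $O(1/\ln n)$. For that term, $\binom{t}{\ell-1}n^{-(\ell-1)} = (1+o(1))(t/n)^{\ell-1}/(\ell-1)!$. Also $(1-1/n)^{t} = e^{-t/n + O(t/n^2)} = (1+o(1))e^{-t/n}$, since $t/n^2\to0$. Substituting $e^{-t/n} = n^{-1}(\ln n)^{-(\ell-1)}e^{-x}$ and $(t/n)^{\ell-1} = (1+o(1))(\ln n)^{\ell-1}$ gives $\mathbb{E}[W]\to \lambda$.

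\textbf{Higher factorial moments.} For fixed $k$,
\[
\mathbb{E}\left[\binom{W}{k}\right] = \binom{n}{k}\Pr[J_1=\cdots=J_k=1].
\]
The joint probability is a multinomial sum:
\[
\sum_{a_1,\ldots,a_k\le \ell-1}\frac{t!}{a_1!\cdots a_k!\,(t-\sum a_i)!}\,n^{-\sum a_i}\left(1-\tfrac kn\right)^{t-\sum a_i}.
\]
As before, the term with all $a_i=\ell-1$ dominates. Using $(1-k/n)^t = (1+o(1))e^{-kt/n}$ (again because $k^2t/n^2\to0$), this term equals $(1+o(1))\bigl[(t/n)^{\ell-1}e^{-t/n}/(\ell-1)!\bigr]^k$. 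Hence $\mathbb{E}[\binom{W}{k}] \to \lambda^k/k!$ for every $k$. Since the Poisson law is determined by its moments, the method of moments (equivalently Brun's sieve / Bonferroni inequalities applied to $\Pr[W=0]=\sum_k(-1)^k\mathbb{E}[\binom Wk]$) yields $\Pr[W=0]\to e^{-\lambda}$. This is exactly the first claimed limit.

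\textbf{Main obstacle.} The step I expect to need the most care is the uniform control of the multinomial sum for $k$ coupons. One must justify that the non-dominant terms, i.e.\ those with some $a_i<\ell-1$, are each $O(1/\ln n)$ relative to the main term. One must also check that the error factors $t!/(t-\sum a_i)!\,t^{-\sum a_i} = 1+o(1)$ and $(1-k/n)^{-\sum a_i} = 1+o(1)$ hold. All of this works because $k$ and $\ell$ are fixed and $t = \Theta(n\ln n)$.

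\textbf{Gumbel reformulation.} Write
\[
\exp\!\left(-\frac{e^{-x}}{(\ell-1)!}\right) = \exp\!\left(-e^{-(x+\ln((\ell-1)!))}\right),
\]
which is the CDF of $\Gumbel(-\ln((\ell-1)!),1)$ from \Cref{def:gumbel-dist} with $\mu=-\ln((\ell-1)!)$ and $\beta=1$. By \Cref{def:conv-in-dist}, this gives $Y^{(n,\ell)}\overset{\mathcal{D}}{\to}\Gumbel(-\ln((\ell-1)!),1)$.
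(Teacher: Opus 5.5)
Your argument is correct, and the paper contains no proof to compare it against: it imports this statement as a black box from Erd\H{o}s and R\'enyi \cite{erdHos1961classical} and uses it only, via \Cref{lem:lim-of-max-conv-in-dist}, to obtain \Cref{cor:multi-set-max-conv-in-dist}. What you give is a self-contained proof, and it follows the same template the paper uses for its own new limit law, \Cref{thm:mds-dist-conv}. There, \Cref{lem:intersection-asym} shows $U_k(n)=\binom{n}{k}\Pr[E_1(t)\cap\cdots\cap E_k(t)]\to\psi(x)^k/k!$ for each fixed $k$, and Bonferroni's inequalities then give $\Pr[\bigcup_i E_i(t)]\to 1-e^{-\psi(x)}$. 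Your $\mathbb{E}[\binom{W}{k}]$ is exactly the analogue of $U_k(n)$ for the events $\{J_i=1\}$, and your Brun-sieve step is the same Bonferroni sandwich.

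The real difference between the two computations is where the dominant term comes from. In \Cref{lem:intersection-asym}, the leading configurations are those where each row has exactly $m-1$ nonzero entries. Each nonzero entry has probability $p_t\to 1$, so no correction to $\ln n$ is needed. In your setting, the leading term $a_i=\ell-1$ carries the factor $(t/n)^{\ell-1}\sim(\ln n)^{\ell-1}$. This factor forces the $(\ell-1)\ln\ln n$ shift in the normalization and produces the $1/(\ell-1)!$.

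Citing keeps the paper short. Your proof makes the section self-contained, at the cost of the multinomial bookkeeping you flagged. That bookkeeping goes through because $k$ and $\ell$ are fixed and $t=\Theta(n\ln n)$: each unit decrease of some $a_i$ costs a factor $a_i n(1-k/n)/(t-\sum_j a_j+1)=O(1/\ln n)$. Your treatment of the strict inequality and the floor, and your passage to convergence in distribution via continuity of the limit, are also fine.
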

    \Cref{thm:ccp-l-set-dist}, alongside \Cref{lem:lim-of-max-conv-in-dist}, lead to the following corollary. 
    \begin{corollary}\label{cor:multi-set-max-conv-in-dist}
        It holds that
        \begin{align*}
            \calY_{\max}\overset{\mathcal{D}}{\to } \Gumbel\left(\ln\left(\frac{m}{(\ell-1)!}\right), 1\right).
        \end{align*}
    \end{corollary}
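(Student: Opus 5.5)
This follows directly from \Cref{thm:ccp-l-set-dist} combined with \Cref{lem:lim-of-max-conv-in-dist}, applied to the normalized sequences $\calY_1,\ldots,\calY_m$.

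First, fix $j\in[m]$. By construction $X_j^{(n,\ell)}\sim\CCP(n,\ell)$, so $Y_j^{(n,\ell)}=\frac{X_j^{(n,\ell)}}{n}-\ln n-(\ell-1)\ln\ln n$ is exactly the normalized variable in \Cref{thm:ccp-l-set-dist}. Hence $\calY_j\overset{\mathcal{D}}{\to}\Gumbel(-\ln((\ell-1)!),1)$. One small point needs checking. \Cref{thm:ccp-l-set-dist} is stated with a strict inequality $<$, while \Cref{def:conv-in-dist} uses the CDF $\Pr[Y\le x]$. Since the limit function $\exp(-e^{-x}/(\ell-1)!)$ is continuous, we can sandwich: for any $\epsilon>0$,
\[
\Pr[Y<x]\le\Pr[Y\le x]\le\Pr[Y<x+\epsilon].
\]
Taking $n\to\infty$ and then $\epsilon\to 0$ shows the limit of $\Pr[Y\le x]$ is the same. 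Finally, $\exp(-e^{-x}/(\ell-1)!)=\exp(-e^{-(x+\ln((\ell-1)!))})$, so the limit is the $\Gumbel(\mu,1)$ CDF with $\mu=-\ln((\ell-1)!)$.

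Second, for each fixed $n$ the variables $Y_1^{(n,\ell)},\ldots,Y_m^{(n,\ell)}$ are i.i.d., because they are the same deterministic affine function of the i.i.d. variables $X_j^{(n,\ell)}$. All $m$ sequences converge to the same $\Gumbel(\mu,1)$. So \Cref{lem:lim-of-max-conv-in-dist} applies with $\beta=1$ and gives
\[
\calY_{\max}\overset{\mathcal{D}}{\to}\Gumbel(\ln m-\ln((\ell-1)!),1)=\Gumbel\left(\ln\tfrac{m}{(\ell-1)!},1\right).
\]

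I do not expect any real obstacle. The only step requiring care is the strict versus non-strict inequality bookkeeping above, which is handled by continuity of the limiting CDF.
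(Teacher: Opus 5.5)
Your proposal is correct and follows the paper's proof: apply \Cref{thm:ccp-l-set-dist} to each $\calY_j$ to get convergence to $\Gumbel(-\ln((\ell-1)!),1)$, then apply \Cref{lem:lim-of-max-conv-in-dist} with $\beta=1$. Your sandwich argument for the strict versus non-strict inequality is a valid detail that the paper leaves implicit.
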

    \begin{proof}
        Note that for all $n\in \Bbb{N}$, and every $j\in [m]$ we have that $nY_j^{(n,\ell)} + n\ln n + n\left(\ell-1\right)\ln \ln n \sim \CCP(n,\ell)$. By \Cref{thm:ccp-l-set-dist} and \Cref{def:conv-in-dist}, for any $j\in [m]$, $\calY_j\overset{\mathcal{D}}{\to } \text{Gumbel}(-\ln((\ell-1)!), 1)$.  Applying \Cref{lem:lim-of-max-conv-in-dist} with $\calY_1,\ldots,\calY_m$ proves the claim.
    \end{proof}

    As in the previous section, in order to prove that $ \calY_{\max}$ is uniformly integrable, we will show that $\Pr\left[|Y_{\max}^{(n, \ell)}| >x\right]$ can be exponentially bounded and that the bound does not depend on $n$.
     \begin{claim}\label{clm:l-sets-right-tail}
        There exists $C_0=C_0(\ell,m)$, such that for any $x\geq0$,
        \begin{align*}
            \Pr\left[Y_{\max}^{(n, \ell)} >x\right] \leq C_0 e^{-\frac{x}{2}}.
        \end{align*}
    \end{claim}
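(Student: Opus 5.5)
We bound the right tail by a union bound over collectors and coupons, as in \Cref{clm:y-max-bound}. Set $t \triangleq n(\ln n + (\ell-1)\ln\ln n + x)$. The event $Y_{\max}^{(n,\ell)} > x$ means that at time $t$ some collector $j\in[m]$ has some coupon $i\in[n]$ seen at most $\ell-1$ times. Hence
\[
\Pr\left[Y_{\max}^{(n,\ell)} > x\right] \leq mn \cdot \Pr\left[\mathrm{Bin}(t,1/n) \leq \ell-1\right] = mn\sum_{k=0}^{\ell-1}\binom{t}{k}\left(\frac1n\right)^k\left(1-\frac1n\right)^{t-k}.
\]
We then estimate each term. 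For $0\le k\le \ell-1$ we have $\binom{t}{k}n^{-k}\le (t/n)^k$ and $(1-1/n)^{t-k}\le (1-1/n)^{-(\ell-1)}e^{-t/n}\le 2^{\ell-1}e^{-t/n}$ (using $n\ge 2$). Since $e^{-t/n} = \frac{e^{-x}}{n(\ln n)^{\ell-1}}$, this gives
\[
\Pr\left[Y_{\max}^{(n,\ell)} > x\right] \le m2^{\ell-1}e^{-x}\sum_{k=0}^{\ell-1}\frac{(\ln n + (\ell-1)\ln\ln n + x)^k}{(\ln n)^{\ell-1}} .
\]

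The main obstacle is that the numerator grows polynomially in $x$ and must be controlled uniformly in $n$. We handle it with $(a+b)^k\le 2^{k}(a^k+b^k)$, where $a = \ln n+(\ell-1)\ln\ln n$ and $b=x$, assuming $\ln\ln n\ge 0$.
\begin{itemize}
\item Since $a\le \ell\ln n$ for $n$ large, $a^k/(\ln n)^{\ell-1}\le \ell^{\ell}$ for all $k\le\ell-1$.
\item For the $x^k$ part, we use $x^k e^{-x/2}\le (2k/e)^k$. This gives $x^k e^{-x}\le c_\ell e^{-x/2}$, which is exactly why the claim only asks for rate $e^{-x/2}$. The denominator $(\ln n)^{\ell-1}$ is at least a positive constant once $n\ge 3$.
\end{itemize}
Collecting these bounds yields $C_1(\ell,m)e^{-x/2}$ for all $n\ge n_0(\ell)$.

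Small $n$ need separate care: $\ln\ln 2<0$, and for a finite range of small $n$ the normalization is odd. For $2\le n< n_0$ we redo the union bound with the crude estimate $\Pr[\mathrm{Bin}(t,1/n)\le \ell-1]\le \ell\, t^{\ell-1}(1-1/n)^{t-\ell}$. Since $n$ ranges over a finite set and $t$ is linear in $x$, this is at most $C_2(\ell,m)e^{-x/2}$; here we use $(1-1/n)^{t}\le e^{-t/n}$ and absorb the polynomial factor into $e^{-x/2}$. Taking $C_0=\max(C_1,C_2)$ finishes the claim. The delicate point is keeping every constant independent of $n$ while the polynomial-in-$(\ln n + x)$ factor is absorbed; the halving of the exponent is what makes this possible.
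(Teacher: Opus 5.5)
Your proof is correct and follows essentially the same route as the paper: a union bound over the $nm$ collector--coupon pairs, the binomial tail $\Pr[\mathrm{Bin}(t,1/n)\le \ell-1]$, the identity $ne^{-t/n}=e^{-x}/(\ln n)^{\ell-1}$, and absorbing the polynomial-in-$x$ factor into $e^{-x/2}$. The differences are only technical: the paper bounds the sum by $\ell$ times its last term, uses $t/(n\ln n)<\ell+2x$, and falls back on the trivial bound for $x\le x_0(\ell)$. You instead split $(a+x)^k$ term by term and handle small $n$ separately (where $\ln\ln n<0$ at $n=2$), which is slightly more careful than the paper's treatment.
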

    \begin{proof}
        Denote  $t \triangleq n(\ln n + (\ell-1)\ln\ln n + x)$. We have that 
        \begin{align*}
            \Pr\left[Y_{\max}^{(n, \ell)}> x\right] 
            &= \Pr\left[X_{\max}^{(n, \ell)}> t \right] \\
            &= \Pr\left[\text{some coupon was collected \textbf{less} than $\ell$ times at time } t\right] \\
            &\leq nm\cdot \Pr\left[\text{a specific coupon was collected \textbf{less} than $\ell$ times at time } t\right] \\
            &= nm\cdot \Pr\left[\text{Bin}\left(t, \frac{1}{n}\right) \leq \ell-1\right] \\
            &= nm\cdot \sum_{r=0}^{\ell-1}\binom{t}{r}\left(\frac{1}{n}\right)^r\left(1-\frac{1}{n}\right)^{t-r} \\
            &= nm\left(1-\frac{1}{n}\right)^{t}\cdot \sum_{r=0}^{\ell-1}\binom{t}{r}\left(\frac{1}{n}\right)^r\left(1-\frac{1}{n}\right)^{-r} \\
            &= nm\left(1-\frac{1}{n}\right)^{t}\cdot \sum_{r=0}^{\ell-1}\binom{t}{r}\left(\frac{1}{n-1}\right)^r\\
            &\leq nm\left(1-\frac{1}{n}\right)^{t}\cdot \sum_{r=0}^{\ell-1}\left(\frac{t}{n-1}\right)^r \\
            %&\leq nm\left(1-\frac{1}{n}\right)^{t}\cdot 2^\ell \ell \left(\frac{t}{n}\right)^{\ell-1} \\
            &\leq 2^\ell m \ell \cdot n e^{-\frac{t}{n}}\left(\frac{t}{n}\right)^{\ell-1}\;,
        \end{align*}
        where the first inequality is obtained via a union bound, the second inequality applies the bound $\binom{a}{b} < a^b$, and the third inequality uses the bound $(1 - 1/n)^n \leq e^{-1}$ together with the fact that the final term in the sum is the largest one.
        For any $\ell$, there exists $x_0(\ell)\geq \ell$ such that for any $x>x_0(\ell)$, $x^{\ell-1}\leq e^\frac{x}{2}$. Hence, for any $x>x_0(\ell)$,
        % Note that
        \begin{align*}
            n e^{-\frac{t}{n}}\left(\frac{t}{n}\right)^{\ell-1}
            %= n e^{-(\ln n + (\ell-1)\ln\ln n + x)}\left(\frac{t}{n}\right)^{\ell-1} 
            = e^{-x}\left(\frac{t}{n\ln n}\right)^{\ell-1} 
            \leq e^{-x}\left(\ell+2x\right)^{\ell-1}
            % \leq e^{-x}(3x)^{\ell-1} 
            \leq 3^{\ell-1}e^{-\frac{x}{2}},
            % \leq e^{-\Omega_{\ell}(x)} \;,
        \end{align*}
        where the first inequality holds since $\frac{t}{n\ln n} = 1 + \frac{(\ell-1)\ln\ln n}{\ln n}  + \frac{x}{\ln n} < \ell + 2x$, and the second inequality holds by the requirements on $x_0(\ell)$. 
        Put it all together,
        %for any $x>x_0(\ell)$,
        \begin{align*}
            \Pr\left[Y_{\max}^{(n, \ell)}> x\right] 
            \leq 2^\ell m \ell \cdot ne^{-\frac{t}{n}}\left(\frac{t}{n}\right)^{\ell-1}
            \leq 2^\ell 3^{\ell-1} m \ell e^{-\frac{x}{2}}.
            % \leq 2^{\ell}m \ell e^{-\Omega_{\ell}(x)} = e^{-\Omega_{m,\ell} (x)} \;.
        \end{align*} 
        In addition, for any $0\leq x\leq x_0(\ell)$, $\Pr[Y_{\max}^{(n, \ell)}> x]\leq 1 \leq e^{x_0(\ell)}e^{-\frac{x}{2}}$. Denote $C_0(\ell,m) \triangleq \max\left\{e^{x_0(\ell)}, 3^{2\ell-1} m \ell\right\}$. Then, $\Pr[Y_{\max}^{(n, \ell)}> x]  \leq C_0(\ell,m) e^{-\frac{x}{2}}$ for any $x\geq 0$ as required.
    \end{proof}

    To bound the left tail, namely, to bound $\Pr\left[Y_{\max}^{(n, \ell)}<- x\right]$, we will need the notion of negatively associated random variables.
    \begin{definition}[Negative association] \label{def:neg-assoc}
        A collection $X_1,X_2,\dots,X_n$ of random variables is called \emph{negatively associated} (NA) if for every two disjoint index sets, $I,J\subseteq [n]$, and every $f:\mathbb{R}^{|I|}\to \mathbb{R}$ and $g:\mathbb{R}^{|J|}\to \mathbb{R}$ that are both non-decreasing or both non-increasing, we have
        \begin{displaymath}
            \mathbb{E}\left[f\left(X_i, i\in I\right)g\left(X_j,j\in J\right)\right]\leq \mathbb{E}\left[f\left(X_i, i\in I\right)\right] \mathbb{E}\left[g\left(X_j,j\in J\right)\right] \;.
        \end{displaymath}
        Equivalently, $\Cov(f(X_i, i\in I), g(X_j,j\in J))\leq 0$.
    \end{definition}
    
    \begin{theorem}[Theorem 13, \cite{dubhashi1996balls}]\label{thm:neg-assoc}
        Suppose we throw $m$ balls into $n$ bins independently at random. For $i\in [n]$, let $B_i$ be the random variable denoting the number of balls in the $i$th bin. 
        Then, the variables $B_1, \ldots, B_n$ are NA.
    \end{theorem}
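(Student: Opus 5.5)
The plan follows Dubhashi and Ranjan: decompose each bin count into per-ball indicators, show that the indicators of a single ball are negatively associated (NA), and then invoke two closure properties of NA to pass to the bin counts $B_1,\ldots,B_n$. For a ball $k\in[m]$ and a bin $i\in[n]$, let $Z_{k,i}$ be the indicator that ball $k$ lands in bin $i$. Then $B_i=\sum_{k=1}^m Z_{k,i}$. For each fixed $k$, the vector $(Z_{k,1},\ldots,Z_{k,n})$ is a $0/1$ vector with exactly one entry equal to $1$, and the vectors for different balls are independent.

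\textbf{Step 1 (zero-one lemma).} We show that $0/1$ variables $Z_1,\ldots,Z_n$ with $\sum_i Z_i=1$ are NA. Take disjoint $I,J$ and functions $f,g$ that are both non-decreasing; the non-increasing case follows by replacing $f,g$ with $-f,-g$.

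\begin{itemize}
\item Subtracting constants does not change $\Cov(f,g)$, so we may assume $f(\mathbf 0)=g(\mathbf 0)=0$.
\item By monotonicity, $f\ge 0$ and $g\ge 0$ on all $0/1$ inputs.
\item Since only one coordinate overall equals $1$, at least one of $(Z_i)_{i\in I}$ and $(Z_j)_{j\in J}$ is identically zero. Hence $f\cdot g=0$ pointwise.
\end{itemize}

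Therefore $\mathbb{E}[fg]=0\le \mathbb{E}[f]\,\mathbb{E}[g]$, which is the NA inequality.

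\textbf{Step 2 (union of independent NA families).} If $\mathbf X$ and $\mathbf Y$ are independent and each is NA, then their concatenation is NA. Let $f$ depend on $(\mathbf X_{I_1},\mathbf Y_{I_2})$ and $g$ on $(\mathbf X_{J_1},\mathbf Y_{J_2})$, with the index sets disjoint in the appropriate sense.

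\begin{itemize}
\item Condition on $\mathbf Y$. For a fixed value of $\mathbf Y$, $f$ and $g$ are monotone functions of disjoint coordinates of $\mathbf X$. By NA of $\mathbf X$, $\mathbb{E}[fg\mid \mathbf Y]\le \mathbb{E}[f\mid\mathbf Y]\,\mathbb{E}[g\mid\mathbf Y]$.
\item By independence, $\mathbb{E}[f\mid\mathbf Y]$ and $\mathbb{E}[g\mid\mathbf Y]$ are monotone functions of the disjoint coordinate sets $\mathbf Y_{I_2}$ and $\mathbf Y_{J_2}$, in the same direction.
\item Applying NA of $\mathbf Y$ and taking expectations gives $\mathbb{E}[fg]\le\mathbb{E}[f]\,\mathbb{E}[g]$.
\end{itemize}

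Iterating over the balls shows that the whole family $\{Z_{k,i}\}$ is NA.

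\textbf{Step 3 (monotone aggregation).} If $\mathbf X$ is NA and $h_1,\ldots,h_n$ are non-decreasing functions of pairwise disjoint coordinate blocks, then $(h_1,\ldots,h_n)$ is NA. This holds because composing monotone $f,g$ with the $h_i$ yields monotone functions of disjoint coordinate sets of $\mathbf X$. Taking $h_i=\sum_k Z_{k,i}$, which uses the blocks $\{(k,i):k\in[m]\}$ that are disjoint across $i$, gives that $B_1,\ldots,B_n$ are NA.

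The only genuinely nontrivial step is Step 2, specifically checking that the conditional expectations are still monotone and depend only on the correct disjoint coordinates. Independence of $\mathbf X$ and $\mathbf Y$ is exactly what makes this work. Step 1 reduces to the observation that $fg\equiv 0$ after normalization.
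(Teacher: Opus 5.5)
Your proof is correct. The paper gives no proof of its own and simply imports the result as Theorem 13 of Dubhashi--Ranjan; your argument (the zero-one lemma for a single ball, closure of NA under unions of independent families, and closure under non-decreasing functions of disjoint blocks) is the standard proof from that source. Your Steps 2 and 3 are exactly properties P$_7$ and P$_6$, which the paper quotes as \Cref{prop:na_p7} and \Cref{prop:na_p6} and applies in the same way in the proof of \Cref{clm:mds-left-tail}.
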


    \begin{claim}
    \label{clm:l-sets-left-tail}
        There exists $C_1=C_1(\ell)$, such that for any $x\geq 0$,
        \begin{align*}
            \Pr\left[Y_{\max}^{(n, \ell)}\leq - x\right] \leq C_1e^{-\frac{x}{2}}.
        \end{align*}
    \end{claim}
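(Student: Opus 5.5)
First reduce to a single collector. Since $Y_{\max}^{(n,\ell)}\le -x$ requires every $Y_j^{(n,\ell)}\le -x$, independence gives
\[
\Pr\left[Y_{\max}^{(n,\ell)}\le -x\right]=\Pr\left[Y_1^{(n,\ell)}\le -x\right]^m\le \Pr\left[Y_1^{(n,\ell)}\le -x\right].
\]
So it suffices to bound the left tail of one $\CCP(n,\ell)$ variable. Set $t\triangleq n(\ln n+(\ell-1)\ln\ln n - x)$. If $t<n\ell$, then collecting $\ell$ copies of all $n$ coupons by time $t$ is impossible, and the probability is $0$. Hence assume $t\ge n\ell$.

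Throw $t$ balls into $n$ bins with loads $B_1,\dots,B_n$, and let $I_i\triangleq\mathds{1}_{B_i\le \ell-1}$. Then
\[
\Pr\left[X_1^{(n,\ell)}\le t\right]=\Pr\left[W=0\right],\qquad W\triangleq\sum_i I_i .
\]
The plan mirrors \Cref{clm:y-max-bound}: use Chebyshev, $\Pr[W=0]\le \Var(W)/\mathbb{E}[W]^2$, and show $\Var(W)\le \mathbb{E}[W]$.

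For the variance, each $I_i$ is a non-increasing function of $B_i$. By \Cref{thm:neg-assoc} and \Cref{def:neg-assoc}, applied with $f=\mathds{1}_{B_i\le \ell-1}$ and $g=\mathds{1}_{B_j\le \ell-1}$ (both non-increasing), we get $\Cov(I_i,I_j)\le 0$ for $i\ne j$. Therefore $\Var(W)\le\sum_i\Var(I_i)\le \mathbb{E}[W]$, and so $\Pr[W=0]\le 1/\mathbb{E}[W]$.

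The remaining and main step is a lower bound of the form $\mathbb{E}[W]\ge c(\ell)e^{x/2}$, uniform in $n$. Keep only the $r=\ell-1$ term:
\[
\mathbb{E}[W]\ge n\binom{t}{\ell-1}n^{-(\ell-1)}\left(1-\tfrac1n\right)^{t}.
\]
Since $t\ge n\ell\ge 2(\ell-1)$, we have $\binom{t}{\ell-1}\ge \left(t/(\ell-1)\right)^{\ell-1}/2^{\ell-1}$, roughly. Also $(1-1/n)^t\ge e^{-t/n}e^{-t/n^2}$, and $t/n^2\le 1$ for large $n$; small $n$ is absorbed into the constant. Combining,
\[
\mathbb{E}[W]\gtrsim c_\ell\, n\,(t/n)^{\ell-1}e^{-t/n}=c_\ell\, e^{x}\left(\frac{t/n}{\ln n}\right)^{\ell-1}.
\]

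The obstacle is the factor $\left(\frac{t/n}{\ln n}\right)^{\ell-1}$, which can be small when $x$ is comparable to $\ln n$. It is handled by splitting into two cases.

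\textbf{Case $x\le \tfrac12\ln n$.} Here $t/n\ge \tfrac12\ln n$, so the factor is at least $2^{-(\ell-1)}$, giving $\mathbb{E}[W]\ge c'_\ell e^{x}\ge c'_\ell e^{x/2}$.

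\textbf{Case $x>\tfrac12\ln n$.} Here use $t/n\ge \ell$ (from $t\ge n\ell$). Then
\[
\mathbb{E}[W]\ge c_\ell\, e^{x}(\ln n)^{-(\ell-1)}\ge c_\ell\, e^{x}(2x)^{-(\ell-1)}\ge c''_\ell e^{x/2},
\]
where the last step uses $x^{\ell-1}\le e^{x/2}$ for $x\ge x_0(\ell)$, exactly as in \Cref{clm:l-sets-right-tail}.

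For bounded $x\le x_0(\ell)$, the trivial bound $\Pr[\cdot]\le 1\le e^{x_0(\ell)/2}e^{-x/2}$ suffices. Taking $C_1(\ell)$ as the maximum of the resulting constants yields $\Pr[Y_{\max}^{(n,\ell)}\le -x]\le C_1 e^{-x/2}$.
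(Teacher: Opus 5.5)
Your proof is correct and follows the paper's argument. Both reduce to one collector, discard the trivial range $t<n\ell$, use negative association to get $\Var(W)\le\mathbb{E}[W]$ and hence $\Pr[W=0]\le 1/\mathbb{E}[W]$, and then lower-bound $\mathbb{E}[W]$ by $c(\ell)e^{x/2}$ through a two-case analysis. The only difference is where the cases split. The paper splits at $x=2(\ell-1)\ln\ln n$, so that for large $x$ the $r=0$ term alone gives $e^{x-(\ell-1)\ln\ln n}\ge e^{x/2}$, at the price of needing $n>n_1(\ell)$ in the small-$x$ case. Your split at $x=\tfrac12\ln n$ makes the small-$x$ case work for all $n\ge 3$, but requires the extra step $(\ln n)^{\ell-1}<(2x)^{\ell-1}\le 2^{\ell-1}e^{x/2}$ for $x\ge x_0(\ell)$ in the large-$x$ case.
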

    \begin{proof}
        Similarly to the proof of \Cref{clm:y-max-bound}, 
        \begin{align*}
            \Pr\left[Y_{\max}^{(n,\ell)}\leq- x\right] =\Pr\left[Y_1^{(n, \ell)}\leq- x\right]^m
            \leq \Pr\left[Y_1^{(n, \ell)}\leq -x\right]
            = \Pr\left[X_1^{(n,\ell)} \leq n\left(\ln n + \left(\ell-1\right)\ln\ln n  - x\right)\right].
        \end{align*}
        Fix $x$, and define $t\triangleq n(\ln n + (\ell-1)\ln\ln n  - x)$. If $x> \ln n + (\ell-1)\ln\ln n -\ell$, then $t < n\ell$. Hence $\Pr[X_1^{(n, \ell)}<t] = 0$, and in particular $\Pr[Y_{\max}^{(n, \ell)}<- x] = 0 \leq Ce^{-cx}$ for any $C,c>0$. Therefore, for the rest of the proof, we assume that $0\leq x\leq \ln n + (\ell-1)\ln\ln n -\ell$.

        Let $B_j^{(t)}$ be the number of copies of coupon $j$ after the first $t$ draws of one collector, and define $I^{(t)}_j \triangleq \mathds{1}_{B_j^{(t)} <\ell}$. Then $I^{(t)}_j$ is the indicator that coupon $j$ has less than $\ell$ copies after the first $t$ draws, and $W^{(t)}\triangleq \sum_{j=1}^n I^{(t)}_j$ is the number of such coupons. Hence,
        \begin{align*}
            \Pr\left[X_1^{(n, \ell)} \leq n\left(\ln n + (\ell-1)\ln\ln n - x\right)\right] 
            = \Pr\left[X_1^{(n, \ell)} \leq t\right] 
            = \Pr\left[W^{(t)}=0\right].
        \end{align*}
        Analogously to the proof of \Cref{clm:y-max-bound}, we now derive bounds on the expectation and variance of $W^{(t)}$ in order to obtain an upper bound on $\Pr[W^{(t)}=0]$.
        Note that the variables $B_1^{(t)}, \ldots, B_n^{(t)}$ correspond exactly to the balls and bins paradigm (with $t$ balls and $n$ bins), and thus, they are NA according to \Cref{thm:neg-assoc}. Moreover, observe that the indicator function $f(B)=\mathds{1}_{B <\ell}$ is non-increasing, thus for any $i\ne j$, $\Cov(I^{(t)}_i,I^{(t)}_j) \leq 0$. 
        Therefore,
        %By \Cref{thm:neg-assoc}, $\left\{B_j\right\}$ are NA and the indicator function $f(B)=\mathds{1}_{B <\ell}$ is non-increasing, thus for any $i\ne j$, $\Cov(I^{(t)}_i,I^{(t)}_j) \leq 0$, and
        \begin{align*}
            \Var\left(W^{(t)}\right) 
            &= \sum_{j\in [n]} \Var\left(I_j^{(t)}\right) + 2\sum_{1\leq j<i\leq n} \Cov\left(I_j^{(t)},I_i^{(t)}\right) \\
            & \leq \sum_{j\in [n]} \Var\left(I_j^{(t)}\right)
            \leq \sum_{j\in [n]}\mathbb{E}\left[I_j^{(t)}\right] 
            = \mathbb{E}\left[W^{(t)}\right],
        \end{align*}
        where the last inequality holds again since for any indicator $I$, $\Var(I)\leq \mathbb{E}[I]$.
        Moreover, 
        \begin{align*}
            \mathbb{E}\left[W^{(t)}\right] &= \sum_{j\in [n]}\mathbb{E}\left[I_j^{(t)}\right] = n\sum_{r=0}^{\ell-1}\binom{t}{r}\left(\frac{1}{n}\right)^r\left(1-\frac{1}{n}\right)^{t-r} > 0.
        \end{align*}
        Hence, by the Chebyshev inequality,
        \begin{align*}
            \Pr\left[W^{(t)}=0\right] \leq \Pr\left[\left|W^{(t)}- \mathbb{E}\left[W^{(t)}\right]\right|\geq \mathbb{E}\left[W^{(t)}\right]\right] \leq \frac{\Var\left(W^{(t)}\right)}{\mathbb{E}\left[W^{(t)}\right]^2} \leq \frac{1}{\mathbb{E}\left[W^{(t)}\right]}.
        \end{align*}
        
        In addition, 
        \begin{align*}
            \mathbb{E}\left[W^{(t)}\right] 
            &= n\sum_{r=0}^{\ell-1}\binom{t}{r}\left(\frac{1}{n}\right)^r\left(1-\frac{1}{n}\right)^{t-r} \\
            &= n\left(1-\frac{1}{n}\right)^{t}\sum_{r=0}^{\ell-1}\binom{t}{r}\left(\frac{1}{n-1}\right)^r \\
            &\geq ne^{-\frac{t}{n-1}}\sum_{r=0}^{\ell-1}\binom{t}{r}\left(\frac{1}{n-1}\right)^r.
        \end{align*}
        Divide into cases: 
        \begin{enumerate}
            \item $2(\ell-1)\ln \ln n \leq x$:
            \begin{align*}
                ne^{-\frac{t}{n-1}}\sum_{r=0}^{\ell-1}\binom{t}{r}\left(\frac{1}{n-1}\right)^r 
                &\geq ne^{-\frac{t}{n-1}} 
                % = ne^{-\frac{n}{n-1}(\ln n + (\ell-1)\ln\ln n - x)} \\
                = n^{-\frac{1}{n-1}}e^{\frac{n}{n-1}( x-(\ell-1)\ln\ln n)} 
                \geq \frac{1}{2}e^{\frac{n}{n-1}\frac{x}{2}}
                \geq \frac{1}{2}e^{\frac{x}{2}},
            \end{align*}
            where the second inequality holds by the requirements on $x$ and the monotonicity of $n^{-1/(n-1)}$.
            \item $0\leq x \leq 2(\ell-1)\ln \ln n $: Let $n_1=n_1(\ell)$ be such that for any $ n>n_1$ :  
            \begin{enumerate}
                % \item $n\ln \ln n \geq 1 $
                \item \label{assump:a} $\ln n \geq 2(\ell-1)\ln \ln n$ 
                \item \label{assump:b} $\left(\ln n\right)^{-\frac{\ell-1}{n-1}} > \frac{1}{2}$
            \end{enumerate}
            In this case, we have
            % \begin{align*}
            %     t-\ell+1 &= n\left(\ln n + (\ell-1)\ln \ln n -x\right) - \ell +1 \\
            %     &\geq n\left(\ln n - (\ell-1)\ln \ln n \right) - (\ell -1) \\
            %     &\geq n\left(\ln n - 2(\ell-1)\ln \ln n \right) \\
            %     &\geq \frac{n}{2}\ln n,
            % \end{align*}
            % so,
            % \begin{align*}
            %     \sum_{r=0}^{\ell-1}\binom{t}{r}\left(\frac{1}{n-1}\right)^r 
            %     &\geq \binom{t}{\ell-1}\left(\frac{1}{n-1}\right)^{\ell-1} \\
            %     & \geq \frac{1}{(\ell-1)!}\left(\frac{t-\ell+1}{n-1}\right)^{\ell-1} \\
            %     &\geq \frac{1}{(\ell-1)!}\left(\frac{\frac{n}{2}\ln n}{n-1}\right)^{\ell-1} \\
            %     &\geq \frac{1}{(\ell-1)!2^{\ell-1}}\left(\ln n\right)^{\ell-1}.
            % \end{align*}
            \begin{align*}
                \sum_{r=0}^{\ell-1}\binom{t}{r}\left(\frac{1}{n-1}\right)^r 
                &\geq \binom{t}{\ell-1}\left(\frac{1}{n-1}\right)^{\ell-1} \\
                &\geq \frac{1}{(\ell-1)^{\ell-1}}\left(\frac{t}{n-1}\right)^{\ell-1} \\
                &= \frac{1}{(\ell-1)^{\ell-1}}\left(\frac{n(\ln n + (\ell-1)\ln\ln n  - x)}{n-1}\right)^{\ell-1} \\
                &\geq \frac{1}{(\ell-1)^{\ell-1}}\left(\ln n - (\ell-1)\ln\ln n\right)^{\ell-1} \\
                &\geq \frac{1}{(\ell-1)^{\ell-1}2^{\ell-1}}\left(\ln n\right)^{\ell-1} \;, \\
            \end{align*}
            where the first inequality just considers the last summand, the second inequality applies the bound $\binom{a}{b} \geq a^a/b^b$, and the last inequality is due to our assumption \ref{assump:a} on $n$.
            Thus,
            \begin{align*}
                ne^{-\frac{t}{n-1}}\sum_{r=0}^{\ell-1}\binom{t}{r}\left(\frac{1}{n-1}\right)^r 
                &\geq ne^{-\frac{t}{n-1}}\frac{1}{(\ell-1)^{\ell-1}2^{\ell-1}}\left(\ln n\right)^{\ell-1} \\
                &= n^{-\frac{1}{n-1}}e^{-\frac{n}{n-1}( (\ell-1)\ln\ln n-x )}\frac{1}{(\ell-1)^{\ell-1}2^{\ell-1}}\left(\ln n\right)^{\ell-1} \\
                &\geq \frac{1}{(\ell-1)^{\ell-1}2^{\ell}}e^x \left(\ln n\right)^{-\frac{\ell-1}{n-1}} \\
                &\geq \frac{1}{(\ell-1)^{\ell-1}2^{\ell+1}}e^x,
            \end{align*}
        
            where the last inequality holds due to assumption \ref{assump:b} on $n$.
        \end{enumerate}
        
        Therefore, for the cases where either $n>n_1$ or $n\leq n_1$ and $2(\ell-1)\ln \ln n\leq x$,
        \begin{align*}
            \mathbb{E}\left[W^{(t)}\right] 
            &\geq ne^{-\frac{t}{n-1}}\sum_{r=0}^{\ell-1}\binom{t}{r}\left(\frac{1}{n-1}\right)^r \\
            &\geq \min\left\{\frac{1}{2}e^{\frac{x}{2}}, \frac{1}{(\ell-1)^{\ell-1}2^{\ell+1}}e^x\right\} \\
            &\geq \frac{1}{(\ell-1)^{\ell-1}2^{\ell+1}}e^{\frac{x}{2}} \;,
        \end{align*}
        and therefore,
         \begin{align*}
             \Pr\left[Y_{\max}^{(n, \ell)}\leq- x\right] 
            &\leq \Pr\left[W^{(t)}=0\right] \\
            &\leq \frac{1}{\mathbb{E}\left[W^{(t)}\right]} \\
            &\leq (\ell-1)^{\ell-1}2^{\ell+1}e^{-\frac{x}{2}}.
         \end{align*}
        Lastly, for the case where $x
        \leq 2(\ell-1)\ln\ln n$ and $2\leq n\leq n_1$, it holds that
        $ x
        \leq 2(\ell-1)\ln\ln n 
        \leq 2(\ell-1)\ln\ln n_1$, and \begin{align*}
            \Pr\left[Y_{\max}^{(n, \ell)}\leq- x\right] 
        \leq 1 
        = \left(\ln n_1\right)^{(\ell-1)}e^{-(\ell-1)\ln\ln n_1} \leq \left(\ln n_1\right)^{(\ell-1)}e^{-\frac{x}{2}}.
        \end{align*}
        We conclude the proof of the lemma by noting that for $C_1(\ell)\triangleq\max\{(\ell-1)^{\ell-1}2^{\ell+1}, (\ln n_1)^{(\ell-1)}\}$ and any $x\geq 0, n\geq 2$, it holds that $\Pr\left[Y_{\max}^{(n, \ell)}<- x\right] \leq C_1(\ell)e^{-\frac{x}{2}}$ as required.
    \end{proof}

    \begin{corollary}\label{cor:multi-set-unif-int}
        The sequence $\left\{Y_{\max}^{(n,\ell)}\right\}_{n\geq2}$ is uniformly integrable.
    \end{corollary}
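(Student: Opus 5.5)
The plan is to combine the two tail bounds just established, \Cref{clm:l-sets-right-tail} and \Cref{clm:l-sets-left-tail}, and then invoke \Cref{lem:exp-tail-impl-unif-conv}. That lemma needs a single exponential bound of the form $\Pr[|Y_{\max}^{(n,\ell)}| > x] \le C e^{-cx}$. The bound must hold for every $x \ge 0$ and every $n \ge 2$, with constants independent of $n$.

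Concretely, for $x \ge 0$ we write
\[
\Pr\left[\left|Y_{\max}^{(n,\ell)}\right| > x\right] \le \Pr\left[Y_{\max}^{(n,\ell)} > x\right] + \Pr\left[Y_{\max}^{(n,\ell)} \le -x\right].
\]
The inclusion $\{Y_{\max}^{(n,\ell)} < -x\} \subseteq \{Y_{\max}^{(n,\ell)} \le -x\}$ justifies using the non-strict left tail here.

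The first term is at most $C_0(\ell,m)e^{-x/2}$ by \Cref{clm:l-sets-right-tail}. The second is at most $C_1(\ell)e^{-x/2}$ by \Cref{clm:l-sets-left-tail}. Both constants depend only on $\ell$ and $m$, which are fixed, and not on $n$.

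Setting $C \triangleq C_0(\ell,m) + C_1(\ell)$ and $c \triangleq 1/2$ gives the hypothesis of \Cref{lem:exp-tail-impl-unif-conv} for all $n \ge 2$ and $x \ge 0$. The lemma then yields uniform integrability of $\{Y_{\max}^{(n,\ell)}\}_{n\geq 2}$.

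There is no real obstacle here; the substantive work was already done in the two tail claims. The only point to check is that both claims hold for all $n\ge 2$ rather than only for large $n$. For the left tail this is ensured by the small-$n$ case treated at the end of the proof of \Cref{clm:l-sets-left-tail}, which is what makes $C$ uniform in $n$.
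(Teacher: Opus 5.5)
Your proposal is correct and matches the paper's proof: split $\Pr[|Y_{\max}^{(n,\ell)}|>x]$ into the right tail and the non-strict left tail, and bound these by $C_0e^{-x/2}$ and $C_1e^{-x/2}$ using the two tail claims. Then apply \Cref{lem:exp-tail-impl-unif-conv} with $C=C_0+C_1$ and $c=1/2$. Your check that both constants are uniform over all $n\ge 2$ (including the small-$n$ case of the left tail) is exactly the point that makes the lemma applicable.
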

    \begin{proof}
        By \Cref{clm:l-sets-right-tail} and \Cref{clm:l-sets-left-tail}, we have that,
        \begin{align*}
            \Pr\left[ \left|Y_{\max}^{(n,\ell)}\right| > x \right] \leq \Pr\left[Y_{\max}^{(n, \ell)} >x\right] + \Pr\left[Y_{\max}^{(n, \ell)}\leq - x\right] \leq \left(C_0+C_1\right)e^{-\frac{x}{2}} \;.
        \end{align*}
        The claim follows according to \Cref{lem:exp-tail-impl-unif-conv}.
    \end{proof}

    Now we will prove \Cref{thm:multi-copy-no-redun} which is restated for convenience.
    \MultipleCopyNoRedund*
    
    \begin{proof}
        By \Cref{cor:multi-set-max-conv-in-dist} and \Cref{cor:multi-set-unif-int}, $\calY_{\max} = \left\{Y_{\max}^{(n, \ell)}\right\}_{n\geq 2}$ is uniformly integrable and converges in distribution to $\text{Gumbel}\left(\ln\left(\frac{m}{(\ell-1)!}\right), 1\right)$. Therefore, by \Cref{thm:conv-of-expec}, 
        \begin{align*}
            \lim_{n\to\infty}\mathbb{E}\left[Y_{\max}^{(n, \ell)}\right] 
            = \mathbb{E}\left[\text{Gumbel}\left(\ln\left(\frac{m}{(\ell-1)!}\right), 1\right)\right] 
            =\ln\left(\frac{m}{(\ell-1)!}\right) + \gamma.
        \end{align*}
        We note that $X_{\max}^{(n, \ell)} = n\ln n+ n(\ell-1)\ln\ln n + nY_{\max}^{(n, \ell)} $. Hence,
        \begin{align*}
            \mathbb{E}[X_{\max}^{(n, \ell)}] 
            &= n\ln n+ n(\ell-1)\ln\ln n + n\mathbb{E}\left[Y_{\max}^{(n, \ell)}\right] \\
            % &= n\ln\left(\frac{mn}{(\ell-1)!}\right) + n(\ell-1)\ln\ln n + \gamma n \pm o(n) \;.
            &= n \left(\ln\left(nm\right) + \left(\ell-1\right)\ln \ln n+ C_{\ell} \pm o(1) \right)
        \end{align*}
        where $C_{\ell} = \gamma - \ln((\ell-1)!)$. 
    \end{proof}

\section{Proof of \Cref{thm:mds-version}}
The proof of \Cref{thm:mds-version} is based on the same general framework as the proofs of \Cref{thm:single-copy-no-redun} and \Cref{thm:multi-copy-no-redun}, namely, analyzing normalized versions of the relevant random variables and establishing their convergence in distribution and uniform integrability. Having said that, unlike the previous theorems, there is no well known normalization related to the problem. Consequently we had to find it and prove its limiting distribution.

% Before we continue, lets recall our theorem.
% \MDSVersion*
Throughout this section, we adopt the following notation. Let $A^{(0)} = {\boldsymbol{0}}^{n\times (m+\rho)}, A^{(1)}, A^{(2)}, \ldots$ be a sequence of matrices constructed as follows: for each $t \in \mathbb{N}$, we draw $(v_1, \ldots, v_{m+\rho}) \sim \mathrm{Unif}([n]^{m+\rho})$ and set
    \[
        A^{(t)} = A^{(t-1)} + \left[\bfe_{v_1}, \ldots, \bfe_{v_{m+\rho}}\right] \;,
    \]
    where $\bfe_a$ denotes the column vector with a $1$ in the $a$-th position and $0$ in all other entries. Define $T_{n,m,\rho}$ as the random variable that corresponds to the smallest $t$ such that every row of $A^{(t)}$ contains at least $m$ nonzero entries.
    For each row $i\in[n]$ and time $t\ge 0$, let $X_i(t)$ denote the \emph{number of nonzero entries in row $i$ of $A^{(t)}$}, and by $E_i(t)$ the event that row $i$ has less than $m$ nonzero entries at time $t$, i.e., $E_i(t)=\{(X_1(t), \ldots, X_n(t))\in ([m+\rho]\cup \{0\})^n \mid X_i(t)<m\}$. In addition, denote $Z_{n,m,\rho} \triangleq\frac{\rho+1}{n}\,T_{n,m,\rho} - \ln n$. 

The proof of \Cref{thm:mds-version} follows a three-step argument. We begin by showing in \Cref{thm:mds-dist-conv} that $\{Z_{n,m,\rho}\}_{n\geq 2}$ converges in distribution to a Gumbel law with parameters $\mu = \ln \binom{m+\rho}{m-1}$ and $\beta=1$. We then establish in \Cref{clm:mds-right-tail} and in \Cref{clm:mds-left-tail} suitable tail bounds of $\{Z_{n,m,\rho}\}_{n\geq 2}$, which yield uniform integrability in \Cref{cor:mds-unif-int}. Combining these ingredients, we conclude the proof of \Cref{thm:mds-version} via \Cref{thm:conv-of-expec}.

We begin with some estimations which will be used in the proof of \Cref{thm:mds-dist-conv}.

\begin{lemma}\label{lem:intersection-asym}
    Let $m,\rho,k\in \mathbb{N}, x\in \mathbb{R}$ be fixed constants with respect to $n\in \Bbb{N}$, and set 
    \[
        \psi(x)=\binom{m+\rho}{m-1}e^{-x}.
    \]
    If $t=\frac{n}{\rho+1}(\ln n +x)$, then\footnote{Since $E_i(t) = \{(X_1(t), \ldots, X_n(t))\in ([m+\rho]\cup \{0\})^n \mid X_i(t) < m\}$, the event $E_1(t)\cap\cdots\cap E_k(t)$ is the set $\{(X_1(t), \ldots, X_n(t))\in ([m+\rho]\cup \{0\})^n \mid \forall i\in [k]: X_i(t) < m\}$.
    }
    \begin{equation}\label{eq:intersection-asym}
    \Pr\left[E_1(t)\cap\cdots\cap E_k(t) \right]
    = \frac{\psi(x)^k}{n^k} (1\pm o(1)).
    \end{equation}
    Consequently, if
    \[
    U_k(n)\,\triangleq\,\sum_{1\le i_1<\cdots<i_k\le n}
    \Pr\!\Big(\bigcap_{j=1}^k E_{i_j}(t)\Big),
    \]
    then
    \begin{equation}\label{eq:Sk-asym}
    U_k(n)= \frac{\psi(x)^k}{k!} (1\pm o(1)) \,.
    \end{equation}
\end{lemma}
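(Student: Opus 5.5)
The plan is to compute the joint law of the first $k$ rows directly, using the column structure of the process, and then sum over $k$-subsets. Fix a column $c\in[m+\rho]$. After $t$ steps, column $c$ has received $t$ independent uniform draws from $[n]$. Row $i$ has a nonzero entry in column $c$ exactly when $i$ was drawn at least once. Columns are independent of one another, so the event that \emph{all} rows $1,\dots,k$ are hit in a prescribed set of columns and missed in the others factors over columns. For a single column, and a set $R\subseteq[k]$ of rows required to be missed while the rows in $[k]\setminus R$ must be hit, inclusion--exclusion gives
\[
\Pr = \sum_{Q\subseteq [k]\setminus R}(-1)^{|Q|}\Bigl(1-\tfrac{|R|+|Q|}{n}\Bigr)^t .
\]
With $t=\frac{n}{\rho+1}(\ln n+x)$ we have $(1-\frac{j}{n})^t = e^{-jt/n}(1+O(t/n^2)) = \bigl(n^{-1}e^{-x}\bigr)^{j/(\rho+1)}(1+o(1))$. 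Hence the probability that $|R|=j$ specified rows are missed in column $c$, and the others hit, is $(ne^{x})^{-j/(\rho+1)}(1+o(1))$. The correction terms with $Q\neq\emptyset$ are smaller by a factor of order $n^{-1/(\rho+1)}$.

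Next, encode an outcome restricted to rows $1..k$ as a $k\times(m+\rho)$ zero/one pattern $P$ of hits. Its probability is $\prod_{c}(ne^{x})^{-z_c/(\rho+1)}(1+o(1))$, where $z_c$ is the number of zeros in column $c$. This equals $(ne^x)^{-Z/(\rho+1)}(1+o(1))$, where $Z$ is the total number of zeros. The event $E_1\cap\dots\cap E_k$ requires each row to have at least $\rho+1$ zeros, so $Z\ge k(\rho+1)$. Equality holds precisely when every row has exactly $\rho+1$ zeros. There are $\binom{m+\rho}{\rho+1}^k=\binom{m+\rho}{m-1}^k$ such minimal patterns, each contributing $(ne^x)^{-k}(1+o(1))$. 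Every other pattern has $Z\ge k(\rho+1)+1$ and contributes $O(n^{-k-1/(\rho+1)})$. There are only finitely many patterns (at most $2^{k(m+\rho)}$, a constant), so summing gives
\[
\Pr[E_1\cap\dots\cap E_k] = \binom{m+\rho}{m-1}^k e^{-kx}n^{-k}(1\pm o(1)) = \psi(x)^k n^{-k}(1\pm o(1)),
\]
which is \eqref{eq:intersection-asym}.

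For \eqref{eq:Sk-asym}, use exchangeability of the rows: every $k$-subset has the same probability as $\{1,\dots,k\}$. Therefore $U_k(n)=\binom{n}{k}\psi(x)^k n^{-k}(1\pm o(1))$. Since $\binom nk n^{-k}=\frac{1}{k!}(1+O(1/n))$ for fixed $k$, we get $U_k(n)=\frac{\psi(x)^k}{k!}(1\pm o(1))$.

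The main obstacle is the error control in the per-column estimate, namely checking that $(1-j/n)^t=e^{-jt/n}(1+o(1))$ uniformly over the finitely many $j\le k$. This holds because $t j^2/n^2=O(\ln n/n)\to 0$. One must also check that the inclusion--exclusion correction terms are of strictly lower order. Both rely on $k,m,\rho,x$ being constants, so every sum is finite and the $o(1)$ terms combine without trouble.
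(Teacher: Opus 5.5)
Your proof is correct and follows essentially the same route as the paper: per-column inclusion--exclusion giving $(n^{-1}e^{-x})^{|\xi|/(\rho+1)}(1\pm o(1))$, and independence across columns. From there both identify the $\binom{m+\rho}{m-1}^k$ minimal zero patterns (exactly $\rho+1$ zeros per row), bound the finitely many remaining patterns by $o(n^{-k})$, and finish with exchangeability and $\binom{n}{k}n^{-k}\to 1/k!$. Your explicit check that $(1-j/n)^t=e^{-jt/n}(1+o(1))$, because $tj^2/n^2=O(\ln n/n)$, is a detail the paper leaves implicit.
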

\begin{proof}
    Fix $k\ge 1$ and consider the event $E_1(t)\cap\cdots\cap E_k(t)$ which is the event where at time $t$, every row $i\in [k]$, contains less than $m$ nonzero entries.
    In order to calculate $\Pr\!\Big(E_1(t)\cap\cdots\cap E_k(t)\Big)$, we will estimate the probabilities of different possible patterns of $A^{(t)}$. For each $j\in [m+\rho]$, denote by $\xi_j\subseteq [k]$ the set of \textbf{rows} with zero entries at column $j$, and call $\boldsymbol{\xi}=(\xi_1,\dots, \xi_{m+\rho})$ the configuration at time $t$. A configuration $\boldsymbol{\xi}$ for which the event $E_1(t)\cap\cdots\cap E_k(t)$ occurs, it called a \emph{valid configuration}.
    
    For $t=\tfrac{n}{\rho+1}(\ln n + x)$, and each fixed $\xi\subseteq[k], j\in [m+\rho]$,
    \begin{align*}
        \Pr\left[\text{At time $t$, $\xi_j = \xi$}\right]
        =& \Pr\left[\forall i\in \xi: A^{(t)}_{i,j}=0\right]\Pr\left[\forall i\in [k]\setminus\xi: A^{(t)}_{i,j}\geq 1 \mid \forall i\in \xi: A^{(t)}_{i,j}=0\right]\\
        =& \Big(1- \frac{|\xi|}{n}\Big)^{t} \sum_{a=0}^{k-|\xi|}(-1)^a\binom{k-|\xi|}{a}\left(1-\frac{a}{n - |\xi|}\right)^t \\
        =& \sum_{a=0}^{k-|\xi|}(-1)^a\binom{k-|\xi|}{a}\left(1-\frac{|\xi| + a}{n}\right)^t\\
        =& \sum_{a=0}^{k-|\xi|}(-1)^a\binom{k-|\xi|}{a}\left(1-\frac{|\xi| + a}{n}\right)^{\frac{n}{\rho+1}(\ln n +x)}\\
        =& (n^{-1}e^{-x})^{|\xi|/(\rho+1)}(1\pm o(1))\;.
    \end{align*}
    Here, the last equality holds since $\left(1-\frac{|\xi| + a}{n}\right)^{\frac{n}{\rho+1}(\ln n +x)} = (n^{-1}e^{-x})^{(|\xi|+a)/(\rho+1)}(1\pm o(1))$, the dominant term in the sum is obtained when $a=0$, namely, $(n^{-1}e^{-x})^{|\xi|/(\rho+1)}(1\pm o(1))$, while all other terms (in absolute value) are $o((n^{-1}e^{-x})^{|\xi|/(\rho+1)})$.

    As the columns are independent, the probability that at time $t$, the obtained configuration is $\boldsymbol{\xi}=(\xi_1,\dots, \xi_{m+\rho})$ is 
    \begin{equation}\label{eq:monomial-term}
        \prod_{j=1}^{m+\rho} \big(n^{-1}e^{-x}\big)^{|\xi_j|/(\rho+1)}(1\pm o(1))
        \;=\;
        \big(n^{-1}e^{-x}\big)^{(\sum_{j=1}^{m+\rho} |\xi_j|)/(\rho+1)} (1\pm o(1))\;.
    \end{equation}
    
    Now, note that for the event $E_1(t)\cap\cdots\cap E_k(t)$ to occur, for any $i\in [k]$, we must have $\sum_{j=1}^{m+\rho}\mathds{1}_{i\in \xi_j}\,\ge\, \rho+1$. 
    Therefore, it holds that $\sum_{i=1}^{k}\sum_{j=1}^{m+\rho}\mathds{1}_{i\in \xi_j}\ge\sum_{i=1}^{k}(\rho+1)=k(\rho+1)$. However, by changing the order of summation, we get 
    \begin{align*}
        k(\rho+1) \leq \sum_{i=1}^{k}\sum_{j=1}^{m+\rho}\mathds{1}_{i\in \xi_j}
     = \sum_{j=1}^{m+\rho}\sum_{i=1}^{k}\mathds{1}_{i\in \xi_j} = \sum_{j=1}^{m+\rho}|\xi_j|.
    \end{align*} 
    If $\sum_{j=1}^{m+\rho}|\xi_j| = k (\rho + 1)$, then~\eqref{eq:monomial-term} becomes $n^{-k}e^{-kx} (1 \pm o(1))$. Also observe that $\boldsymbol{\xi}$ is valid and  $\sum_{j=1}^{m+\rho}|\xi_j| = k (\rho + 1)$ if and only if for all $i\in [k]$, it holds that $\sum_{j=1}^{m+\rho}\mathds{1}_{i\in \xi_j} = \rho+1$. Thus, there are exacly $\binom{m+\rho}{\rho +1}^k = \binom{m+\rho}{m-1}^k$ valid configurations $\boldsymbol{\xi}$ for which $\sum_{j=1}^{m+\rho}|\xi_j| = k (\rho + 1)$, and we have get that at time $t$,
    \begin{align*}
        \Pr\left[ \boldsymbol{\xi} \text{ is valid and }\sum_{j=1}^{m+\rho}|\xi_j| = k (\rho + 1) \right] = \binom{m+\rho}{m-1}^{\,k}\, n^{-k} e^{-kx}(1\pm o(1))
        = \frac{\psi(x)^k}{n^k}(1\pm o(1))\;.
    \end{align*}
    where $\psi(x) = \binom{m + \rho}{m-1} e^{-x}$.
    
    If $\sum_{j=1}^{m+\rho}|\xi_j| > k (\rho + 1)$, then \eqref{eq:monomial-term} is $o(n^{-k})$. As $k, m$, and $\rho$ are constants and the number of total configurations is $2^{k(m + \rho)}$, we get that 
    \[
    \Pr\left[ \boldsymbol{\xi} \text{ is valid and }\sum_{j=1}^{m+\rho}|\xi_j| > k (\rho + 1) \right] \leq 2^{k(m+\rho)}\cdot o(n^{-k}) = o(n^{-k})
    \;.
    \]

    Thus, since we cannot have a valid configuration $\boldsymbol{\xi}$ with $\sum_{j=1}^{m+\rho}|\xi_j| < k (\rho + 1)$, we conclude that
    \[
        \Pr\left[E_1(t)\cap\cdots\cap E_k(t)\right]
        = \frac{\psi(x)^k}{n^k}(1\pm o(1))\,,
    \]
    as claimed in Lemma~\ref{lem:intersection-asym}.
    Lastly, since $\Pr\left[\bigcap_{j=1}^k E_{i_j}(t)\right] = \Pr\left[\bigcap_{j=1}^k E_{j}(t)\right]$ is the same for any set $\{i_1, \ldots,i_k\} \subset [n]$, 
    if $U_k(n)\,\triangleq\,\sum_{1\le i_1<\cdots<i_k\le n}
    \Pr\!\Big(\bigcap_{j=1}^k E_{i_j}(t)\Big)$, then $U_k(n) = \binom{n}{k}\frac{\psi(x)^k}{n^k}(1\pm o(1)) = \frac{\psi(x)^k}{k!}(1\pm o(1))$ as stated. 
\end{proof}

The proof of \Cref{thm:mds-dist-conv} uses the asymptotic evaluation we analyzed in \Cref{lem:intersection-asym} and Bonferroni’s inequalities to prove the convergence in distribution of $\{Z_{n,m,\rho}\}_{n\geq 2}$.
\begin{theorem}\label{thm:mds-dist-conv}
    For each fixed $x\in\mathbb{R}$,
    \begin{align*}
      \lim_{n\to\infty}
      \Pr\!\left[T_{n,m,\rho}\le \frac{n}{\rho+1}\left(\ln n + x\right)\right]
      \;=\;
      \exp\!\left(-\binom{m+\rho}{m-1}e^{-x}\right).
    \end{align*}
    Equivalently, for $Z_{n,m,\rho} \triangleq \frac{\rho+1}{n}\,T_{n,m,\rho} - \ln n$, we have 
    \begin{align*}
        Z_{n,m,\rho} \;\xrightarrow{\cal D}\;
        {\rm Gumbel}\!\left(\,\ln\!\binom{m+\rho}{m-1},\,1\,\right).
    \end{align*}
\end{theorem}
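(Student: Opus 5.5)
The plan is to apply Bonferroni's inequalities (inclusion–exclusion truncated at a fixed depth) to the union of the events $E_i(t)$, using the asymptotics of \Cref{lem:intersection-asym}. Fix $x\in\mathbb{R}$ and set $t=\lfloor \frac{n}{\rho+1}(\ln n+x)\rfloor$. Rounding to an integer changes $t$ by at most one step. This only perturbs each factor $(1-\frac{|\xi|+a}{n})^t$ by a multiplicative $1\pm O(1/n)$, so \Cref{lem:intersection-asym} still applies. By definition of $T_{n,m,\rho}$,
\[
\Pr\left[T_{n,m,\rho}\le t\right] = 1-\Pr\Big[\bigcup_{i=1}^n E_i(t)\Big].
\]
Bonferroni's inequalities give, for every fixed $K\in\mathbb{N}$,
\[
\sum_{k=1}^{2K}(-1)^{k+1}U_k(n)\;\le\;\Pr\Big[\bigcup_{i=1}^n E_i(t)\Big]\;\le\;\sum_{k=1}^{2K+1}(-1)^{k+1}U_k(n),
\]
where $U_k(n)$ is exactly the quantity defined in \Cref{lem:intersection-asym}. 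The $E_i$ are exchangeable, which is what makes $U_k(n)=\binom{n}{k}\Pr[E_1\cap\cdots\cap E_k]$.

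Next, since $K$ is fixed while $n\to\infty$, equation \eqref{eq:Sk-asym} lets us pass to the limit term by term, giving $\lim_n U_k(n)=\psi(x)^k/k!$ for each $k\le 2K+1$. Hence
\[
\sum_{k=1}^{2K}(-1)^{k+1}\frac{\psi(x)^k}{k!}\le\liminf_n \Pr\Big[\bigcup E_i(t)\Big]\le\limsup_n \Pr\Big[\bigcup E_i(t)\Big]\le \sum_{k=1}^{2K+1}(-1)^{k+1}\frac{\psi(x)^k}{k!}.
\]
Letting $K\to\infty$, both bounds converge to $1-e^{-\psi(x)}$ because the exponential series converges absolutely. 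Therefore $\Pr[T_{n,m,\rho}\le t]\to e^{-\psi(x)}=\exp(-\binom{m+\rho}{m-1}e^{-x})$.

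The Gumbel reformulation follows by rewriting $\{T\le \frac{n}{\rho+1}(\ln n+x)\}=\{Z_{n,m,\rho}\le x\}$. One checks that $e^{-\binom{m+\rho}{m-1}e^{-x}}=e^{-e^{-(x-\mu)}}$ with $\mu=\ln\binom{m+\rho}{m-1}$, and invokes \Cref{def:gumbel-dist} and \Cref{def:conv-in-dist}; the limit CDF is continuous everywhere, so convergence at every $x$ suffices.

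The main obstacle is already handled by \Cref{lem:intersection-asym}: the uniform-in-configuration estimate showing that only configurations with $\sum_j|\xi_j|=k(\rho+1)$ contribute. What remains in this proof is to ensure the $1\pm o(1)$ error in \eqref{eq:Sk-asym} is valid for each fixed $k$, which it is since $k$ ranges over a finite set for fixed $K$. The order of limits (first $n\to\infty$, then $K\to\infty$) must be respected, which is why the two-sided Bonferroni sandwich is used rather than the full inclusion–exclusion sum.
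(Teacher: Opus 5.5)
Your proposal is correct and follows essentially the same route as the paper: a Bonferroni sandwich on $\Pr[\bigcup_i E_i(t)]$ at a fixed truncation depth, with each $U_k(n)\to\psi(x)^k/k!$ by \Cref{lem:intersection-asym}, taking $n\to\infty$ first and only then letting the depth tend to infinity. Your explicit handling of rounding $t$ to an integer, and the liminf/limsup phrasing in place of the paper's explicit $1/(2L^2)$ error bound, tidy up points the paper leaves implicit.
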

\begin{proof}
    Recall that
    \[
    \{T_{n,m,\rho}\le t\}
    =\Big(\bigcup_{i=1}^n E_i(t)\Big)^{\!c},
    \]
    since $T_{n,m,\rho}\le t$ means that all rows of $A^{(t)}$ have at least $m$ nonzero entries. 
    Therefore,
    \[
    \Pr\left[T_{n,m,\rho}\le t\right]
    = 1 - \Pr\!\left[\bigcup_{i=1}^n E_i(t)\right].
    \] 
    By Bonferroni’s inequalities, for every $\frac{n}{2}\ge L\ge 1$,
    \[
    \sum_{k=1}^{2L}(-1)^{k+1} U_k(n)
    \le
    \Pr\!\left[\bigcup_{i=1}^n E_i(t)\right]
    \le
    \sum_{k=1}^{2L-1}(-1)^{k+1} U_k(n) \;,
    \]
    where recall that $U_k(n)$ is $\sum_{1\le i_1<\cdots<i_k\le n}
    \Pr\!\left[\bigcap_{j=1}^k E_{i_j}(t)\right]$. 
    For any fixed $L$, there exists $n_0(L)$ such that for all $n \geq n_0(L)$, we have
    % Fix $L$.  
    % By the asymptotic formula \eqref{eq:Sk-asym}, for any  sufficiently large $n$,
    \[
    \Bigg|
    U_k(n)-\frac{\psi(x)^k}{k!}
    \Bigg|
    \le \frac{1}{2L^2}
    \qquad\text{for all }1\le k\le 2L.
    \]
    Hence,
    \[
    \sum_{k=1}^{2L}(-1)^{k+1}\frac{\psi(x)^k}{k!}-\frac{1}{L}
    \;\le\;
    \Pr\!\left[\bigcup_{i=1}^n E_i(t)\right]
    \;\le\;
    \sum_{k=1}^{2L-1}(-1)^{k+1}\frac{\psi(x)^k}{k!}+\frac{1}{L}\;.
    \]
    As $L\to\infty$, both bounds converge to the alternating series 
    \[
    \sum_{k=1}^\infty (-1)^{k+1}\frac{\psi(x)^k}{k!}
    = 1 - e^{-\psi(x)}.
    \]
    Therefore, %\rc{We should try to be more precise here with the asymptotics. I am not sure that it is precise to write here the $o(1)$ since it is w.r.t $n$ and $L$ is fixed with respect to $n$. 
    % Maybe we can just skip this line and go directly to what finishes the proof.
    % What do you think?}\al{hmm yeah I see what you're saying. I think we can just skip it. The argument I have in mind is as follows: 
    % Let $\epsilon>0$. 
    % There exist $L_0(\epsilon)$ s.t. 
    % \[\left| \sum_{k=1}^{2L_0}(-1)^{k+1}\frac{\psi(x)^k}{k!}-\frac{1}{L_0} - \left(1 - e^{-\psi(x)}\right)\right| \leq \epsilon  \qquad 
    % \text{and} 
    % \qquad \left| \sum_{k=1}^{2L_0-1}(-1)^{k+1}\frac{\psi(x)^k}{k!}+\frac{1}{L_0} - \left(1 - e^{-\psi(x)}\right)\right| \leq \epsilon ,\]
    % (and these are independent of $n$ so we're good) hence for any $n\geq n_0(L_0)$ (so it actually only depends on $\epsilon$) it holds that 
    % \[1 - e^{-\psi(x)} - \epsilon \leq  \Pr\!\left[\bigcup_{i=1}^n E_i(t)\right] \leq 1 - e^{-\psi(x)} + \epsilon, \] 
    % i.e. $\left| \Pr\!\left[\bigcup_{i=1}^n E_i(t)\right] - \left(1 - e^{-\psi(x)}\right)\right| \leq \epsilon $ or $\lim_{n\to\infty}\Pr\!\left[\bigcup_{i=1}^n E_i(t)\right]
    % =
    % 1 - e^{-\psi(x)}$. Does this make sense? This is also why $k$ is fixed in the previous lemma - as fixed $\epsilon$ fixes $L_0$ which fixes all values of required $k$s - hope that is clear.} \rc{I agree. To keep things simpler, we can remove the next line.}
    % \[
    % \Pr\!\left[\bigcup_{i=1}^n E_i(t)\right]
    % =
    % (1 - e^{-\psi(x)})(1\pm o(1)),
    % \]
    % and hence
    \[
    \lim_{n\to \infty}\Pr[T_{n,m,\rho}\le t]
    = 1 - \lim_{n\to \infty}\Pr\!\left[\bigcup_{i=1}^n E_i(t)\right]
    = e^{-\psi(x)},
    \]
    which completes the proof.
\end{proof}

We now head to the proof that the set $\{Z_{n,m,\rho}\}_{n\geq 2}$ is uniformly integrable.

\begin{claim}\label{clm:mds-right-tail} For any $x\geq 0$ and any $n\in \mathbb{N}$:
    \begin{align*}
        \Pr[Z_{n,m,\rho} > x] \leq 2^{m+\rho}e^{-x}
    \end{align*}
\end{claim}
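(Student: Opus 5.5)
We will use a union bound over rows combined with an exact per-row estimate, as in the right-tail arguments of \Cref{clm:y-max-bound} and \Cref{clm:l-sets-right-tail}. Set $t \triangleq \frac{n}{\rho+1}(\ln n + x)$, so that
\[
\Pr[Z_{n,m,\rho} > x] = \Pr[T_{n,m,\rho} > t] = \Pr\Big[\bigcup_{i=1}^n E_i(\lfloor t\rfloor)\Big] \le n \Pr[E_1(\lfloor t\rfloor)] .
\]
Since the process only gains nonzero entries over time, we may replace $t$ by $\lfloor t\rfloor$ without changing the event. The rest of the proof bounds $\Pr[E_1(\cdot)]$ for a single row.

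Row $1$ has fewer than $m$ nonzero entries exactly when at least $\rho+1$ of its $m+\rho$ entries are still zero. The columns are independent, and a fixed entry $A^{(t)}_{1,j}$ is zero with probability $(1-1/n)^{t}$. Hence, by a union bound over the subsets $J\subseteq[m+\rho]$ of size $\rho+1$,
\[
\Pr[E_1(t)] \le \binom{m+\rho}{\rho+1}\left(1-\frac1n\right)^{t(\rho+1)} \le 2^{m+\rho}\, e^{-t(\rho+1)/n}.
\]
Here we used $(1-1/n)\le e^{-1/n}$. Substituting $t(\rho+1)/n=\ln n + x$ turns the right-hand side into $2^{m+\rho} n^{-1}e^{-x}$. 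Multiplying by $n$ then gives the claim.

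The only delicate point is the floor. If $t$ is not an integer, the event $T>t$ equals $T>\lfloor t\rfloor$, and the exponent becomes $\lfloor t\rfloor(\rho+1)/n$, which may be smaller than $\ln n+x$ by up to $(\rho+1)/n$. Two fixes are available:
\begin{itemize}
\item observe that the claim is only needed for the tail integral, so the constant may absorb a factor $e^{(\rho+1)/n}\le e^{\rho+1}$;
\item note that $\binom{m+\rho}{\rho+1}$ is at most $2^{m+\rho}/2$ whenever $m+\rho\ge 2$, which leaves slack to absorb the rounding.
\end{itemize}
The main plan is to state the bound at times $t$ where $t$ is an integer and absorb the rounding into the constant. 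Everything else is routine.
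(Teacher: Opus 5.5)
Your main line is the paper's proof: a union bound over the $n$ rows, independence of the columns, and $(1-1/n)^t\le e^{-t/n}$. The only difference is the per-row estimate. The paper expands $\Pr[X_i(t)<m]$ as a $\mathrm{Bin}(m+\rho,p_t)$ tail, drops the factors $p_t^k$, and bounds $\sum_{k<m}\binom{m+\rho}{k}$ by $2^{m+\rho}$. Your union bound over $(\rho+1)$-sets of zero entries instead gives the sharper constant $\binom{m+\rho}{\rho+1}=\binom{m+\rho}{m-1}$, which is the constant that appears in the Gumbel limit.

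Your worry about the floor is legitimate. The paper's proof substitutes the non-integer $t=\frac{n}{\rho+1}(\ln n+x)$ into $(1-1/n)^t$ without comment. In fact the statement as written is false for $n=1$: there $T_{1,m,\rho}=1$ and $Z_{1,m,\rho}=\rho+1$ deterministically. So for $m=1$ and any $x\in((\rho+1)\ln 2,\rho+1)$ we get $\Pr[Z>x]=1>2^{1+\rho}e^{-x}$.

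You should therefore commit to your first fix. At $s=\lfloor t\rfloor$ your argument gives $\Pr[Z_{n,m,\rho}>x]\le\binom{m+\rho}{\rho+1}e^{(\rho+1)/n}e^{-x}\le 2^{m+\rho}e^{\rho+1}e^{-x}$ for every $n\ge1$. This is a correct exponential tail bound, and it is all that \Cref{cor:mds-unif-int} needs via \Cref{lem:exp-tail-impl-unif-conv}. Just say explicitly that this modified constant is what you prove.

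Your second fix does not work as argued. The guaranteed factor-$2$ slack from $\binom{m+\rho}{\rho+1}\le 2^{m+\rho-1}$ covers the rounding loss $e^{(\rho+1)/n}$ only when $n\ge(\rho+1)/\ln 2$. For example, with $n=2$, $m=2$, $\rho=1$ your bound becomes $3e\cdot e^{-x}\approx 8.15\,e^{-x}$, which exceeds $8e^{-x}$. Keeping the exact constant $2^{m+\rho}$ for small $n\ge 2$ would need a finer argument than either yours or the paper's.
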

\begin{proof}
    Fix $x\in\mathbb{R}$ and set $u=\ln n+x$, so that $\Pr[Z_{n,m,\rho} > x] = \Pr[T_{n,m,\rho} > \frac{n}{\rho+1}u]$. 
    Recall that we define the random variable $X_i(t)$ to represent the number of columns that were collected in the $i$-th row by time $t$. We have that $X_i(t)\sim \textup{Bin}(m+\rho,p_t)$ where $p_t=1-\left(1-\frac{1}{n}\right)^t$ is the probability that a specific column draws a specific row by time $t$.
    %Recall that for any $t\ge 0, i\in [n]$, $X_i(t)\sim \textup{Bin}(m+\rho,p_t)$ where $p_t=1-\left(1-\frac{1}{n}\right)^t$ is the probability that a column draws a specific row by time $t$. Hence, 
    \begin{align*}
        \Pr[X_i(t)<m] 
        = \sum_{k=0}^{m-1}\binom{m+\rho}{k}\,p_t^{\,k}(1-p_t)^{m+\rho-k}
        \le (1-p_t)^{\rho+1}\sum_{k=0}^{m-1}\binom{m+\rho}{k} 
        \le 2^{m+\rho}(1-p_t)^{\rho+1}.
    \end{align*}
    With $t=\tfrac{n}{\rho+1}\,u$ this gives
    \begin{align*}
        1-p_t
        = \left(1-\frac{1}{n}\right)^{t}
    \;\le\;
    e^{-t/n}
    = e^{-u/(\rho+1)}.
    \end{align*}
    Therefore,
    \begin{align*}
        \Pr\left[E_i(t)\right] = 
        \Pr\left[X_i(t)<m\right]
        \le 2^{m+\rho}(1-p_t)^{\rho+1}
        \le 2^{m+\rho} e^{-u}
        = 2^{m+\rho}n^{-1}e^{-x}.
    \end{align*}
    Applying the union bound over all $n$ rows yields
    \begin{align*}
        \Pr[Z_{n,m,\rho} > x] = \Pr\left[T_{n,m,\rho} > \frac{n}{\rho+1}u\right] \leq 2^{m+\rho}e^{-x}\;.
    \end{align*}
\end{proof}

For the proof of the left tail, we recall some properties of NA random variables,
\begin{proposition}[Property P$_6$, \cite{joag1983negative}]\label{prop:na_p6}
    Non-decreasing functions defined on disjoint subsets of a set of NA random variables are NA.
\end{proposition}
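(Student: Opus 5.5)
The plan is to verify \Cref{def:neg-assoc} directly for the transformed variables, by rewriting each test function on them as a test function on the original variables. Let $X_1,\ldots,X_n$ be NA. Let $A_1,\ldots,A_k\subseteq[n]$ be pairwise disjoint index sets. For each $r\in[k]$ let $h_r:\mathbb{R}^{|A_r|}\to\mathbb{R}$ be non-decreasing in every coordinate, and set $Y_r\triangleq h_r(X_i, i\in A_r)$. We must show that $Y_1,\ldots,Y_k$ are NA.

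Fix disjoint $I,J\subseteq[k]$ and functions $f:\mathbb{R}^{|I|}\to\mathbb{R}$ and $g:\mathbb{R}^{|J|}\to\mathbb{R}$ that are both non-decreasing or both non-increasing. Define $A_I\triangleq\bigcup_{r\in I}A_r$ and $A_J\triangleq\bigcup_{r\in J}A_r$. These are disjoint subsets of $[n]$, because the $A_r$ are pairwise disjoint and $I\cap J=\emptyset$. Now set
\[
\tilde f(X_i, i\in A_I)\triangleq f\bigl(h_r(X_i, i\in A_r), r\in I\bigr), \qquad \tilde g(X_i, i\in A_J)\triangleq g\bigl(h_r(X_i, i\in A_r), r\in J\bigr).
\]
With these definitions, $f(Y_r, r\in I)=\tilde f(X_i, i\in A_I)$ and $g(Y_r, r\in J)=\tilde g(X_i, i\in A_J)$.

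The key step is to show that $\tilde f$ and $\tilde g$ have the same monotonicity type as $f$ and $g$, in the coordinatewise partial order. Suppose $x\le x'$ coordinatewise on $A_I$. Since each $h_r$ depends only on the block $A_r$ and is non-decreasing, we get $h_r(x_{A_r})\le h_r(x'_{A_r})$ for every $r\in I$. So the vector of inner values increases coordinatewise. Applying $f$ then gives $\tilde f(x)\le\tilde f(x')$ if $f$ is non-decreasing, and $\tilde f(x)\ge\tilde f(x')$ if $f$ is non-increasing. The same argument applies to $\tilde g$. Hence $\tilde f$ and $\tilde g$ are both non-decreasing or both non-increasing, and they act on the disjoint index sets $A_I,A_J$.

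Applying \Cref{def:neg-assoc} to $X_1,\ldots,X_n$ with $A_I$, $A_J$, $\tilde f$, $\tilde g$ then gives
\[
\mathbb{E}\bigl[f(Y_r,r\in I)\,g(Y_r,r\in J)\bigr]=\mathbb{E}\bigl[\tilde f\,\tilde g\bigr]\le\mathbb{E}[\tilde f]\,\mathbb{E}[\tilde g]=\mathbb{E}\bigl[f(Y_r,r\in I)\bigr]\,\mathbb{E}\bigl[g(Y_r,r\in J)\bigr],
\]
which is the required inequality.

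There is no real obstacle here. The only points that need care are two bookkeeping issues. First, monotonicity is with respect to the coordinatewise order, and composing with blockwise non-decreasing maps preserves it; the fact that the blocks $A_r$ are disjoint is what lets each inner coordinate move independently and keeps $A_I\cap A_J=\emptyset$. Second, as in \Cref{def:neg-assoc}, we implicitly restrict to $f,g$ for which the expectations exist. This holds in our application, where all functions are indicators and hence bounded.
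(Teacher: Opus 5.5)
Your proof is correct. It is the standard direct check against \Cref{def:neg-assoc}: the compositions $\tilde f,\tilde g$ stay coordinatewise monotone of the same type and depend on the disjoint index sets $A_I,A_J$. (Disjointness of the blocks is only needed to get $A_I\cap A_J=\emptyset$, not for monotonicity.) The paper gives no proof of its own and cites Joag-Dev and Proschan, where this property is treated as an immediate consequence of the definition, as in your argument.
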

\begin{proposition}[Property P$_7$, \cite{joag1983negative}]\label{prop:na_p7}
    The union of independent sets of NA random variables is NA.
\end{proposition}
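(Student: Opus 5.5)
The plan is to prove Proposition~\ref{prop:na_p7} directly from \Cref{def:neg-assoc}, by conditioning on one of the two independent families. Let $\boldsymbol{X}=(X_1,\ldots,X_a)$ and $\boldsymbol{Y}=(Y_1,\ldots,Y_b)$ be two NA collections with $\boldsymbol{X}$ independent of $\boldsymbol{Y}$. (For more than two sets, induct, since a union of independent NA sets is again independent of any further independent set.) Fix disjoint index sets $I,J$ of the combined collection and split them as $I=I_X\cup I_Y$ and $J=J_X\cup J_Y$ according to which family each index belongs to. Then $I_X\cap J_X=\emptyset$ and $I_Y\cap J_Y=\emptyset$. Take $f,g$ both non-decreasing, assumed bounded or square-integrable so that all expectations below exist. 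The non-increasing case follows by applying the result to $-f,-g$, since $(-f)(-g)=fg$ and the product of expectations is likewise unchanged.

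The first step conditions on $\boldsymbol{Y}$. For a fixed value $\boldsymbol{y}$, set $f_{\boldsymbol{y}}(\boldsymbol{x}_{I_X})=f(\boldsymbol{x}_{I_X},\boldsymbol{y}_{I_Y})$ and $g_{\boldsymbol{y}}(\boldsymbol{x}_{J_X})=g(\boldsymbol{x}_{J_X},\boldsymbol{y}_{J_Y})$. These are non-decreasing functions of $\boldsymbol{X}$ on disjoint index sets. Because $\boldsymbol{X}$ is independent of $\boldsymbol{Y}$, the conditional law of $\boldsymbol{X}$ given $\boldsymbol{Y}=\boldsymbol{y}$ is its unconditional law, so NA of $\boldsymbol{X}$ gives
\[
\mathbb{E}\left[fg\mid \boldsymbol{Y}\right]\le F(\boldsymbol{Y}_{I_Y})\,G(\boldsymbol{Y}_{J_Y}),
\]
where $F(\boldsymbol{y}_{I_Y})\triangleq\mathbb{E}_{\boldsymbol{X}}\left[f(\boldsymbol{X}_{I_X},\boldsymbol{y}_{I_Y})\right]$ and $G(\boldsymbol{y}_{J_Y})\triangleq\mathbb{E}_{\boldsymbol{X}}\left[g(\boldsymbol{X}_{J_X},\boldsymbol{y}_{J_Y})\right]$. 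If $I_X$ or $J_X$ is empty, the corresponding function does not depend on $\boldsymbol{X}$ and this inequality holds with equality.

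The second step uses NA of $\boldsymbol{Y}$. The function $F$ is non-decreasing in $\boldsymbol{y}_{I_Y}$, because it is an average of functions each non-decreasing in $\boldsymbol{y}_{I_Y}$, and likewise $G$ is non-decreasing in $\boldsymbol{y}_{J_Y}$. These index sets are disjoint, so NA of $\boldsymbol{Y}$ gives $\mathbb{E}[F G]\le \mathbb{E}[F]\,\mathbb{E}[G]$. Taking expectations in the conditional inequality and using the tower property together with $\mathbb{E}[F(\boldsymbol{Y}_{I_Y})]=\mathbb{E}[f]$ (by independence and Fubini), we obtain
\[
\mathbb{E}[fg]\le \mathbb{E}[FG]\le \mathbb{E}[F]\,\mathbb{E}[G]=\mathbb{E}[f]\,\mathbb{E}[g],
\]
which is the NA condition for the combined collection.

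I expect the only delicate point to be the measure-theoretic justification. One must check that the conditional expectation really is the $\boldsymbol{X}$-integral with $\boldsymbol{y}$ frozen (a Fubini argument that relies on independence), and that $F,G$ are measurable and integrable so that the NA definition applies to them. Restricting first to bounded $f,g$ and then passing to the general integrable case by truncation and monotone convergence should handle this.
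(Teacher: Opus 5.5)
Your proof is correct. The paper gives no argument of its own and simply cites Joag-Dev and Proschan, and your two-step conditioning proof (NA of $\boldsymbol{X}$ applied conditionally on $\boldsymbol{Y}$ via independence, then NA of $\boldsymbol{Y}$ applied to the averaged non-decreasing functions $F,G$ on disjoint index sets) is the standard proof of that cited property. One small fix to your closing remark: truncating via $f_K=(f\wedge K)\vee(-K)$ does preserve monotonicity, but the limit should be taken by dominated convergence using $|f_Kg_K|\le|fg|$, not monotone convergence, and in the paper's use (finitely-valued counts $A^{(t)}_{i,j}$ and indicator functions) no limit passage is needed at all.
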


\begin{claim}\label{clm:mds-left-tail} For any $x\geq 0$ and $2\leq n\in \mathbb{N}$:
    \begin{align*}
        \Pr[Z_{n,m,\rho} \leq -x] \leq \frac{1}{C_2(m,\rho)}e^{-x}
    \end{align*}
    where $C_2(m,\rho) = \frac{1}{2}\binom{m+\rho}{m-1}(1-e^{-\frac{m}{m+\rho}})^{m-1}$. 
\end{claim}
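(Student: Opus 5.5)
We follow the second-moment template of \Cref{clm:y-max-bound} and \Cref{clm:l-sets-left-tail}. Fix $x\ge 0$ and set $t\triangleq \frac{n}{\rho+1}(\ln n - x)$, so that $\Pr[Z_{n,m,\rho}\le -x]=\Pr[T_{n,m,\rho}\le t]=\Pr[W=0]$, where $W\triangleq\sum_{i=1}^n \mathds{1}_{E_i(t)}$ counts the rows with fewer than $m$ nonzero entries at time $t$. If $t<1$ the event is impossible (since $m\ge1$), so we may assume $\ln n - x\ge (\rho+1)/n$. Chebyshev gives $\Pr[W=0]\le \Var(W)/\mathbb{E}[W]^2$. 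It therefore suffices to prove two things: $\Var(W)\le \mathbb{E}[W]$, and $\mathbb{E}[W]\ge C_2(m,\rho)e^{x}$.

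\textbf{Variance via negative association.} For each column $j$, the entries $(A^{(t)}_{i,j})_{i\in[n]}$ are the bin loads of $t$ balls thrown into $n$ bins, so they are NA by \Cref{thm:neg-assoc}. The columns are independent, so by \Cref{prop:na_p7} the whole collection $\{A^{(t)}_{i,j}\}_{i,j}$ is NA. The map $X_i(t)=\sum_j \mathds{1}_{A^{(t)}_{i,j}\ge1}$ is a non-decreasing function of the disjoint block $\{A^{(t)}_{i,j}\}_{j}$, so by \Cref{prop:na_p6} the variables $X_1(t),\dots,X_n(t)$ are NA. The indicators $\mathds{1}_{X_i(t)<m}$ are non-increasing in $X_i(t)$, hence pairwise non-positively correlated. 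Together with $\Var(I)\le\mathbb{E}[I]$ for an indicator $I$, this yields $\Var(W)\le\mathbb{E}[W]$.

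\textbf{Lower bound on the mean.} This is the main computation. We have $X_i(t)\sim\mathrm{Bin}(m+\rho,p_t)$ with $p_t=1-(1-1/n)^t$. Keep only the $k=m-1$ term:
\[
\mathbb{E}[W]\ \ge\ n\binom{m+\rho}{m-1}p_t^{\,m-1}(1-p_t)^{\rho+1}.
\]
For the factor $(1-p_t)^{\rho+1}=(1-1/n)^{(\rho+1)t}$, use $(1-1/n)^n\ge e^{-n/(n-1)}$ as in \Cref{clm:y-max-bound}. This gives $(1-p_t)^{\rho+1}\ge e^{-\frac{n}{n-1}(\ln n-x)}\ge n^{-1}\cdot n^{-1/(n-1)}e^{x}\ge \tfrac{1}{2n}e^{x}$. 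The constant $\tfrac12$ in $C_2$ comes from this step, using $x\ge0$ and $n^{-1/(n-1)}\ge 1/2$.

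\textbf{The obstacle: bounding $p_t$ below uniformly in $x$.} We need $p_t^{m-1}\ge(1-e^{-m/(m+\rho)})^{m-1}$, i.e. $t/n\ge m/(m+\rho)$, roughly. This fails when $x$ is close to $\ln n$. The plan is to handle it by a case split on $t$.

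First, if $t<\lceil m/(m+\rho)\cdot n\rceil$, then since all $m+\rho$ columns together have placed only $(m+\rho)t<mn$ ones, some row has fewer than $m$ nonzero entries. Hence $T_{n,m,\rho}>t$ and the probability is $0$. This pigeonhole observation is exactly what produces the exponent $m/(m+\rho)$ in $C_2$.

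Otherwise, $t\ge mn/(m+\rho)$, so $p_t\ge 1-e^{-t/n}\ge 1-e^{-m/(m+\rho)}$. Combining the three factors gives
\[
\mathbb{E}[W]\ge \tfrac12\binom{m+\rho}{m-1}\bigl(1-e^{-m/(m+\rho)}\bigr)^{m-1}e^{x}=C_2 e^{x},
\]
and therefore $\Pr[Z_{n,m,\rho}\le -x]\le \Pr[W=0]\le 1/\mathbb{E}[W]\le e^{-x}/C_2$. The delicate point is the uniform treatment of $x$ near $\ln n$, which the pigeonhole case resolves; rounding of $t$ to an integer is absorbed by working with $\lfloor t\rfloor$ and monotonicity.
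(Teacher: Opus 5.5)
Your proposal is correct and follows the paper's proof essentially step for step: the same pigeonhole cutoff $t<\frac{nm}{m+\rho}$ giving probability zero, the same negative-association chain (balls-into-bins per column, then P$_7$ across independent columns, then P$_6$) to get $\Var(W)\le\mathbb{E}[W]$, and the same lower bound on $\mathbb{E}[W]$ from the single $k=m-1$ binomial term with $(1-p_t)^{\rho+1}\ge\frac{1}{2n}e^{x}$ and $p_t\ge 1-e^{-m/(m+\rho)}$. Your use of $\lceil mn/(m+\rho)\rceil$ and $\lfloor t\rfloor$ to handle non-integer $t$ is a small bit of extra care that the paper omits, but it does not change the argument.
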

\begin{proof}
    Fix $x,n$, and set $t=\frac{n}{\rho+1}(\ln n -x)$. If $x > \ln n - \frac{(\rho+1)m}{m+\rho}$ then $ t=\frac{n}{\rho+1}(\ln n -x) < \frac{nm}{m+\rho}$, which is the minimal number of rounds in order to have at least $m$ nonzero entries in each row, and in particular $\Pr[Z_{n,m,\rho} \leq -x] = \Pr[T_{n,m,\rho} \leq t] = 0 \leq Ce^{-x}$ for any $C>0$.
    Assume that $0\leq x\leq \ln n - \frac{(\rho+1)m}{m+\rho}$ and let $W^{(t)}$ be the number of rows with less than $m$ nonzero entries in $A^{(t)}$: $W^{(t)}=\sum_{i=1}^n I_i^{(t)}$ where $I_i^{(t)}$ is the indicator that row $i$ of $A^{(t)}$ has less than $m$ nonzero entries. We have that 
    \begin{align*}
         \Pr[Z_{n,m,\rho} \leq -x] 
         = \Pr[T_{n,m,\rho} \leq t] 
         = \Pr[W^{(t)}=0].
     \end{align*}

     For every $i_1\ne i_2$, $\Cov(I_{i_1}^{(t)},I_{i_2}^{(t)}) \leq 0 $:
     Similarly to the proof of \Cref{clm:l-sets-left-tail}, for any fixed $j$, by \Cref{thm:neg-assoc} the set of entries $\{A_{i,j}^{(t)}\}_{i\in [n]}$ is NA. For any $j_1\ne j_2$ the sets are independent, hence, by \Cref{prop:na_p7} the entire set $\{A_{i,j}^{(t)}\}_{i\in [n], j\in [m+\rho]}$ is NA. For any $i\in [n], j\in [m+\rho]$ let $I_{i,j}^{(t)}$ be the indicator that $A_{i,j}^{(t)}$ is nonzero. Since these are non-decreasing functions defined on disjoint subsets of a set of NA random variables, the set $\{I_{i,j}\}_{i\in [n], j\in [m+\rho]}$ is NA. Note that $X_i(t) = \sum_{j=1}^{m+\rho} I_{i,j}^{(t)}$, therefore by \Cref{prop:na_p6} $\left\{X_i(t)\right\}_{i\in [n]}$ are NA. 
     Finally, since the indicator function $f(X)=\mathds{1}_{X<m}$ is non-increasing, and $I_i^{(t)} = \mathds{1}_{X_i(t)<m}$, by the definition of NA (\Cref{def:neg-assoc}), we have $\Cov(I_{i_1}^{(t)},I_{i_2}^{(t)}) \leq 0$ and thus $0< \Var(W^{(t)}) \leq \mathbb{E}[W^{(t)}] $. Moreover, 
    \begin{align*}
        \mathbb{E}[W^{(t)}] = \sum_{j=1}^n \Bbb{E}\left[I_j^{(t)}\right]&= n\cdot \Pr\left[ X_i(t) < m\right]\\ &= n\cdot\sum_{k=0}^{m-1}\binom{m+\rho}{k}p_t^k\left(1-p_t\right)^{m+\rho-k} \\
        &\geq n\binom{m+\rho}{m-1}p_t^{m-1}\left(1-p_t\right)^{m+\rho-(m-1)} \\
        &= n\binom{m+\rho}{m-1}p_t^{m-1}\left(1-p_t\right)^{\rho+1} \;.
    \end{align*}
    In addition, 
    \begin{align*}
        \left(1-p_t\right)^{\rho+1} = \left(1-\frac{1}{n}\right)^{(\rho+1)t} \geq e^{-\frac{(\rho+1)t}{n-1}} = e^{-\frac{n}{n-1}(\ln n - x)} \geq \frac{1}{2n}e^x,
    \end{align*}
    and since we assume that $0\leq x\leq \ln n - \frac{(\rho+1)m}{m+\rho}$,
    \begin{align*}
        p_t = 1- \left(1-\frac{1}{n}\right)^t \geq 1-e^{-\frac{t}{n}} 
        = 1-e^{-\frac{1}{\rho+1}(\ln n -x)} 
        \geq 1-e^{-\frac{m}{m+\rho}} > 0.
    \end{align*}
    Thus, 
    \begin{align*}
        \mathbb{E}[W^{(t)}] 
        \geq n\binom{m+\rho}{m-1}p_t^{m-1}(1-p_t)^{\rho+1}
        \geq \frac{1}{2}\binom{m+\rho}{m-1}(1-e^{-\frac{m}{m+\rho}})^{m-1}e^x.
    \end{align*}
    
    Therefore, for $C_2(m,\rho)\triangleq \frac{1}{2}\binom{m+\rho}{m-1}(1-e^{-\frac{m}{m+\rho}})^{m-1}$ we obtain that
     \begin{align*}
         \Pr[Z_{n,m,\rho} \leq -x] 
         &= \Pr[W^{(t)}=0] \\
         &\leq \Pr\left[|W^{(t)}- \mathbb{E}[W^{(t)}]|
         \geq \mathbb{E}[W^{(t)}]\right] \\
         &\leq \frac{\Var(W^{(t)})}{\mathbb{E}[W^{(t)}]^2} \\
         &\leq \frac{1}{\mathbb{E}[W^{(t)}]} \\
         &\leq \frac{1}{C_2(m,\rho)}e^{-x}.
     \end{align*}
     as claimed.
\end{proof}

\begin{corollary}\label{cor:mds-unif-int}
        The sequence $\left\{Z_{n,m,\rho}\right\}_{n\geq 2}$ is uniformly integrable.
    \end{corollary}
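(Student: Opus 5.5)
We will deduce the corollary from \Cref{lem:exp-tail-impl-unif-conv}, exactly as was done for \Cref{clm:y-max-unif-int} and \Cref{cor:multi-set-unif-int}. The goal is a single bound of the form $\Pr[|Z_{n,m,\rho}|>x]\le Ce^{-x}$, valid for all $x\ge 0$ and all $n\ge 2$, with $C$ depending only on $m$ and $\rho$.

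First we split the event $\{|Z_{n,m,\rho}|>x\}$ into its right and left tails and apply a union bound:
\[
\Pr\left[|Z_{n,m,\rho}|>x\right]\le \Pr\left[Z_{n,m,\rho}>x\right]+\Pr\left[Z_{n,m,\rho}\le -x\right]\;.
\]
The right tail is at most $2^{m+\rho}e^{-x}$ by \Cref{clm:mds-right-tail}, which holds for every $x\ge 0$ and every $n$. The left tail is at most $\frac{1}{C_2(m,\rho)}e^{-x}$ by \Cref{clm:mds-left-tail}, which holds for every $x\ge 0$ and $n\ge 2$. Here $C_2(m,\rho)>0$ because $1-e^{-m/(m+\rho)}>0$ and the binomial coefficient is positive.

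Adding the two bounds gives
\[
\Pr\left[|Z_{n,m,\rho}|>x\right]\le C e^{-x}\qquad\text{with}\qquad C\triangleq 2^{m+\rho}+\frac{1}{C_2(m,\rho)}\;.
\]
The constant $C$ is independent of $n$, so the hypothesis of \Cref{lem:exp-tail-impl-unif-conv} holds with this $C$ and $c=1$. That lemma then yields uniform integrability of $\{Z_{n,m,\rho}\}_{n\ge 2}$.

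There is no real obstacle at this stage; the substantive work is already contained in the two tail claims. The only points to check are that both claims hold uniformly in $n\ge 2$ and that neither constant depends on $n$, and both are stated that way above.
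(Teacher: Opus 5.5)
Your proposal is correct and matches the paper's proof: you split $\{|Z_{n,m,\rho}|>x\}$ into the two tails, bound them by \Cref{clm:mds-right-tail} and \Cref{clm:mds-left-tail} to get the $n$-independent constant $2^{m+\rho}+1/C_2(m,\rho)$, and then apply \Cref{lem:exp-tail-impl-unif-conv} with $c=1$. Your remark that $C_2(m,\rho)>0$ is a small check the paper leaves implicit.
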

\begin{proof}
        By \Cref{clm:mds-right-tail} and \Cref{clm:mds-left-tail}, for every $x\in \mathbb{R}, n\geq 2$ we have that,
        \begin{align*}
            \Pr\left[ \left|Z_{n,m,\rho}\right| > x \right] \leq \Pr\left[Z_{n,m,\rho} >x\right] + \Pr\left[Z_{n,m,\rho}\leq- x\right] \leq \left(2^{m+\rho} + \frac{1}{C_2(m,\rho)}\right)e^{-x} \;.
        \end{align*}
        The corollary follows according to \Cref{lem:exp-tail-impl-unif-conv}.
    \end{proof}

We can now complete the proof of \Cref{thm:mds-version}, restated below for clarity.
\MDSVersion*
\begin{proof}
    According to \Cref{cor:mds-unif-int}, and \Cref{thm:mds-dist-conv}, the sequence $\left\{Z_{n,m,\rho}\right\}_{n\geq 2}$ is uniformly integrable and converges in distribution to $\Gumbel(\ln \binom{m+\rho}{m-1},1)$. Therefore, by \Cref{thm:conv-of-expec}, we have that
    \begin{align*}
        \lim_{n\to\infty}\mathbb{E}\left[Z_{n,m,\rho}\right] = \ln \binom{m+\rho}{m-1} + \gamma.
    \end{align*}
    Recall that $T_{n,m,\rho} = \frac{n}{\rho+1}\left(Z_{n,m,\rho}+\ln n\right)$, hence, 
    \begin{align*}
        \mathbb{E}\left[T_{n,m,\rho}\right] = \frac{n}{\rho+1} \ln n + \frac{n}{\rho+1}\ln \binom{m+\rho}{m-1} + \frac{n}{\rho+1}\gamma \pm o(n).
    \end{align*}
\end{proof}

\section{Proof of \Cref{thm:regenerating-ex-version}}
We start by recalling the statement of the theorem.
%\defBadConf*

\RegeneratingVersion*
We will first prove the following upper bound
\begin{claim}
    For every positive $x$, we have 
    \[
    \Pr \left[T \geq \frac{n}{\alpha^*}\cdot(\ln n +x)\right] \leq   \frac{\beta^*}{b} \cdot e^{-x} \left( 1 + (mb)^2 n^{-\frac{1}{\alpha^*}} \right) \;.
    \]
\end{claim}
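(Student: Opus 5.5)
The plan is a union bound over the $n/b$ blocks, combined with a first-moment estimate inside a single block. Set $t \triangleq \frac{n}{\alpha^*}(\ln n + x)$ and, for each block $a\in[n/b]$, let $S_a^{(t)}\subseteq[b]\times[m]$ be the set of observed (nonzero) positions of that block at time $t$. Since $T$ is the first time all blocks are good, $\{T > t\} = \bigcup_a \{S_a^{(t)}\in\calB\}$; the event $\{T\ge t\}$ differs only by the single time step $\lceil t\rceil-1$, which is absorbed by replacing $x$ by $x - \alpha^*/n$ at the cost of a $1+o(1)$ factor. By symmetry all blocks have the same law, so
\[
\Pr[T>t]\le \frac{n}{b}\,\Pr\bigl[S_1^{(t)}\in\calB\bigr].
\]

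To bound $\Pr[S_1^{(t)}\in\calB]$, I first note that $\calB$ is downward closed when it arises from \Cref{def:bad-block-conf}: observing fewer entries cannot help recovery. For a general $\calB$ I simply use the union bound
\[
\Pr\bigl[S_1^{(t)}\in\calB\bigr] = \sum_{B\in\calB}\Pr\bigl[S_1^{(t)}=B\bigr] \le \sum_{B\in\calB}\Pr\bigl[\text{all cells of } ([b]\times[m])\setminus B \text{ are zero}\bigr].
\]
For a fixed set $Z$ of cells with $z_j$ cells in column $j$, the columns are independent. Within column $j$, the event that a given $z_j$ rows were never drawn has probability $(1-z_j/n)^t\le e^{-z_j t/n}$. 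Hence
\[
\Pr[Z\text{ all zero}]\le e^{-|Z|t/n} = \bigl(n^{-1}e^{-x}\bigr)^{|Z|/\alpha^*}.
\]

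Next I split the sum according to $|Z| = mb-|B|$. By definition of $\alpha^*$ we always have $|Z|\ge\alpha^*$. The $\beta^*$ sets with $|Z|=\alpha^*$ contribute at most $\beta^* n^{-1}e^{-x}$. Every other $B$ has $|Z|\ge\alpha^*+1$, so its term is at most $n^{-1-1/\alpha^*}e^{-x(1+1/\alpha^*)}\le n^{-1}e^{-x}\,n^{-1/\alpha^*}$ for $x>0$. Multiplying by $n/b$ then gives
\[
\Pr[T>t]\le \frac{\beta^*}{b}e^{-x}\Bigl(1 + \frac{N}{\beta^*}\,n^{-1/\alpha^*}\Bigr),
\]
where $N$ is the number of non-maximal bad sets that have to be counted.

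The main obstacle is this count $N$, so that the error term takes the stated form $(mb)^2 n^{-1/\alpha^*}$ rather than a crude $2^{mb}$. My first attempt is to avoid summing over all of $\calB$. Instead I would sum only over the maximal elements of $\calB$, using downward closure: $S\in\calB$ iff $S\subseteq B$ for some maximal $B$, which holds iff the complement of that $B$ is entirely unobserved. I would then bound the contribution of maximal sets whose complement has size $\alpha^*+1$ or more separately. If this counting still does not produce exactly $(mb)^2$, I would accept a constant $c(m,b)$ in its place, since only the order $n^{-1/\alpha^*}=o(1)$ matters for the subsequent expectation bound $\Bbb{E}[T]\le \frac{n}{\alpha^*}\ln n+\frac{\beta^*}{b\alpha^*}n+o(n)$, which follows by integrating this tail over $x\ge0$.
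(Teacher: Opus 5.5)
Your route is the paper's, step for step: a union bound over the $n/b$ blocks, the estimate $\Pr[S_1^{(t)}\in\calB]\le\sum_{B\in\calB}\Pr[\overline{B}\text{ all zero}]\le\sum_{B\in\calB}(n^{-1}e^{-x})^{(mb-|B|)/\alpha^*}$, and a split into the $\beta^*$ maximum-size bad sets and the rest. You are more careful than the paper on two points: $\{T\ge t\}$ versus the state at time $t$, and $\le$ versus $=$ for $\Pr[S_1^{(t)}=B]$. The step you flag as the main obstacle is exactly where the paper's proof has a hole. Its displayed chain rewrites $\sum_{B\in\calB}$ as a sum over sizes $s$ from $s^*$ to $mb$. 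In that range $\calB$ contains only its $\beta^*$ maximum sets, so the non-maximal bad sets (all of size $<s^*$) silently drop out. The factor $(mb)^2$ is then asserted with no counting. Read as intended, the step needs $|\calB|-\beta^*\le\beta^*(mb)^2$. This already fails for the parity code: with $b=1$ and $\calB$ all subsets of size at most $m-1$, one has $|\calB|-\beta^*=2^m-1-m>m^3$ once $m\ge 10$. There the inequality itself still holds, by a union bound over the $mn$ cells.

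For general $\calB$, no counting argument can rescue the constant, and summing only over maximal elements will not either, because the claim as stated is false. Take $b=1$, $m=20$, and let $\calB$ consist of all subsets of size at most $9$ together with one fixed set $B_0$ of size $10$. It is downward closed, with $\alpha^*=10$, $\beta^*=1$, $(mb)^2=400$, and $\binom{20}{9}-10$ maximal elements of size $9$. A row is bad iff its set $Z$ of unobserved columns has $|Z|\ge 11$ or $Z=\overline{B_0}$. At time $\lceil t\rceil-1$ (which is what $\{T\ge t\}$ refers to), its $20$ cells are independent and each is still zero with probability $q\ge(1-1/n)^t\approx(n^{-1}e^{-x})^{1/10}$. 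So a row is bad with probability $p\ge q^{10}(1-q)^{10}+\binom{20}{11}q^{11}(1-q)^{9}$. The row events are negatively correlated, being non-increasing functions of disjoint groups of the NA indicators from the proof of \Cref{clm:mds-left-tail}. Hence $\Pr[T\ge t]\ge np-\binom{n}{2}p^2\approx e^{-x}\bigl(1+\binom{20}{11}e^{-x/10}n^{-1/10}\bigr)$. At $x=50$, $n=10^{10}$ this is about $114\,e^{-50}$, while the claimed bound is $e^{-50}(1+400\cdot 10^{-1})=41\,e^{-50}$.

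What your argument (and the paper's) actually proves is the bound with $(mb)^2$ replaced by $(|\calB|-\beta^*)/\beta^*\le 2^{mb}$, up to a $1+O(1/n)$ factor from the integer-time shift. That is precisely your fallback $c(m,b)$. It is the correct form of the claim and all that \Cref{thm:regenerating-ex-version} needs, since its proof only uses $1+c\,n^{-1/\alpha^*}=1+o(1)$. One small fix: $\{T>t\}=\bigcup_a\{S_a^{(t)}\in\calB\}$ requires downward closure. For general $\calB$ only $\subseteq$ holds, which is the direction you need.
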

\begin{proof}
    Define $t \triangleq \frac{n}{\alpha^*}(\ln n + x)$. We begin by focusing only on the first block, i.e., the case $a=1$. Our goal is to upper bound the probability that, at time $t$, the collection of all nonzero entries is still a bad set, that is, lies in $\calB$. This will ensure that the block is not yet in a recovering state. More precisely, we will prove that 
    \[
    \Pr\left[ \left \lbrace (i,j) \in \left[b\right] \times [m] \mid A_{i,j}^{(t)} \neq 0 \right \rbrace \in \calB \right] \leq \beta^* \cdot n^{-1}e^{-x} \left( 1 + (mb)^2 n^{-\frac{1}{\alpha^*}} \right)\;.
    \]
    Then, the claim will follow from taking a union bound over all the $n/b$ blocks.
    
    It holds that 
    \[
    \Pr\left[ \left \lbrace (i,j) \in \left[b \right] \times [m] \mid A_{i,j}^{(t)} \neq 0 \right \rbrace \in \calB \right] = \sum_{B\in \calB} \Pr\left[ \left \lbrace (i,j) \in \left[b\right] \times [m] \mid A_{i,j}^{(t)} \neq 0 \right \rbrace  = B\right] \;.
    \]
    Fix $B\in \calB$ and denote by $\overline{B}$ its complement in the block, i.e., $B \cup \overline{B} = [b] \times [m]$ and $B\cap \overline{B} = \emptyset$. We have 
    \[
    \Pr\left[ \left \lbrace (i,j) \in \left[b \right] \times [m] \mid A_{i,j}^{(t)} \neq 0 \right \rbrace  = B\right] = \Pr\left[ \left \lbrace (i,j) \in \left[b \right] \times [m] \mid A_{i,j}^{(t)} = 0 \right \rbrace  = \overline{B}\right] \;,
    \]
    and if we denote $s_j \triangleq |\{ i\in [b] \mid (i,j) \in \overline{B} \}|$, then,
    \[
        \Pr\left[ \left \lbrace (i,j) \in \left[b \right] \times [m] \mid A_{i,j}^{(t)} = 0 \right \rbrace = \overline{B}\right] = \prod_{j=1}^m \left( 1 - \frac{s_j}{n}\right)^t \leq \prod_{j=1}^m \left( 1 - \frac{1}{n}\right)^{s_j t} = \left( 1 - \frac{1}{n} \right)^{(mb - |B|) \cdot t} \;.
    \]
    Now, observe that this upper bound depends only on the size of $B$. Thus,
    \[
    \Pr\left[ \left \lbrace (i,j) \in \left[b \right] \times [m] \mid A_{i,j}^{(t)} \neq 0 \right \rbrace \in \calB\right] 
    \leq \sum_{B\in \calB} \left( 1 - \frac{1}{n} \right)^{(mb - |B|) \cdot t} 
    \leq \sum_{B\in \calB} e^{-\frac{(mb - |B|)(\ln n + x)}{\alpha^*}} \;.
    \]
    Note that $mb - |B|\geq \alpha^*$. For all $s\in[0,mb]$, denote by $\beta_s \triangleq | \{B\in \mathcal{B} \mid |B| = s\}|$, i.e., the number of the sets in $\calB$ of size exactly $s$. Furthermore, set $s^* \triangleq mb-\alpha^*$. Then, 
    \begin{align*}
        \sum_{B\in \calB} e^{-\frac{(mb - |B|)(\ln n + x)}{\alpha^*}}&= \sum_{s=s^*}^{mb} \beta_s \cdot e^{-\frac{(mb - s)(\ln n + x)}{\alpha^*}} \\
        &\leq \beta_{s^*} \cdot n^{-1}e^{-x} + \sum_{s=s^* + 1}^{mb} \beta_s \cdot e^{-\frac{\alpha^* +1}{\alpha^*} \cdot(\ln n + x)}\\
        &\leq \beta_{s^*} \cdot n^{-1}e^{-x} \left( 1 + (mb)^2 n^{-\frac{1}{\alpha^*}} \right)\;.
    \end{align*}
    The claim follows by a union bound over all $a\in [n/b]$.
\end{proof}

With this bound, we can prove \Cref{thm:regenerating-ex-version}.
\begin{proof}[Proof of \Cref{thm:regenerating-ex-version}]
    Define $t_0 = (n \ln n)/\alpha^*$. It holds that 
    \[
    \Bbb{E}[T] = \sum_{t\geq 1} \Pr[T \geq t] = \sum_{t=1}^{t_0-1} \Pr[T \geq t] + \sum_{t=t_0}^{\infty} \Pr[T \geq t] \;.
    \]
    The first sum can be bounded trivially by $t_0$ (every term in the sum is at most $1$). We focus on the second sum. 
    \begin{align*}
        \sum_{t=t_0}^{\infty} \Pr[T \geq t] &= \sum_{k=0}^{\infty} \Pr\left[ T \geq \frac{n \ln n}{\alpha^*} + k\right] \\
        &\leq \sum_{k=0}^{\infty}\Pr\left[ T \geq \frac{n}{\alpha^*} \left(\ln n + k \cdot \frac{\alpha^*}{n} \right)\right] \\
        &\leq \frac{\beta_{s^*}}{b} \cdot \left( 1 + (mb)^2 n^{-\frac{1}{\alpha^*}} \right) \cdot \sum_{k=0}^{\infty} e^{-k \cdot\frac{\alpha^*}{n}} \\
        &\leq \frac{\beta_{s^*}}{b} \cdot \left( 1 + (mb)^2 n^{-\frac{1}{\alpha^*}} \right) \cdot \frac{1}{1 - e^{-\frac{\alpha^*}{n}}} \;.
    \end{align*}
    Now, by Taylor expansion, $1-e^{-x} = x - x^{2}/2 +O(x^3)$,
    we conclude that
    \[
     \Bbb{E}[T] \leq \frac{n}{\alpha^*}\ln n + \frac{\beta_{s^*}}{b\alpha^*} \cdot n \cdot(1 + m^2bn^{-\frac{1}{\alpha^*}}) (1 + o(1)) \;,
    \]
    and the result follows by recalling that $m$ and $b$ are constant with respect to $n$.
\end{proof}

\bibliographystyle{alpha}
\bibliography{bibliography}
\end{document}